\documentclass[11pt, a4paper]{article}

\usepackage{amsmath}
\usepackage{amssymb}
\usepackage{amsfonts}
\usepackage{amsthm}
\usepackage{graphicx}
\usepackage{fullpage}
\usepackage{thm-restate}
\usepackage{color}
\usepackage{booktabs}
\usepackage{hyperref}

\usepackage{cite}

\newtheorem{theorem}{Theorem}[section]
\newtheorem{lemma}[theorem]{Lemma}
\newtheorem{proposition}[theorem]{Proposition}
\newtheorem{corollary}[theorem]{Corollary}
\newtheorem{claim}[theorem]{Claim}
\newtheorem{observation}[theorem]{Observation}

\usepackage[linesnumbered,noend,ruled,vlined]{algorithm2e}
\usepackage[noend]{algpseudocode}

\SetKwInput{KwInput}{Input}
\SetKwInput{KwOutput}{Output}
\SetKw{KwTo}{to}
\SetKwFor{RepTimes}{repeat}{times}{end}
\SetKwProg{Procedure}{Procedure}{}{}

\newcommand{\M}{\mathbf M}
\newcommand{\I}{\mathcal I}
\newcommand{\B}{\mathcal B}
\newcommand{\opt}{r}
\newcommand{\OPT}{\mathsf{OPT}}

\title{Auction-Based Algorithms for Matroid Intersection: Near-Linear Query Complexity and Constant-Pass Semi-Streaming\thanks{This work is partially supported by JSPS KAKENHI Grant Numbers 
JP24K02901,
JP26K21777,
Japan, and JST ERATO Grant Number JPMJER2301, Japan.}}
\author{Chien-Chung Huang\thanks{CNRS, DIENS, PSL, France, \texttt{Chien-Chung.Huang@ens.fr}}
\and 
Yusuke Kobayashi\thanks{Kyoto University, Japan, \texttt{yusuke@kurims.kyoto-u.ac.jp}} 
}

\date{}

\begin{document}

\maketitle
\begin{abstract}
In this paper, we develop a new auction-based framework for matroid intersection and use it to obtain improved approximation algorithms in several computational settings. Our framework is inspired by Fleiner's generalized stable matching algorithm and extends the semi-streaming auction algorithm for bipartite matching due to Assadi, Liu, and Tarjan.

Using this framework, for any $\varepsilon > 0$, we present a simple $(1-\varepsilon)$-approximation algorithm in the rank-oracle model whose query complexity matches that of the current fastest algorithm. 
Furthermore, by extending this result, we obtain the first $(1-\varepsilon)$-approximation algorithm for the weighted problem 
that requires only a near-linear number of rank-oracle queries, achieving the best known rank-oracle query complexity for the problem.

We also obtain a $(1-\varepsilon)$-approximation semi-streaming algorithm for matroid intersection in the multi-pass streaming model, where the elements of the ground set arrive sequentially. It is the first algorithm achieving this approximation ratio using a constant number of passes and nearly linear space in the ranks of the matroids. When viewed in the standard offline setting, the same algorithm yields the first deterministic $(1-\varepsilon)$-approximation algorithm for matroid intersection that requires only a near-linear number of independence-oracle queries.
\end{abstract}

\section{Introduction}

In the matroid intersection problem, two matroids $\M_1 = (E, \I_1)$ 
and $\M_2 = (E, \I_2)$ are given. The goal is to compute a common independent set $S \in \I_1 \cap \I_2 $ of maximum size. First solved by Edmonds~\cite{edmonds1970matroid,edmonds1979matroid} in the 1970s, 
owing to its expressive power in capturing a wide range of fundamental combinatorial optimization problems, the matroid intersection problem has become a cornerstone of algorithm design and combinatorial optimization.
Even when restricting attention to fast algorithms for matroid intersection, there is a large body of work on the subject~\cite{blikstad2021breaking_STOC,chakrabarty2019faster,chekuri2016fast,cunningham1986improved,lawler1975matroid,nguyen2019note,BT25,FZ95,SI95}.

For the input matroids $\M_1$ and $\M_2$, we are given access to either \emph{independence oracles} or \emph{rank oracles}: given a subset $S \subseteq E$, the former answers whether $S$ is independent, while the latter returns the rank of $S$. We note that the availability of such oracles is a standard assumption in the literature, 
since explicit representations of the families $\I_1$ and $\I_2$ of independent sets are prohibitively expensive in terms of space. 
The number of queries used by an algorithm is one of the most important measures of its efficiency.

\subsection{Previous Work}
\paragraph{Fast Approximation Algorithms.}

For the matroid intersection problem, not only exact algorithms but also faster approximation algorithms have been studied extensively. 
In particular, considerable attention has been devoted to obtaining a $(1-\varepsilon)$-approximate solution, 
for an arbitrary $\varepsilon > 0$, using as few oracle queries as possible. 

Chekuri and Quanrud~\cite{chekuri2016fast} observed that a $(1-\varepsilon)$-approximate solution can be obtained 
using $O(n\opt /\varepsilon)$ independence queries by terminating Cunningham's exact algorithm~\cite{cunningham1986improved} early. 
Here, $n$ denotes the size of the ground set and $\opt$ denotes the size of a maximum common independent set.
Chakrabarty, Lee, Sidford, Singla, and Wong~\cite{chakrabarty2019faster} presented a binary-search technique and obtained a $(1-\varepsilon)$-approximation algorithm that uses $\tilde{O}(n^{3/2}/\varepsilon^{3/2})$ independence queries, 
where $\tilde{O}(\cdot)$ omits factors polynomial in $\log n$. 
They also obtained an algorithm that uses $O(n \log n / \varepsilon)$ rank queries.
Blikstad~\cite{blikstad2021breaking} improved the algorithm in~\cite{chakrabarty2019faster} so that it requires $\tilde{O}(n \sqrt{\opt} / \varepsilon)$ independence queries. 
Quanrud~\cite{Quanrud24} presented a new sparsification technique and obtained a randomized $(1-\varepsilon)$-approximation algorithm that uses $\tilde{O}(n/\varepsilon + \opt^{3/2}/\varepsilon^3)$ independence queries.
The current best result is due to Blikstad and Tu~\cite{BT25}.
They first presented a deterministic $(1-\varepsilon)$-approximation algorithm that uses $O(n^2 / (\opt  \varepsilon^2))$ independence oracle queries, which is not near-linear in $n$. By combining their approach with a randomized subroutine from~\cite{Quanrud24}, they further improved the query complexity to 
$\tilde{O}({n} /{\varepsilon} + {\opt} / {\varepsilon^5})$. 
As a result, they obtained a randomized algorithm that uses a near-linear number of independence oracle queries.
See Table~\ref{tab:summary_approximation} for a summary of $(1-\varepsilon)$-approximation algorithms. 

\begin{table}[tb]
\centering
\caption{$(1-\varepsilon)$-approximation algorithms for matroid intersection.}
\label{tab:summary_approximation}
\renewcommand{\arraystretch}{1.2}
\begin{tabular}{lccc}
\toprule
Algorithm
& \#Queries
& Ind./Rank
& Det./Rand. \\

 \midrule

Cunningham~\cite{cunningham1986improved} (observed in~\cite{chekuri2016fast})
& $O(n \opt / \varepsilon)$
& Ind.
& Det. \\

Chakrabarty--Lee--Sidford--Singla--Wong~\cite{chakrabarty2019faster}
& $O({n \log n}/{\varepsilon})$
& Rank
& Det. \\

& $\tilde{O}({n^{3/2}}/{\varepsilon^{3/2}})$
& Ind.
& Det. \\

Blikstad~\cite{blikstad2021breaking}
& $\tilde{O}(n \sqrt{\opt} / \varepsilon)$
& Ind.
& Det. \\

Quanrud~\cite{Quanrud24}
& $\tilde{O}(n / \varepsilon + \opt^{3/2} / \varepsilon^3)$
& Ind.
& Rand. \\

Blikstad--Tu~\cite{BT25}
& $O(n^2 / (\opt  \varepsilon^2))$
& Ind.
& Det. \\

& $\tilde{O}({n} /{\varepsilon} + {\opt} / {\varepsilon^5})$
& Ind.
& Rand. \\

\textbf{Theorem~\ref{thm:mainrank}}
& \textbf{$O({n \log n} / {\varepsilon})$}
& \textbf{Rank}
& \textbf{Det.} \\

\textbf{Theorem~\ref{thm:mainstreaming}}
& \textbf{$O({n \log \opt} / {\varepsilon^{2}})$}
& \textbf{Ind.}
& \textbf{Det.} \\
\bottomrule
\end{tabular}
\end{table}

Fast approximation algorithms have also been studied for the \emph{weighted} matroid intersection problem, in which each element $e \in E$ is assigned a positive weight $w(e)$, and the goal is to find a common independent set $S \in \I_1 \cap \I_2$ that maximizes $w(S) := \sum_{e \in S} w(e)$.
For this problem, Chekuri and Quanrud~\cite{chekuri2016fast} presented a $(1-\varepsilon)$-approximation algorithm 
using $\tilde{O}(n\opt/\varepsilon^2)$ independence queries, 
while Huang, Kakimura, and Kamiyama~\cite{huang2016exact} presented one using $\tilde{O}(n\opt^{3/2}/\varepsilon)$ independence queries.
Using the sparsification technique, Quanrud~\cite{Quanrud24} presented a randomized $(1-\varepsilon)$-approximation algorithm using
$O(n \log n / \varepsilon + \opt^{3/2} / \varepsilon^4)$ independence queries. 
Very recently, Dudeja and Grilnberger~\cite{DG26} introduced a reduction technique that transforms approximation algorithms for the unweighted problem into those for the weighted problem. 
Their reduction incurs an additional multiplicative factor that is exponential in $1/\varepsilon$ and depends on the range of the weights in the query complexity.

\paragraph{Streaming Model.}

In the multi-pass streaming model, 
the data is presented sequentially and the goal is to solve the problem at hand (approximately), with the ideal of using small storage space and a small number of passes over the data. Such a model was proposed as a response to the challenges of big data; see~\cite{Muthukrishnan2005}.

The streaming model for the bipartite matching problem, a special case of matroid intersection, has attracted considerable attention in the theoretical computer science community (see e.g.~\cite{assadi2024simple}). In this model, the edges of a bipartite graph arrive one by one as a data stream, and the goal is to compute an (almost) maximum matching. Since the number of edges may be enormous, algorithms are required to use only nearly linear space in the number of vertices. Such algorithms are often called \emph{semi-streaming algorithms}~\cite{FEIGENBAUM2005207}. 
A central problem in this area is to compute a $(1-\varepsilon)$-approximate maximum matching using $(1/\varepsilon)^{O(1)}$ passes and nearly linear space in the number of vertices, for any $\varepsilon>0$~\cite{AG11,assadi2021auction,EggertKMS12}. 

In the semi-streaming setting of the matroid intersection problem, we assume that the ground set $E$ is presented sequentially. 
For the matroids, we are given access to {independence oracles}. 
In this model, we need an algorithm that uses only nearly linear space in the matroid ranks.

See Table~\ref{tab:summary_streaming} for a summary of known semi-streaming algorithms for matroid intersection with a small number of passes.
Prior to this paper, only Quanrud~\cite{Quanrud24} achieved a $(1-\varepsilon)$-approximation, using the multiplicative-weight-update method. We remark that although the algorithm in~\cite{Quanrud24} is not explicitly presented as a semi-streaming algorithm, it can be adapted to the semi-streaming setting straightforwardly: one only needs to store the dual solutions generated throughout the execution of the algorithm. However, Quanrud's algorithm requires $O({\log n}/{\varepsilon})$ passes, and thus the number of passes depends logarithmically on the input size $n$.
More recently, Terao~\cite{terao:LIPIcs.WADS.2025.50} presented a constant-pass semi-streaming algorithm that achieves only a $({2}/{3}-\varepsilon)$-approximation. 

\begin{table}[tb]
\centering
\caption{Semi-streaming algorithms for matroid intersection.}
\label{tab:summary_streaming}
\renewcommand{\arraystretch}{1.2}
\begin{tabular}{lcccc}
\toprule
Algorithm
& Passes
& Space
& Approximation
& Det./Rand. \\
\midrule
Greedy
& $1$
& $O(\opt)$
& ${1}/{2}$
& Det. \\

Terao~\cite{terao:LIPIcs.WADS.2025.50}\footnotemark
& $O({1}/{\varepsilon})$
& $O({\opt}/{\varepsilon})$
& ${2}/{3}-\varepsilon$
& Det. \\

Quanrud~\cite{Quanrud24}
& $O({\log n}/{\varepsilon})$
& $O({\opt \log^2 n}/{\varepsilon})$
& $1-\varepsilon$
& Rand. \\

\textbf{Theorem~\ref{thm:mainstreaming}}
& \textbf{$O({1}/{\varepsilon^2})$}
& \textbf{$O(\opt)$}
& \textbf{$1-\varepsilon$}
& \textbf{Det.} \\
\bottomrule
\end{tabular}
\end{table}

The weighted matroid intersection problem in the semi-streaming setting has also been studied in the literature. 
Garg, Jordan, and Svensson~\cite{Garg23} presented a one-pass semi-streaming $(1/2-\varepsilon)$-approximation algorithm that requires $O(r_{\min}^2/\varepsilon^3)$ space, where $r_{\min}$ denotes the minimum of the ranks of $\M_1$ and $\M_2$.
Quanrud~\cite{Quanrud24} presented a randomized $(1-\varepsilon)$-approximation algorithm 
that requires $O(\log n/\varepsilon)$ passes and $O(\opt \log^2 n/\varepsilon)$ space.
Dudeja and Grilnberger~\cite{DG26} applied their reduction technique to Terao's algorithm~\cite{terao:LIPIcs.WADS.2025.50} and obtained a semi-streaming $(2/3-\varepsilon)$-approximation algorithm.

A further generalization of (weighted) matroid intersection is the problem of maximizing a monotone submodular function subject to a matroid constraint~\cite{CalinescuCPV11,FilmusW14}. 
This problem has also been studied in the semi-streaming setting; see, for example,~\cite{ChakrabartiK15,ChekuriGQ15,FeldmanLNSZ26,HuangTW20}.

\footnotetext{
Although~\cite{terao:LIPIcs.WADS.2025.50} states a space requirement of $O({r_{\rm sum}}/{\varepsilon})$, where $r_{\rm sum}$ is the sum of the ranks of $\M_1$ and $\M_2$, this can be easily improved to $O({\opt}/{\varepsilon})$ by truncating each matroid to rank $2\opt$ (see Section~\ref{sec:definition}).
}

\subsection{Our Contributions}

In this paper, we present auction-based algorithms, in which $E$ is regarded as a set of items and the price of each $e \in E$ increases monotonically.  
Our first contribution is a simple algorithm that yields the following result. 

\begin{restatable}{theorem}{mainrank}
\label{thm:mainrank}
Let $\varepsilon > 0$.
For the matroid intersection problem, 
there exists a deterministic $(1-\varepsilon)$-approximation algorithm
using $O({n \log n} / {\varepsilon})$ rank queries.
\end{restatable}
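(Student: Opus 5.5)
Our plan is to design an auction algorithm in the spirit of Assadi--Liu--Tarjan, recast through Fleiner's generalized stable matching. Matroid $\M_1$ plays the role of the "bidders" and $\M_2$ the "sellers." Each element $e\in E$ carries a price $p(e)\in\{0,\varepsilon,2\varepsilon,\dots,1\}$, and prices only increase. We maintain two sets:
\begin{itemize}
\item $A\in\I_1$, a maximum-utility basis of $\M_1$ restricted to the items currently worth buying, with utility $1-p(e)$;
\item $B\in\I_2$, chosen greedily in $\M_2$ by preferring high price, i.e.\ the sellers keep the most expensive items.
\end{itemize}
Each round has two steps. First, compute $A$ as a max-weight independent set of $\M_1$ under weights $1-p(e)$, ignoring items of price $1$. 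Second, compute $S$ as the greedy (max-weight) independent set of $\M_2$ within $A$, with ties broken by price. Every $e\in A\setminus S$ was rejected by $\M_2$, and we raise its price by $\varepsilon$. Since each price rises at most $1/\varepsilon$ times, the process terminates. The output is $S\in\I_1\cap\I_2$.

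For the approximation guarantee we use a dual / LP-duality argument for matroid intersection. The prices define a weight splitting $w=w_1+w_2$ with $w_1(e)=1-p(e)$ and $w_2(e)=p(e)$. By the matroid intersection min--max formula (weight-splitting duality), every common independent set $T$ satisfies
\[
|T|\le \max_{I\in\I_1} w_1(I)+\max_{J\in\I_2} w_2(J).
\]
At termination, $A$ is $w_1$-optimal in $\M_1$. The rejected elements of $A\setminus S$ are spanned in $\M_2$ by more expensive elements of $S$, so $S$ is nearly $w_2$-optimal in $\M_2$ up to $\varepsilon$ slack per element. This gives $\OPT\le |S|+O(\varepsilon)\cdot\opt$ plus a term counting the elements of $A\setminus S$. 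To control that last term, we stop once fewer than $\varepsilon\opt$ elements are rejected in a round. A potential argument shows this happens within $O(1/\varepsilon^2)$ rounds, since the total price increase is bounded by $\opt/\varepsilon$ in aggregate once we work in a truncated matroid of rank $2\opt$.

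The query count is the crux, and the main obstacle. A naive round costs $O(n)$ queries, and running $1/\varepsilon^2$ rounds would give only $O(n/\varepsilon^2)$ queries, missing the target. To reach $O(n\log n/\varepsilon)$ we instead maintain $A$ and $S$ incrementally, in a Fleiner-style deferred-acceptance fashion. Each element is processed a bounded number of times per price level. The step "find the element of $A$ that $e$ displaces" in a greedy basis under price changes is done by binary search with rank queries, as in Chakrabarty et al.: we locate the minimal prefix, in price order, that spans $e$, at a cost of $O(\log n)$ rank queries.

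Charging works as follows. Each bid or rejection event is tied to a price increment, and there are at most $n/\varepsilon$ such increments. Each event uses $O(\log n)$ queries, for a total of $O(n\log n/\varepsilon)$.

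The delicate point is showing that these exchange operations keep $A$ max-weight and $S$ greedy after each single-element update. We would rely on the standard fact that inserting an element into a max-weight basis requires removing only the minimum-weight element in its fundamental circuit. We would also need to argue that ties at equal price levels do not cause cycling.
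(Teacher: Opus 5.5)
Your proposal has a genuine gap. The approximation argument and the query-count argument are about two different algorithms, and neither is completed.

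The duality argument addresses the round-based scheme (recompute $A$, then $S$, raise prices on $A\setminus S$, stop once fewer than $\varepsilon\opt$ elements are rejected). You yourself concede that its query cost misses the target. The $O(n\log n/\varepsilon)$ count is for an incremental deferred-acceptance scheme that is never specified: what replaces an element of $A\setminus S$ whose price rises, whether such updates cascade, and whether $|S|$ is monotone. That scheme has no rounds, so the stopping rule meant to absorb the $|A\setminus S|$ loss term does not apply, and nothing else bounds that term. The charging is also incomplete. An accepted bid that rejects nothing is tied to no price increment and must be bounded separately; in the paper it is bounded by $\opt$ because the accepted set never shrinks. Finally, the claimed $O(\opt/\varepsilon)$ bound on the aggregate price increase is not justified. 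Nothing prevents $\Theta(n)$ elements from each having their price raised, and truncation does not change this. Bounding the number of rounds needs an $O(\opt)$-size potential, as in Section~\ref{sec:semistreaming}.

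More fundamentally, you show near-$w_2$-optimality of $S$ only against elements of the current $A$, whereas weight splitting needs it against all of $E$. Take an element $e'$ rejected earlier, no longer in $A$, with $p(e')\ge 2\varepsilon$. It must still satisfy $S+e'\notin\I_2$, with every element of its fundamental circuit priced at least $p(e')-\varepsilon$, even after later exchanges have changed $S$. Otherwise the conditions of Lemma~\ref{lem:charaminind} fail for $q_2$ ($=p$ on $S$, $=p-\varepsilon$ off $S$), and $q_2(S)\ge q_2(\OPT)$ is lost. This persistence is the core of the paper's proof (Lemma~\ref{lem:propertyrank2}). It is proved by induction over single exchanges using circuit elimination: if the evicted $f$ lies in $C_2(X\mid e')$, the new circuit of $e'$ lies in $(C_2(X\mid e')\cup C_2(X\mid e))-f$. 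All of those elements cost at least $p(e')-\varepsilon$, because $f$ was cheapest in $C_2(X\mid e)$ and itself lies in $C_2(X\mid e')$. On the $\M_1$ side, the gross-substitutes property (Lemma~\ref{lem:gsproperty}) keeps the accepted set inside a $p$-minimum base when prices outside it rise. You defer both to ``the delicate point,'' but they are the proof.

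The paper also avoids your $A$/$S$ split. It keeps one set $X\in\I_1\cap\I_2$. Alice proposes a single cheapest $e\notin X$ with $X+e\in\I_1$ and $p(e)\le 1$, which costs $O(\log n)$ rank queries by Lemma~\ref{lem:binary}. Bob either accepts, or evicts the cheapest $f\in C_2(X\mid e)$ (possibly $e$ itself) and raises $p(f)$ by $\varepsilon$. Each step grows $|X|$ or raises a price, so there are $O(n/\varepsilon)$ steps. At termination, $X$ is exactly the set of price-at-most-$1$ elements of some $p$-minimum base of $\M_1$, so $X$ itself is $(1-p)$-maximum in $\M_1$. There is then no $|A\setminus S|$ loss and no stopping rule. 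The splitting $q_1=1-p$, with $q_2=p$ on $X$ and $q_2=p-\varepsilon$ off $X$, gives $|X|\ge(1-\varepsilon)\opt$ directly.
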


This is currently the fastest $(1-\varepsilon)$-approximation algorithm for matroid intersection, measured by the number of rank oracle queries, if one is allowed to use rank oracles.  
We note that an algorithm with exactly the same query complexity is also given in~\cite{chakrabarty2019faster}; see Table~\ref{tab:summary_approximation}.
Our algorithm is based on an auction mechanism and is quite simple, whereas the algorithm in~\cite{chakrabarty2019faster} relies on the exchange graph.

Our second contribution is to extend this result 
to the weighted matroid intersection problem while maintaining the same query complexity. 
This is the first $(1-\varepsilon)$-approximation algorithm for the weighted matroid intersection problem
that requires only a near-linear number of rank queries, achieving the best known rank-oracle query complexity for the problem.

\begin{restatable}{theorem}{mainrankweighted}
\label{thm:mainrankweighted}
Let $\varepsilon > 0$.
For the weighted matroid intersection problem, 
there exists a deterministic $(1-\varepsilon)$-approximation algorithm 
using $O({n \log n} / {\varepsilon})$ rank queries.
\end{restatable}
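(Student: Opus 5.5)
We prove Theorem~\ref{thm:mainrankweighted} with an auction in the style of Fleiner's generalized stable matching, now in weighted form, and a primal--dual certificate. Each element $e$ has a price $p(e)\ge 0$, initially $0$. Its ``utility'' on the $\M_1$ side is $w(e)-p(e)$ and its ``value'' on the $\M_2$ side is $p(e)$. We keep two sets. $S_1$ is a greedy (maximum-weight) basis of $\M_1$ with respect to $w-p$, and $S_2$ is a greedy basis of $\M_2$ with respect to $p$, taken among the elements offered by $S_1$. An element of $S_1$ rejected by $\M_2$ has its price raised by $\varepsilon w_{\max}$, or by a multiplicative step after rounding weights to powers of $(1+\varepsilon)$. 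The process is the weighted analogue of the rounds used for Theorem~\ref{thm:mainrank}: prices only increase, and an element leaves when $p(e)>w(e)$. The output is the common part $S_1\cap S_2$, which by construction lies in $\I_1\cap\I_2$.

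For the approximation guarantee we use weight splitting, i.e.\ Frank's weighted matroid intersection duality. The two weight functions are $w_1=w-p$ and $w_2=p$. Assume that at termination $S_1$ is $w_1$-maximum in $\M_1$ up to additive $\varepsilon$-slack per element, that $S_2$ is $w_2$-maximum in $\M_2$, and that almost every element of $S_1$ also lies in $S_2$. For any common independent set $T$ we then get
\[ w(T)=w_1(T)+w_2(T)\le w_1(S_1)+w_2(S_2)+\varepsilon\cdot(\text{slack}). \]
The remaining task is to bound the mismatch between $S_1$ and $S_2$. An element of $S_1\setminus S_2$ has either been priced out or is ``frozen'' at a price just below the rejection threshold. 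We charge the frozen elements to an $\varepsilon$ fraction of the objective with the same potential argument as in the unweighted proof, so the total loss is at most $O(\varepsilon)\,\OPT$.

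For query complexity, every price lives on a geometric grid of $O(\log n/\varepsilon)$ levels. To get this, first discard elements of weight below $\varepsilon w_{\max}/n$, which loses at most $\varepsilon\,\OPT$, and then round the weights to powers of $(1+\varepsilon)$. Each element therefore changes price $O(\log n/\varepsilon)$ times in total. With rank oracles, recomputing greedy bases is incremental: when elements change price we locate the affected positions by binary search on prefix ranks of the sorted order. This costs $O(\log n)$ queries per event, but the sum must come out to $O(n\log n/\varepsilon)$, not $O(n\log^2 n/\varepsilon^2)$.

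\textbf{Main obstacle.} Making the per-element count amortize is the step I expect to be hardest. The plan is to process whole price levels in phases. In a single phase, one left-to-right scan of the elements sorted by current utility computes the greedy basis using $O(n)$ rank queries, since a rank query on each prefix suffices. We then need to show that only $O(\log n/\varepsilon)$ phases occur overall, with prices rising level by level in lockstep, so the total is $O(n)\cdot O(\log n/\varepsilon)$. If lockstep fails, the fallback is to charge each query to a price increment and use $O(1/\varepsilon)$ levels per weight scale together with the geometric truncation.
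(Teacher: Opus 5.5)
You have genuine gaps in both halves of the argument, and the missing idea is simpler than the machinery you introduce.

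\textbf{Query complexity.} Your bound is not established. A global geometric price grid gives $O(\log n/\varepsilon)$ price changes per element. At $O(\log n)$ queries per event this overshoots the target by a $\log n$ factor. The improvement to $O(n\log n/\varepsilon)$ rests on a ``lockstep phases'' claim that you state but do not prove, and the fallback is also unspecified. The additive $\varepsilon w_{\max}$ option fails for a different reason. Its per-element slack is $\varepsilon w_{\max}$, so weight splitting only bounds the loss by $\varepsilon w_{\max}|\OPT_w|$, which can exceed $\varepsilon w(\OPT_w)$ by a factor of up to $\opt$. (Any element with $w(e)<\varepsilon w_{\max}$ is priced out by a single rejection.) The idea you are missing is to raise the price of the rejected element $f$ by $\varepsilon w(f)$, i.e., proportionally to its \emph{own} weight. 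An element is only proposed when $w(e)-p(e)\ge 0$, and every element of $X$ keeps $p\le w$. Hence $p(f)\le(1+\varepsilon)w(f)$ for every $f$ throughout, and each element is priced up at most $(1+\varepsilon)/\varepsilon$ times. There are also at most $\opt$ pure additions, since $|X|$ never decreases. So there are $O(n/\varepsilon)$ iterations, each costing $O(\log n)$ rank queries by Lemma~\ref{lem:binary}. No discarding of light elements, weight rounding, or phase amortization is needed.

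\textbf{Approximation.} The three conditions you ``assume'' at termination ($S_1$ nearly $w_1$-maximum, $S_2$ exactly $w_2$-maximum, small mismatch) are exactly what has to be proved. Your process (how $S_1$ and $S_2$ are recomputed and interact, and when it stops) is not specified precisely enough to check any of them. You delegate the mismatch to ``the same potential argument as in the unweighted proof''. But the unweighted rank-query proof uses no potential, and the potential argument of the streaming algorithm has no known direct weighted analogue; the paper leaves this open. Without controlling $S_1\setminus S_2$, an upper bound by $w_1(S_1)+w_2(S_2)$ says nothing about $w(S_1\cap S_2)$.

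The paper avoids any mismatch by keeping a \emph{single} set $X\in\I_1\cap\I_2$. Alice proposes the $e\notin X$ maximizing $q=w-p$ subject to $X+e\in\I_1$ (if $q(e)\ge 0$). If $X+e\notin\I_2$, Bob drops the $p$-cheapest $f\in C_2(X\mid e)$. Two invariants then carry the proof. First, $X$ lies in a $q$-maximum base of $\M_1$, by the gross substitute property (Lemma~\ref{lem:gsproperty}), since only prices outside $X$ rise. Second, every $e'\notin X$ with $p(e')\ge\varepsilon w(e')$ satisfies $X+e'\notin\I_2$ and $p(e')\le p(f')+\varepsilon w(e')$ for all $f'\in C_2(X\mid e')$; this is proved by induction from the last rejection of $e'$, using circuit elimination. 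At termination, these invariants and the stopping rule make $X$ exactly $q$-maximum in $\M_1$. They also make $X$ exactly $p'$-maximum in $\M_2$, where $p'=p$ on $X$ and $p'=p-\varepsilon w$ off $X$ (Lemma~\ref{lem:charaminind}). Hence $w(X)=q(X)+p'(X)\ge q(\OPT_w)+p'(\OPT_w)\ge(1-\varepsilon)w(\OPT_w)$.
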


Note that, by applying the reduction technique of Dudeja and Grilnberger~\cite{DG26} 
to the near-linear rank-query algorithm in~\cite{chakrabarty2019faster}, 
one obtains a $(1-\varepsilon)$-approximation algorithm for weighted matroid intersection. 
However, as described above, the reduction incurs an additional multiplicative factor in the query complexity 
that depends on the range of the weights. 
As a result, the resulting algorithm no longer uses a near-linear number of rank queries.

Based on this algorithm, we present another auction-based algorithm that works in the semi-streaming setting. 
Given the connection to bipartite matching, a natural goal in semi-streaming matroid intersection is to compute a $(1-\varepsilon)$-approximate common independent set using $(1/\varepsilon)^{O(1)}$ passes and nearly linear space in the matroid ranks, for any $\varepsilon>0$. 
Our third contribution is to present the first algorithm that achieves this goal.

\begin{theorem}\label{thm:mainstreaming}
Let $\varepsilon > 0$.
For the matroid intersection problem in the semi-streaming setting, 
there exists a deterministic $(1-\varepsilon)$-approximation algorithm 
using 
$O({1}/{\varepsilon^{2}})$ passes,  
$O(\opt)$ space\footnote{In the present work, we measure the space complexity by the number of machine words and we assume that each machine word has $O(\log n)$ bits. The space complexity increases by an additional factor of $O(\log n)$ when measured in bits.},  
and $O({n \log \opt}/{\varepsilon^{2}})$ independence queries.
\end{theorem}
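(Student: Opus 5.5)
We design an auction algorithm in the style of Assadi--Liu--Tarjan, driven by Fleiner-type exchanges, in which the elements of $E$ are items with prices. Each element $e$ has a price $p(e)\in\{0,\varepsilon,2\varepsilon,\dots,1\}$; prices only increase and are stored only for elements that are currently "active", so that $O(\opt)$ words suffice.

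\textbf{Preprocessing.} First truncate both matroids to rank $2\opt$; an $O(\opt)$ estimate of $\opt$ is available from one greedy pass. This keeps every independent set of size $O(\opt)$.

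\textbf{State and pass structure.} We maintain $S_1\in\I_1$, which plays the role of the bidders' side, and $S_2\in\I_2$, the sellers' side; at the end of the algorithm $S_1\cap S_2$ will be the output. Each pass scans $E$ once. An element $e\notin S_1$ is offered to $\M_1$ at its current price. If $e$ can enter $S_1$ by an exchange with an element $f$ whose price is at least $p(e)+\varepsilon$, or by a plain addition, then it enters $S_1$. It is then proposed to $\M_2$, which keeps a max-"low price" basis in the Fleiner sense: $\M_2$ accepts $e$ if $S_2+e$ is independent, or if it can evict an element of higher price. A rejected $e$ has its price raised by $\varepsilon$. Deciding which element to evict requires a binary search over the fundamental circuit with independence queries, ordered by price. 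Each such search costs $O(\log\opt)$ queries, which gives the $O(n\log\opt/\varepsilon^2)$ total once we show that each element is processed $O(1)$ times per pass and that there are $O(1/\varepsilon^2)$ passes.

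\textbf{Analysis.}
\begin{itemize}
\item \emph{Charging argument.} Prices are capped at $1$ and each increase is $\varepsilon$, so each element changes price at most $1/\varepsilon$ times.
\item \emph{Phases.} We group passes into $O(1/\varepsilon)$ phases of $O(1/\varepsilon)$ passes each. We argue that if a phase ends without any price change among the "relevant" elements, then the state is $\varepsilon$-stable.
\item \emph{Dual certificate.} From an $\varepsilon$-stable state we build a fractional dual certificate: the elements are partitioned by price level, and we apply the matroid intersection min--max (Edmonds) to the sets $\{e : p(e)\le t\}$ and $\{e: p(e)> t\}$. Averaging over the threshold $t$ then shows $|S_1\cap S_2|\ge (1-O(\varepsilon))\opt$, mirroring how ALT bound their matching via vertex prices. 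Concretely, for each threshold $t$, $\opt\le r_1(A_t)+r_2(E\setminus A_t)$ for the set $A_t$ of cheap elements. Stability says that $r_1(A_t)$ is nearly spanned by $S_1\cap S_2$ and that $r_2$ of the expensive elements is nearly spanned by $S_2$. Summing over $1/\varepsilon$ thresholds bounds the total slack by $\varepsilon\opt$, which is at most $\varepsilon$ times the total price mass that can be charged.
\item \emph{Potential argument.} A potential $\Phi=\sum_{e\in S_2}p(e)+$ (number of price increases), bounded by $O(\opt/\varepsilon)$, shows that only $O(1/\varepsilon)$ non-stable phases can occur.
\end{itemize}

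\textbf{Main obstacle.} The key difficulty is the stability-implies-near-optimality step in the streaming setting, because $S_1$ is only locally stable during a single pass, while the certificate needs a global rank bound $r_1(A_t)\le |S_1\cap A_t|+\text{slack}$. My first attempt would be to maintain the invariant that $S_1$ is a min-price basis of the elements seen with their current prices, as in greedy. Then $S_1\cap A_t$ spans $A_t$ in $\M_1$ at the moment each element of $A_t$ was last scanned, and the slack comes only from elements whose price changed after they were scanned. We charge that slack to price increases, which are at most $O(\opt/\varepsilon)$ per phase thanks to the truncation. If this charging fails, the fallback is to run a fresh pass after each price epoch to re-verify the spanning condition, at the cost of an additional $1/\varepsilon$ factor, which the pass budget $O(1/\varepsilon^2)$ still allows.
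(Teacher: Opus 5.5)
There is a genuine gap, and it starts with space. You raise the price of every element that $\M_2$ rejects. Such an element is not in $S_2$, and need not be in $S_1$, so with $O(\opt)$ words you must eventually forget its price. Once prices are forgotten they are no longer monotone, and your charging (``each element changes price at most $1/\varepsilon$ times'') breaks.

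The paper's central device is an \emph{implicit} price. It stores $\pi$ only on a single set $X\in\I_1\cap\I_2$, and for $e\notin X$ it defines $p(e)=\varepsilon$ if $X+e\in\I_2$, and $p(e)=\min\{\pi(f) \mid f\in C_2(X\mid e)-e\}+\varepsilon$ otherwise. The seller evicts the \emph{cheapest} element $f$ of $C_2(X\mid e)-e$; as written, your $\M_2$ rule evicts a more expensive element, which is backwards. Evicting $f$ then raises $f$'s implicit price automatically. The real work is proving that this implicit $p$ is monotone under $X\to X+e$ and $X\to X+e-f$, a case analysis that needs the simultaneous exchange property, and nothing in your plan replaces it.

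Your two-set state $S_1,S_2$ with output $S_1\cap S_2$ also gives no control over $|S_1\cap S_2|$. The paper keeps one $X$ with the invariant that $X$ lies in a $p$-minimum base of $\M_1$ (via the gross-substitute lemma). The $\M_2$ half of the weight-splitting certificate ($q_2=p$ on $X$, $q_2=p-\varepsilon$ off $X$) then holds automatically from the definition of $p$. Together these give $r_1(\{e \mid p(e)\le 1\})\ge(1-\varepsilon)\opt$ at all times.

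Second, your pass bound does not follow. The potential $\sum_{e\in S_2}p(e)+(\text{number of price increases})$ is not $O(\opt/\varepsilon)$. Truncation bounds the size of independent sets, not the number of rejected elements, so the number of price increases is only bounded by $n/\varepsilon$. You therefore get no $n$-independent bound on the number of phases. The notions of $\varepsilon$-stability and the threshold-averaged Edmonds certificate are never pinned down, and your last paragraph concedes that the needed global rank bound is open.

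The paper handles this with frozen-price rounds and two $n$-independent potentials:
\begin{itemize}
\item Each round freezes the prices $p^*$ at its start. An element $e$ is accepted only if $p(e)=p^*(e)\le 1$ and $Y+e$ is still contained in a $p^*$-minimum base of $\M_1$. This is checked by a single exchange test against a base $Z$ computed in one extra pass.
\item The potentials are $\Phi_{\rm item}=\pi(X)\in[0,\opt]$ and $\Phi_{\rm buyer}=\min\{p(S) \mid S\in\I_1,\ |S|=\opt\}\in[\varepsilon\opt,(1+\varepsilon)\opt]$.
\item In round $t$, $\Phi_{\rm item}$ grows by exactly $\varepsilon(|Y^t|-|X^{t-1}|)$, and $\Phi_{\rm buyer}$ grows by at least $\varepsilon((1-\varepsilon)\opt-|Y^t|)$.
\item The $\Phi_{\rm buyer}$ bound comes from the rank inequality $r_1(E^t_{\le i})\le r_1(E^{t-1}_{\le i-1})+|Y^t\cap E^{t-1}_i|$, where $E^t_{\le i}=\{e \mid p^t(e)\le i\varepsilon\}$. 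The reason is that an element whose price did not rise and that was not accepted is spanned by strictly cheaper elements plus accepted ones of the same price. This is combined with $r_1(\{e\mid p(e)\le1\})\ge(1-\varepsilon)\opt$.
\end{itemize}
Hence $\Phi_{\rm item}+\Phi_{\rm buyer}$ increases by more than $\varepsilon^2\opt$ per round while $|X|<(1-2\varepsilon)\opt$. Since its total range is at most $2\opt$, $2/\varepsilon^2$ rounds suffice; that, not a phases-of-passes argument, is where $O(1/\varepsilon^2)$ comes from.
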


It is worth noting that our algorithm is deterministic, and it has a smaller space requirement than 
Quanrud's algorithm~\cite{Quanrud24}. 
Furthermore, even in the offline setting, our algorithm is competitive with the current fastest $(1-\varepsilon)$-approximation algorithm in terms of running time, measured by the number of independence oracle queries.
As shown in Table~\ref{tab:summary_approximation}, our algorithm is the first \emph{deterministic} $(1-\varepsilon)$-approximation algorithm for matroid intersection that achieves a near-linear number of independence oracle queries.

By combining Theorem~\ref{thm:mainstreaming} with the reduction technique that transforms unweighted algorithms into weighted ones (see Lemma 4.5 in the arXiv version of~\cite{DG26}), we obtain the following result for the weighted problem.

\begin{corollary}\label{cor:weighted}
Let $\varepsilon > 0$.
For the weighted matroid intersection problem in the semi-streaming setting, 
there exists a deterministic $(1-\varepsilon)$-approximation algorithm  
using 
$O({1}/{\varepsilon^{2}})$ passes and 
$O\big( \opt \cdot \lceil {1}/{\varepsilon} \rceil ^{{1}/{\varepsilon}} \cdot \log W \big)$ space, 
where $W :=\frac{\max_{e\in E} w(e)}{\min_{e\in E} w(e)}$ denotes the aspect ratio of the weights.
\end{corollary}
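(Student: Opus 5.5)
The corollary should follow by a black-box composition of Theorem~\ref{thm:mainstreaming} with the unweighted-to-weighted reduction of Dudeja and Grilnberger (Lemma~4.5 in the arXiv version of~\cite{DG26}). I would first restate that lemma in a usable form. Given a semi-streaming unweighted $(1-\delta)$-approximation algorithm $\mathcal{A}$ using $p$ passes and space $s$, the reduction produces a weighted $(1-O(\delta))$-approximation algorithm. This algorithm rounds weights to powers of $(1+\varepsilon)$, groups the weight classes into $O(\log W)$ scales, and runs $O(\lceil 1/\varepsilon\rceil^{1/\varepsilon})$ copies of $\mathcal{A}$ per scale in parallel on suitably defined (truncated or contracted) matroids. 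The total cost is $p$ passes, or $O(p)$ passes, and space $O(s \cdot \lceil 1/\varepsilon\rceil^{1/\varepsilon}\cdot \log W)$.

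The first step is to check that the reduction's requirements are met by the algorithm of Theorem~\ref{thm:mainstreaming}. Three properties matter here:
\begin{itemize}
\item The algorithm must work only with independence oracles of the derived matroids. Restrictions, truncations and contractions by a stored set all have independence oracles computable from the original ones.
\item It must be deterministic.
\item The copies must run in parallel in the same passes. Each copy only reads the stream element by element and keeps $O(\opt)$ words, so they can share passes.
\end{itemize}

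The second step is to instantiate the reduction with $\delta = \Theta(\varepsilon)$, so that the accumulated loss is at most $\varepsilon$. With this choice the passes stay $O(1/\varepsilon^2)$. The space is $O(\opt)$ per copy, because every derived matroid pair has common rank at most $\opt$; truncating to rank $2\opt$ as in Section~\ref{sec:definition} handles this if needed. Multiplying by the number of copies gives the claimed $O(\opt\cdot\lceil 1/\varepsilon\rceil^{1/\varepsilon}\cdot\log W)$. Any final combination step, such as a greedy merge of the copies' outputs, uses no extra passes beyond a constant and needs only the stored solutions.

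The main obstacle is the interface check. I need to confirm that the reduction in~\cite{DG26} requires nothing beyond a pass-and-space black box, for example no rank oracles and no randomness. I also need to confirm that its overhead is additive or multiplicative in the passes only by a constant and not by $\log W$. I would first try reading Lemma~4.5 as running all scale copies simultaneously. If instead it runs them sequentially, the pass bound would pick up a factor of $\log W$, and the statement would need the parallel variant.
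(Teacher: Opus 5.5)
Your proposal matches the paper, which gives no argument beyond stating that the corollary follows by feeding Theorem~\ref{thm:mainstreaming} as a black box into the unweighted-to-weighted reduction of Lemma~4.5 in the arXiv version of~\cite{DG26}. The stated bounds indicate that this reduction keeps the pass count at $O(1/\varepsilon^2)$ and multiplies the $O(\opt)$ space by $\lceil 1/\varepsilon\rceil^{1/\varepsilon}\log W$, so your parallel reading is the intended one; your interface checks (determinism, independence oracles only, copies sharing passes) are the combination's implicit hypotheses, and your guesses about the reduction's internals are not needed.
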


This corollary yields a $(1-\varepsilon)$-approximation algorithm that requires only a constant number of passes, 
at the cost of an additional $\log W$ factor in the space complexity.
The resulting trade-off is incomparable to that of the randomized algorithm of Quanrud~\cite{Quanrud24}.

\paragraph{Back to Bipartite Graph Matching.}

The algorithm in Theorem~\ref{thm:mainstreaming} can be applied directly to bipartite matching without any modification.
This yields another $(1-\varepsilon)$-approximation semi-streaming algorithm for bipartite matching.
Below, we compare its performance with that of other known $(1-\varepsilon)$-approximation semi-streaming algorithms.
Note that, in the bipartite matching setting, 
$E$ corresponds to the edge set of the input graph and 
the maximum common independent set size $\opt$ corresponds to the size of a maximum matching.
Let $r'$ denote the number of vertices in the input graph. 
It is obvious that $r \le r'$.

As mentioned earlier, there is a vast literature on graph matching in the semi-streaming setting, and we do not attempt to provide an exhaustive survey here.
We refer interested readers to Assadi's paper~\cite{assadi2024simple}.
A central theme in this line of research is to achieve both a small number of passes and low space complexity.
Our algorithm for bipartite matching obtained from Theorem~\ref{thm:mainstreaming} performs favorably with respect to both measures. 

In terms of the number of passes, the current best $(1-\varepsilon)$-approximation algorithms require either $O({1}/{\varepsilon^2})$ passes~\cite{AG11,assadi2021auction} or $O({\log r'}/{\varepsilon})$ passes~\cite{AG18,assadi2024simple,AJJST22}.
Our algorithm and Quanrud's algorithm~\cite{Quanrud24} for matroid intersection may be viewed as the corresponding counterparts of these two approaches.

In terms of space complexity, the best known algorithms require $O(r')$ space~\cite{assadi2024simple,assadi2021auction,AJJST22,EggertKMS12}. Our algorithm further improves this to $O(\opt)$ space. This improvement is made possible by the standard trick of truncating the two given partition matroids to rank $O(\opt)$ (see Section~\ref{sec:definition}), thereby transforming them into laminar matroids. We believe that this illustrates the advantage of studying the problem from a more abstract perspective.

\subsection{Our Technique}

Most algorithms for matroid intersection are based on finding augmenting paths in a certain exchange graph; see, e.g.,~\cite{blikstad2021breaking_STOC,chakrabarty2019faster,chekuri2016fast,cunningham1986improved,lawler1975matroid,nguyen2019note}. 
Rare exceptions include algorithms based on auction mechanisms~\cite{BT25,FZ95,SI95}. 
Our present algorithms also follow the auction paradigm.

More precisely, our algorithms are based on the auction algorithm of Assadi, Liu, and Tarjan~\cite{assadi2021auction} for bipartite matching. 
In particular, our algorithm for Theorem~\ref{thm:mainstreaming} generalizes their algorithm; 
thus, the similarity in the number of passes is not coincidental.
Our main insight is that there is a close connection between the auction algorithm of~\cite{assadi2021auction} and the stable matching algorithm of Gale and Shapley~\cite{GaleShapley1962}. 
This observation suggests that, in order to develop an auction algorithm for matroid intersection, one should consider stable matching with matroid constraints, which fits into a framework introduced by Fleiner~\cite{Fleiner03}. 
From this perspective, our algorithm can be viewed as an adaptation of the algorithm of Assadi, Liu, and Tarjan~\cite{assadi2021auction} to Fleiner's framework~\cite{Fleiner03}.
The connection between our matroid intersection algorithm and generalized stable matching is discussed in Section~\ref{sec:relationtoGSM}.

We now describe an informal outline of the algorithms for Theorems~\ref{thm:mainrank}, \ref{thm:mainrankweighted}, and~\ref{thm:mainstreaming}.
In the proposed algorithms,
$E$ is regarded as a set of items, and the price  of each item increases monotonically.

There are two players, Alice and Bob:
Alice aims to buy as many elements as possible from $\I_1$ at low prices,
while Bob aims to sell as many elements as possible from $\I_2$ at high prices.
Throughout the algorithm, we maintain the prices and a set $X \subseteq E$ representing the items
provisionally agreed upon for trade between Alice and Bob.
 
We begin with $X=\emptyset$ and repeat the following procedure. In each iteration, if Alice wishes to buy additional items beyond those in $X$, she proposes replacing $X$ with a superset $Y$. Bob then removes some low-priced elements from $Y$ so that the resulting set belongs to $\I_2$ and increases the prices of the removed items. If Alice does not wish to buy any element in $E \setminus X$, the algorithm terminates and returns $X$.

By implementing this mechanism in a straightforward manner,
we obtain a simple algorithm that uses a near-linear number of rank queries.
While it is not immediately obvious that the resulting solution is a $(1-\varepsilon)$-approximation,
this can be proved using techniques in matroid optimization,
such as the weight-splitting technique.
These ideas lead to Theorems~\ref{thm:mainrank} and~\ref{thm:mainrankweighted}. 

To obtain a semi-streaming algorithm and prove Theorem~\ref{thm:mainstreaming}, more careful arguments are required.
For example, a naive implementation would require $\Theta(n)$ space to maintain the prices.
To address this issue, we develop an implicit representation of prices, which constitutes one of the key ideas of our algorithm.
Roughly speaking, we maintain prices only for the elements in $X$, thereby reducing the space complexity.
Then, for each $e \in E \setminus X$, its price is defined as the price at which Bob is willing to sell $e$, 
possibly after removing some element from $X$.

\subsection{Paper Organization}

The remainder of this paper is organized as follows.
In Section~\ref{sec:preliminary}, we introduce notation and review basic properties of matroids.
To provide intuition for the auction mechanism in the matroid intersection setting,
Section~\ref{sec:rankquery} presents a simple auction algorithm using a nearly linear number of rank queries and proves Theorem~\ref{thm:mainrank}.
Section~\ref{sec:rankweighted} extends this result to the weighted setting and proves Theorem~\ref{thm:mainrankweighted}.
Building on the algorithm of Theorem~\ref{thm:mainrank}, Section~\ref{sec:semistreaming} presents a semi-streaming algorithm and proves Theorem~\ref{thm:mainstreaming}.
Finally, Section~\ref{sec:conclusion} concludes the paper.  

\section{Preliminary}
\label{sec:preliminary}

\subsection{Notation}
\label{sec:definition}
Let $\mathbb{R}$ and $\mathbb{R}_+$ denote the sets of real numbers and non-negative real numbers, respectively. 
Let $\mathbb{Z}_+$ denote the set of non-negative integers. 
For a set $X$ and an element $e$, $X \cup \{e\}$ and $X \setminus \{e\}$ are simply denoted by $X+e$ and $X-e$, respectively. 

A {\em matroid} is a pair $\M = (E, \I)$ consisting of a finite ground set $E$ and 
a non-empty family $\I \subseteq 2^E$ of subsets of $E$ such that 
\begin{itemize}
    \item if $X \subseteq Y$ and $Y \in \I$, then $X \in \I$; and 
    \item if $X\in \I$, $Y \in \I$, and $|Y| > |X|$, then there exists $e \in Y \setminus X$ such that $X + e \in \I$. 
\end{itemize}
The sets in $\I$ are referred to as {\em independent sets}. 
An element $e \in E$ is called a {\em loop} if $\{e\} \not\in \I$. 
An inclusion-wise maximal independent set is called a {\em base} of $\M$. 
It is easy to see that every base of $\M$ has the same size. 
It is well known that the family of bases satisfies the {\em simultaneous exchange property}: 
for any two bases $B_1$ and $B_2$ of $\M$ and any element $e \in B_1 \setminus B_2$, 
there exists an element $f \in B_2 \setminus B_1$ 
such that both $B_1 - e + f$ and $B_2 + e - f$ are bases of $\M$; see e.g.~\cite[Theorem~39.12]{Schrijver2003}. 
An inclusion-wise minimal non-independent set is called a {\em circuit} of $\M$. 
For an independent set $X \in \I$ and an element $e \in E \setminus X$ such that $X + e \not\in \I$, 
$X + e$ contains a unique circuit, which we denote $C_{\M}(X \mid e)$, and it is called the 
the \emph{fundamental circuit} of $X$ and $e$.
For a matroid $\M=(E, \I)$, we define the {\em rank function} ${\rm rank}_{\M} \colon 2^E \to \mathbb{Z}_+$ 
and the {\em span function} ${\rm span}_{\M} \colon 2^E \to 2^E$ by 
\begin{align*}
{\rm rank}_{\M} (X) &= \max \{ |Y| \mid Y \subseteq X,\ Y \in \I\}, \\
{\rm span}_{\M} (X) &= \{ e \in E \mid {\rm rank}_{\M} (X +e) = {\rm rank}_{\M} (X) \}
\end{align*}
for any $X \subseteq E$. 
See~\cite{oxley2011matroid,Schrijver2003} for more on matroids.

For a set $X \subseteq E$ and a vector $p \in \mathbb{R}^E$, we denote $\sum_{e \in X} p(e)$ by $p(X)$. 
For a matroid $\M = (E, \I)$ and a vector $p \in \mathbb{R}^E$, 
a {\em $p$-minimum base of $\M$} is a base $B$ of $\M$ that minimizes $p(B)$. 
We define a {\em $p$-minimum independent set} and a {\em $p$-maximum base/independent set} in a similar way. 
It is well known that such sets can be computed by a greedy algorithm.

To discuss the algorithmic problem for matroids, we suppose that each matroid is given as {\em oracles}. 
An {\em independence oracle} for a matroid $\M$ is an oracle that, given a subset $X \subseteq E$, determines whether $X \in \I$.
A {\em rank oracle} for a matroid $\M$ is an oracle that, given a subset $X \subseteq E$, returns ${\rm rank}_{\M} (X)$. 
Note that we can determine whether $X \in \I$ by using the rank oracle, which implies that the rank oracle is at least as powerful as the independence oracle.

The main focus of this paper is a fast $(1-\varepsilon)$-approximation algorithm for the matroid intersection problem, where $\varepsilon > 0$ is a given constant. 
In the problem, we are given a finite set $E$ of size $n$ and two matroids $\M_1 = (E, \I_1)$ and $\M_2 = (E, \I_2)$ over $E$.  
Let ${\OPT}$ be a common independent set of maximum size, and let $\opt = |{\OPT}|$. 
The goal is to find a common independent set of size at least $(1-\varepsilon) \opt$. 
For $i \in \{1, 2\}$, the rank function ${\rm rank}_{\M_i}(\cdot)$ and the fundamental circuit $C_{\M_i}(\cdot)$ 
are denoted by $r_{i}(\cdot)$ and $C_{i}(\cdot)$, respectively, and 
let $\B_i$ denote the base family of $\M_i$. 
It is well known (and easy to show) that the greedy algorithm yields a common independent set of size at least $\opt / 2$.

Throughout the paper, without loss of generality, we assume the following: 
\begin{itemize}
\item ${1}/{\varepsilon}$ is an integer, because
we can replace $\varepsilon$ with a smaller constant $\lceil \varepsilon^{-1} \rceil^{-1}$. 
\item ${1}/{\varepsilon} \le n+1$, since otherwise a $(1-\varepsilon)$-approximation algorithm has to find an exact solution, 
and so we can replace $\varepsilon$ with $\frac{1}{n+1}$. 
\item $\M_1$ and $\M_2$ have no loops, 
since otherwise we can remove them from $E$. 
\item 
$r_i (E) \le 2 \opt$ for $i \in \{1, 2\}$, because we can truncate each matroid to rank
$2 |X^*|$, where $X^*$ is a 2-approximate solution obtained by the greedy algorithm. 
\end{itemize}
Here, for $k \in \mathbb{Z}_+$,  
the {\em truncation of $\M = (E, \I)$ to rank $k$} is a pair $\M' = (E, \I')$ defined by 
$\I' := \{X \in \I \mid |X| \le k\}$, which is known to be a matroid.

\subsection{Basic Properties}

In this subsection, we present basic properties concerning matroids. 
First, the following observation follows from the uniqueness of the fundamental circuit. 

\begin{observation}\label{obs:fundamentalcircuit}
Let $\M = (E, \I)$ be a matroid, 
let $X \in \I$, and let $e \in E \setminus X$ be such that $X + e \notin \I$. 
Then, for $f \in X + e $, we have $X + e - f \in \I$ if and only if $f \in C_{\M}(X \mid e)$.
\end{observation}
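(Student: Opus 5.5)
The plan is to prove the two directions of Observation~\ref{obs:fundamentalcircuit} separately, both resting on the uniqueness of the circuit contained in $X+e$. Throughout, write $C = C_{\M}(X \mid e)$. Since $X \in \I$ and $X+e \notin \I$, the set $X+e$ contains a circuit, and every circuit inside $X+e$ must contain $e$. Indeed, a circuit avoiding $e$ would lie in $X$, contradicting the independence of $X$ (subsets of independent sets are independent). By definition $C$ is the unique circuit contained in $X+e$, so in particular $e \in C$.

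For the ``only if'' direction, suppose $f \in (X+e) \setminus C$. Then $C \subseteq X+e-f$. A set that contains a circuit is dependent, because every subset of an independent set is independent and a circuit is not independent. Hence $X+e-f \notin \I$, which is the contrapositive of what we need.

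For the ``if'' direction, let $f \in C$ and suppose, for contradiction, that $X+e-f \notin \I$. Then $X+e-f$ contains some inclusion-wise minimal dependent subset $C'$, which is a circuit. Since $C' \subseteq X+e$, uniqueness gives $C' = C$. But $f \in C$ and $f \notin X+e-f$, so $C \not\subseteq X+e-f$, a contradiction. Therefore $X+e-f \in \I$. The case $f = e$ is included here and is consistent, since $X+e-e = X \in \I$.

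The only step that needs any care is the justification of uniqueness itself, which the paper takes as known when it defines $C_{\M}(X \mid e)$. If this had to be proved, I would use the standard circuit elimination axiom. Two distinct circuits $C_1, C_2 \subseteq X+e$ both contain $e$, so elimination yields a circuit inside $(C_1 \cup C_2) - e \subseteq X$, contradicting $X \in \I$. Since the paper assumes the fundamental circuit is well defined, I would simply cite it, and the proof above is then complete.
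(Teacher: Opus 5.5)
Your proof is correct and follows the paper's approach. The paper gives no separate proof and only remarks that the observation follows from the uniqueness of the fundamental circuit, which is exactly what you use in both directions; your optional derivation of uniqueness via circuit elimination is also sound.
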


The following lemma is known as the gross substitute property.
We provide a proof for completeness.

\begin{lemma}[\mbox{see e.g.~\cite[Proposition 6.32]{murota2003discrete}}] \label{lem:gsproperty}
    Let $\M=(E, \I)$ be a matroid and let $S \subseteq E$. 
    Let $p, p' \in \mathbb{R}^E$ be vectors such that $p(e) \le p'(e)$ for all $e \in E$, and $p(e) = p'(e)$ for all $e \in S$.
    If $S$ is contained in a $p$-minimum base of $\M$, then it is also contained in a $p'$-minimum base of $\M$.
\end{lemma}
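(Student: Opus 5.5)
We reduce to the case where $p$ and $p'$ differ in a single coordinate and then iterate. The elements $e \in E \setminus S$ with $p(e) < p'(e)$ can be raised one at a time. Each intermediate vector agrees with $p$ on $S$ and is coordinatewise between $p$ and $p'$. So if the single-coordinate case holds, applying it along the chain $p = p_0 \le p_1 \le \dots \le p_k = p'$ gives the lemma; at each step $S$ is still contained in a $p_j$-minimum base.

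So suppose $p'$ agrees with $p$ except at one element $g \notin S$, where $p'(g) > p(g)$. Let $B$ be a $p$-minimum base with $S \subseteq B$.

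If $g \notin B$, then $p'(B) = p(B)$. For every base $B'$ we have $p'(B') \ge p(B') \ge p(B) = p'(B)$. Hence $B$ itself is $p'$-minimum and still contains $S$.

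If $g \in B$, we use the standard exchange characterization of minimum bases. Let $B^*$ be a $p'$-minimum base, chosen among all $p'$-minimum bases to maximize $|B^* \cap B|$. We show that we may take $B^*$ to contain $B - g \supseteq S$.

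Suppose some $e \in B - g$ is not in $B^*$. By the simultaneous exchange property applied to $B$, $B^*$ and $e \in B \setminus B^*$, there is $f \in B^* \setminus B$ such that both $B - e + f$ and $B^* + e - f$ are bases. Note $f \neq g$ because $g \in B$. Since $B$ is $p$-minimum, $p(f) \ge p(e)$. Since $e, f \neq g$, this gives $p'(f) \ge p'(e)$. Then $p'(B^* + e - f) \le p'(B^*)$, so $B^* + e - f$ is also $p'$-minimum. It has larger intersection with $B$, contradicting the choice of $B^*$.

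Hence $B - g \subseteq B^*$, so $S \subseteq B^*$, as required.

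The main point needing care is this exchange step. We must make sure $f \neq g$, so that $p$ and $p'$ agree on both exchanged elements. That is exactly why we exchange elements of $B - g$ rather than all of $B$; and since $g \notin S$, losing $g$ is harmless.

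An alternative route is Kruskal's greedy algorithm with a consistent tie-breaking order favouring $S$. Raising $p(g)$ only moves $g$ later in the order, which does not affect whether any element of $S$ is selected. However, the exchange argument above avoids tie-breaking subtleties, so we use it.
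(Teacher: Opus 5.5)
Your proof is correct. Its core is the same simultaneous-exchange argument the paper uses: take a $p$-minimum base $B \supseteq S$ and an extremal $p'$-minimum base, exchange an element $e$ of $B$ missing from it against some $f \in B^* \setminus B$, and check that $B^* + e - f$ is still $p'$-minimum.

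The difference is your preliminary reduction to raising one coordinate $g \notin S$ at a time. You need that reduction only to get $p'(f) = p(f)$ from $f \neq g$. The paper skips it. It takes $B'$ to be a $p'$-minimum base minimizing $|S \setminus B'|$, picks $e \in S \setminus B'$, and uses the chain $p'(e) = p(e) \le p(f) \le p'(f)$. The equality holds because $e \in S$, and the last inequality is just $p \le p'$. Also $f \notin S$ automatically, since $f \notin B \supseteq S$, so the exchange strictly decreases $|S \setminus B'|$. The whole lemma therefore follows from one exchange argument, with no chain $p_0 \le \dots \le p_k$ and no case split on whether $g \in B$.

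What your route buys is a slightly stronger single-step fact: all of $B - g$, not just $S$, lies in some $p'$-minimum base. The lemma as stated does not need this.
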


\begin{proof}
Let $B$ be a $p$-minimum base of $\M$ that contains $S$.
Let $B'$ be a $p'$-minimum base of $\M$ that minimizes $|S \setminus B'|$.
If $|S \setminus B'| = 0$, then the claim holds.
Assume, to the contrary, that $|S \setminus B'| \ne 0$, that is, there exists an element $e \in S \setminus B'$.
By applying the simultaneous exchange property to $B$ and $B'$ with respect to $e$, there exists an element $e' \in B' \setminus B$ 
such that $B - e + e'$ and $B' + e - e'$ are bases of $\M$.
Since $B$ is a $p$-minimum base, we have $p(B) \le p(B - e + e')$, and hence $p(e) \le p(e')$.
Therefore, $p'(e) = p(e) \le p(e') \le p'(e')$, because $e \in S$.
Hence, $B'' := B' + e - e'$ satisfies that $p'(B'') = p'(B') + p'(e) - p'(e') \le p'(B')$.
Thus, $B''$ is a $p'$-minimum base of $\M$ such that $|S \setminus B''| = |S \setminus B'| - 1$,
which contradicts the choice of $B'$.
\end{proof}

The next lemma below states that an independent set of a matroid is $w$-maximum if it is $w$-maximum among its local neighbors.

\begin{lemma}[\mbox{see~\cite[Corollary 8.6]{cook2011combinatorial}}] \label{lem:charaminind}
    Let $\M=(E, \I)$ be a matroid and let $w \in \mathbb{R}^E$. 
    For $S \in \I$, $S$ is a $w$-maximum independent set of $\M$ if and only if 
    (i)  $w(e) \le 0$ for any $e \in E \setminus S$ with $S+e \in \I$,
    (ii) $w(e) \ge 0$ for any $e \in S$, and 
    (iii) $w(e) \le w(f)$ for any $e \in E \setminus S$ and $f \in S$ with $S+e-f \in \I$. 
\end{lemma}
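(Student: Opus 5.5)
Necessity is immediate from the definition. If $S$ is $w$-maximum and (i) fails, then $S+e \in \I$ with $w(S+e) > w(S)$. If (ii) fails, then $S-e \in \I$ (independence is closed under subsets) with larger weight. If (iii) fails, then $S+e-f \in \I$ with larger weight. Each case contradicts maximality, so the substance of the lemma is sufficiency.

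For sufficiency, I will reduce to a comparison between $S$ and a $w$-maximum independent set. Let $T$ be a $w$-maximum independent set chosen so that, among all such sets, $|S \triangle T|$ is as small as possible. We may assume $T$ contains no element of negative weight, since deleting one keeps $T$ independent and does not decrease the weight. The goal is to show $w(S) \ge w(T)$, and I would split into cases according to $|S|$ versus $|T|$.

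\textbf{Case $|S| < |T|$.} By the augmentation axiom there is $e \in T \setminus S$ with $S + e \in \I$, so (i) gives $w(e) \le 0$. Since $T$ has no negative elements, $w(e) = 0$. Then $S+e$ still satisfies (i)--(iii), but $|S+e| > |S|$. I would therefore run an induction on $r(E) - |S|$, or more cleanly add zero-weight elements to $S$ until $|S| \ge |T|$, checking that the conditions are preserved along the way.

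\textbf{Case $|S| > |T|$.} Symmetrically, some $f \in S \setminus T$ has $T+f \in \I$. By (ii), $w(f) \ge 0$, so $T+f$ is also $w$-maximum. If $f$ were positive this would contradict the optimality of $T$, so $w(f)=0$, and $T+f$ is a $w$-maximum set closer to $S$, contradicting the choice of $T$.

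\textbf{Case $|S| = |T|$.} This is the main step and the main obstacle. Both $S$ and $T$ are bases of the truncation of $\M$ to rank $|S|$ restricted to $S\cup T$. A standard matroid fact, provable from the simultaneous exchange property applied within this truncation, gives a bijection $\sigma\colon T\setminus S \to S\setminus T$ such that $S - \sigma(e) + e \in \I$ for every $e$. Condition (iii) then yields $w(e) \le w(\sigma(e))$ for each $e$, and summing gives $w(T) \le w(S)$. The cleanest way to avoid this bijection lemma is an exchange argument on the minimal-distance choice of $T$. Take any $e \in T \setminus S$, apply simultaneous exchange to get $f \in S \setminus T$ with $S - f + e$ and $T - e + f$ both independent. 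Condition (iii) gives $w(e) \le w(f)$, so $T-e+f$ is $w$-maximum and strictly closer to $S$, contradicting minimality unless $T \setminus S = \emptyset$, i.e.\ $T=S$.

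Simultaneous exchange is stated in the paper only for bases, so it must be applied in a truncated matroid where $S$ and $T$ are bases. This truncation step is the technical point I expect to require the most care.
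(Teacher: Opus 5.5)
Your proof is correct. There is nothing in the paper to compare it against: the paper gives no argument for this lemma and simply cites Cook, Cunningham, Pulleyblank, and Schrijver (Corollary~8.6). What you have is therefore a self-contained replacement for that citation, built only from tools already in the preliminaries (augmentation and simultaneous exchange). What your route buys is self-containment; what the citation buys is brevity.

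The step you flag as delicate is handled exactly as the paper handles the same issue in the proof of Lemma~\ref{lem:pricemonotone}. The truncation of $\M$ to rank $|S|$ is a matroid in which $S$ and $T$ are both bases. Exchange there with $B_1 = T$, $B_2 = S$, $e \in T \setminus S$ gives $f \in S \setminus T$ with $T-e+f$ and $S+e-f$ independent in $\M$. Restricting to $S \cup T$ is unnecessary.

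A few points to tighten:
\begin{itemize}
\item That $T$ has no negative elements is forced by maximality (deleting one would strictly increase the weight), so it is not an extra assumption.
\item In the case $|S|<|T|$, you should actually check that $S+e$ inherits (i)--(iii). First, $S+e+x \in \I$ implies $S+x \in \I$, so (i) carries over. Second, $S+e+x-e = S+x \in \I$ gives $w(x) \le 0 = w(e)$ by (i). Third, $S+e+x-y \in \I$ with $y \in S$ implies $S+x-y \in \I$, hence $w(x) \le w(y)$ by (iii).
\item Your variant ``keep $T$ fixed and add zero-weight elements'' also works, but it needs one more sentence. Adding $k$ elements of $T \setminus S$ lowers $|S \triangle T|$ by exactly $k$, and lowers $|S \triangle T'|$ by at most $k$ for any other $w$-maximum $T'$. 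So $T$ remains a minimizer for the enlarged set, and your equal-size argument applies to it.
\item The bijection lemma is indeed unnecessary. The same single exchange, run as an induction on $|T \setminus S|$, shows $w(T) \le w(S)$ for every independent $T$ with $|T|=|S|$, without needing $T$ to be $w$-maximum or distance-minimizing.
\end{itemize}
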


Nguy$\tilde{{\hat{\text{e}}}}$n~\cite{nguyen2019note} and Chakrabarty--Lee--Sidford--Singla--Wong \cite{chakrabarty2019faster} 
independently developed the so-called {\em binary search technique}, 
which is useful for improving the query complexity of matroid intersection algorithms.

\begin{lemma}[\mbox{Binary search technique~\cite{nguyen2019note,chakrabarty2019faster}}]
\label{lem:binary}
Let $\M=(E, \I)$ be a matroid and let $w \in \mathbb{R}^E$ be a weight function. Let $S \in \I$, $T \subseteq S$, and $v \in E \setminus S$. Then the following hold.
\begin{enumerate}
\item[(1)] Using $O(\log |T|)$ independence queries, we can find an element $u \in T$ minimizing $w(u)$ subject to $S + v - u \in \I$ (if such an element exists).
\item[(2)] Using $O(\log |E \setminus S|)$ rank queries, we can find an element $u \in E \setminus S$ minimizing $w(u)$ subject to $S + u \in \I$ (if such an element exists).
\end{enumerate}
\end{lemma}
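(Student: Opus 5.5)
Both parts will be proved by binary search over a list sorted by $w$, using a monotone predicate that can be checked with one oracle call per step.

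\emph{Part (1).} If $S+v\in\I$, then every $u\in T$ works, and we simply return the $w$-minimum element of $T$. This case costs one query. Otherwise, by Observation~\ref{obs:fundamentalcircuit}, $S+v-u\in\I$ holds exactly when $u\in C_{\M}(S\mid v)$. We want the $w$-minimum element of $T\cap C_{\M}(S\mid v)$.

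Sort $T$ as $u_1,\dots,u_k$ with $w(u_1)\le\dots\le w(u_k)$, and let $T_j=\{u_1,\dots,u_j\}$. The key claim is that $S-T_j+v\in\I$ if and only if $T_j\cap C_{\M}(S\mid v)\neq\emptyset$.
\begin{itemize}
\item If $T_j$ meets the circuit, then $S-T_j+v$ does not contain $C_{\M}(S\mid v)$. Since $S+v$ contains a unique circuit, its subset $S-T_j+v$ contains no circuit and is independent.
\item Conversely, if $T_j$ misses the circuit, then $S-T_j+v$ contains the circuit and is dependent.
\end{itemize}
This predicate is monotone in $j$. Binary search finds the smallest $j$ with $S-T_j+v\in\I$ using $O(\log|T|)$ independence queries. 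The answer is $u_j$, or no element exists if $j=k$ fails.

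\emph{Part (2).} Let $U=E\setminus S$, sorted as $u_1,\dots,u_m$ by nondecreasing $w$, and let $U_j=\{u_1,\dots,u_j\}$. The predicate is $r(S\cup U_j)>|S|$. Since $S$ is independent, this holds exactly when some $u\in U_j$ has $S+u\in\I$: if no such $u$ exists, every element of $U_j$ lies in $\mathrm{span}(S)$, so $r(S\cup U_j)=r(S)$. The predicate is monotone in $j$. Binary search then gives the smallest such $j$, and $u_j$ is the desired minimizer, using $O(\log|E\setminus S|)$ rank queries.

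The only step needing care is the circuit-uniqueness argument that justifies the predicate in part (1). We count oracle queries only, so sorting by $w$ incurs no query cost.
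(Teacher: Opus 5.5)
Your proof is correct and is the standard argument: sorting by $w$ and binary searching over prefixes with the monotone predicates $S - T_j + v \in \I$ (justified by uniqueness of the fundamental circuit $C_{\M}(S \mid v)$) and $r(S \cup U_j) > |S|$ (justified because $U_j \subseteq \mathrm{span}_{\M}(S)$ forces $r(S \cup U_j) = |S|$) is exactly the binary search technique. The paper does not reprove this lemma; it cites Nguyen and Chakrabarty et al., whose proofs follow this same route.
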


\section{Simple Algorithm with Nearly Linear Rank Queries}
\label{sec:rankquery}

In this section, we present a simple auction algorithm for matroid intersection and prove Theorem~\ref{thm:mainrank}. 
Our algorithm is presented in Algorithm~\ref{alg:ranklinear}. 
For clarity, we provide an informal explanation of the algorithm below.  

The algorithm is of auction type, in which $E$ is the set of items and the price $p(e)$ for each $e \in E$ increases monotonically.  
There are two players, Alice and Bob: Alice aims to buy as many elements as possible from $\I_1$, each with price at most $1$, 
while Bob aims to sell as many elements as possible from $\I_2$ at high prices.  
The set $X$ represents the items 
provisionally
agreed upon for trade between Alice and Bob. 

In each iteration, if Alice wishes to buy an element $e \in E \setminus X$ in addition to $X$ at price $p(e)$, she proposes to Bob to replace $X$ with $X + e$.  
If $X + e \in \I_2$, then Bob accepts the proposal. Otherwise, Bob removes the cheapest element 
$f \in X + e$ such that $X + e - f \in \I_2$ (i.e., $f \in C_{2}(X \mid e)$) and increases the price of $f$ by $\varepsilon$.  
The algorithm repeats this process until there is no element in $E \setminus X$ that Alice wishes to buy, 
and then returns $X$.

\begin{algorithm}
    \caption{Matroid Intersection Algorithm with Nearly Linear Rank Queries}\label{alg:ranklinear}
    \KwInput{Two matroids $\M_1 = (E, \I_1)$ and $\M_2 = (E, \I_2)$}
    \KwOutput{A common independent set $X \in \I_1 \cap \I_2$} 
    $X \gets \emptyset$ ; \\
	\ForEach{$e \in E$}{       
    $p(e) \gets 0$ ; \\
    }
    \While {true}{
    Find $e \in E \setminus X$ minimizing $p(e)$ subject to $X + e \in \I_1$ ; \\
    \tcp{Let $e = \bot$ if no such element exists}
    \If{$e \neq \bot$ and $p(e) \le 1$}{
        \If{$X + e \in \I_2$}{
            $X \gets X + e$ ; 
        }
        \Else{
        Find $f \in C_{2}(X \mid e)$ minimizing $p(f)$   \tcp*{possibly, $f=e$}
        $X \gets X + e - f$ ; \\                          
        $p(f) \gets p(f) + \varepsilon$ ; 
        }
    }
    \Else{
        \Return $X$ ;         
    }
    }
\end{algorithm}

\subsection{Query Complexity}
\label{sec:rankcomplexity}

In this subsection, we show that 
Algorithm~\ref{alg:ranklinear} returns a common independent set $X \in \I_1 \cap \I_2$ 
by using $O({n \log n} / {\varepsilon})$ rank queries. 
To this end, we begin with the following simple observation.

\begin{observation}\label{obs:easyobsrank}
    Throughout the execution of Algorithm~\ref{alg:ranklinear}, we have the following properties: 
    \begin{enumerate}
    \item 
    $X \in \I_1 \cap \I_2$.
    \item 
    $|X|$ is monotonically non-decreasing. 
    \item 
    $0\le p(e) \le 1$ for each $e \in X$. 
    \item 
    $0\le p(e) \le 1+\varepsilon$ for each $e \in E$. 
    \item 
    For each $e \in E$, $p(e)$ is monotonically non-decreasing.
\end{enumerate}
\end{observation}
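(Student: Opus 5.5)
The plan is to prove all five properties at once by induction on the iterations of the \textbf{while} loop. At initialization $X=\emptyset$ and $p\equiv 0$, so every property holds trivially. We then check that a single iteration preserves each property. Since an iteration either returns immediately or changes only $X$ and at most one price, we examine the two branches separately.

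\emph{Branch where $X+e\in\I_2$.} We have $X+e\in\I_1$ by the choice of $e$, so $X+e\in\I_1\cap\I_2$, which gives property~1. The size of $X$ grows by one, giving property~2. No price changes, so properties~4 and~5 are preserved. The only new element of $X$ is $e$, and the branch condition guarantees $p(e)\le 1$; together with $p(e)\ge 0$ from the inductive hypothesis this gives property~3.

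\emph{Branch where $X+e\notin\I_2$.} Here $f\in C_2(X\mid e)$, so Observation~\ref{obs:fundamentalcircuit} gives $X+e-f\in\I_2$. Moreover, $X+e-f\subseteq X+e\in\I_1$, so the new set is again common independent, giving property~1. Its size is $|X|$ (if $f\neq e$), or $X$ is unchanged (if $f=e$); either way property~2 holds. Only $p(f)$ changes, and it increases, so property~5 holds. The element $f$ is no longer in $X$, so raising its price cannot violate property~3. Every element of the new $X$ is either an old element of $X$ (price at most $1$ by induction) or $e$ (price at most $1$ by the branch condition).

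The only step needing a real argument is the upper bound $p(f)\le 1+\varepsilon$ in property~4. Since $f$ minimizes $p$ over $C_2(X\mid e)$ and $e\in C_2(X\mid e)$, we get $p(f)\le p(e)\le 1$ before the increment. Hence $p(f)\le 1+\varepsilon$ afterwards. Nonnegativity of all prices follows from monotonicity and the initial value $0$.

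I expect no serious obstacle beyond this point. The one subtle point is the upper bound in property~4, which relies on the minimality of $f$ over a circuit containing $e$.
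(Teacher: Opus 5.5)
Your proof is correct and follows essentially the same argument as the paper: a step-by-step check that each iteration preserves the five properties, with Property 1 coming from Observation~\ref{obs:fundamentalcircuit}. The one small difference is in Property 4, and both versions are valid. The paper bounds the pre-increment price $p(f)\le 1$ simply because $f\in X+e$ and every element of $X+e$ has price at most $1$ (by Property 3 and the test in Line 6). So the minimality of $f$ over $C_2(X\mid e)$ that you single out as the subtle point is not actually needed.
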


\begin{proof}
    Properties 2 and 5 are obvious. 
    Since $X \in \I_1$ follows from the update rule and $X \in \I_2$ is preserved in Line 11 by Observation~\ref{obs:fundamentalcircuit}, 
    Property 1 holds. 
    Property 3 is preserved, because $e \in E \setminus X$ is added to $X$ only when $p(e) \le 1$ holds in Line 6. 
    For $f \in E$, observe that $p(f)$ increases by $\varepsilon$ only when $f$ is removed from $X+e$ in Lines 11 and 12. 
    Since the price of each element in $X+e$ is at most $1$ by Property 3, the updated price of $f$ is at most $1+\varepsilon$, 
    which shows Property 4. 
\end{proof}

Note that, by Property 1 in this observation, $C_{2}(X \mid e)$ is well-defined in Line 10. 
We next evaluate the number of rank queries used in the algorithm.

\begin{lemma}\label{lem:numberofrankqueries}
    Algorithm~\ref{alg:ranklinear} uses $O({n \log n}/{\varepsilon})$ rank queries.
\end{lemma}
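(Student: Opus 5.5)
The plan is to bound the number of iterations of the while loop and then the number of rank queries spent in each iteration, using the binary search technique of Lemma~\ref{lem:binary}. Each iteration either returns, or ends in one of two ways. In the first, Alice's element $e$ is added to $X$ because $X+e \in \I_2$. In the second, some $f \in C_2(X \mid e)$ is removed and $p(f)$ increases by $\varepsilon$. I would count these two types separately.

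For the first type, $|X|$ increases by one, and by Property~2 of Observation~\ref{obs:easyobsrank} it never decreases. Since $X \in \I_1 \cap \I_2$ (Property~1), $|X| \le \opt \le n$, so there are at most $n$ such iterations. For the second type, each iteration raises some price by exactly $\varepsilon$. Prices start at $0$, never decrease (Property~5), and stay at most $1+\varepsilon$ (Property~4). So each element's price is raised at most $(1+\varepsilon)/\varepsilon = O(1/\varepsilon)$ times, and there are $O(n/\varepsilon)$ such iterations. In total the loop runs $O(n/\varepsilon)$ times, using $1/\varepsilon \le n+1$ only to absorb constants.

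Next I would bound the cost of one iteration by $O(\log n)$ rank queries.
\begin{itemize}
\item Line~5 (finding $e \in E \setminus X$ minimizing $p(e)$ subject to $X+e \in \I_1$): Lemma~\ref{lem:binary}(2) applied to $\M_1$ with $S = X$ and weight $p$ costs $O(\log n)$ rank queries.
\item Line~7 (testing $X+e \in \I_2$): one query.
\item Line~10 (finding $f \in C_2(X \mid e)$ minimizing $p(f)$):
\begin{itemize}
\item By Observation~\ref{obs:fundamentalcircuit}, this is the same as minimizing $p(f)$ over $f \in X+e$ with $X+e-f \in \I_2$.
\item Lemma~\ref{lem:binary}(1), applied with $S = X$, $T = X$, $v = e$, finds the best $u \in X$ using $O(\log n)$ independence queries. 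Each of these is simulated by one rank query.
\item The candidate $f = e$ is always valid, so we compare $p(e)$ with $p(u)$ and take the smaller. If no valid $u$ exists, $f = e$.
\end{itemize}
\end{itemize}

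Multiplying gives $O(n/\varepsilon) \cdot O(\log n) = O(n\log n/\varepsilon)$. The main point needing care is the iteration count for the second type, which rests on Property~4: an element can only be removed while it lies in $X+e$, where its price is at most $1$. The other delicate point is handling the candidate $f = e$ correctly when applying Lemma~\ref{lem:binary}(1), since that lemma only searches within $T \subseteq S$.
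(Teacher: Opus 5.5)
Your proposal is correct and follows the paper's proof essentially line by line: the loop runs at most $\opt$ augmenting iterations plus at most $(1+\varepsilon)n/\varepsilon$ price-raising iterations (via Properties 1, 2, 4, 5 of Observation~\ref{obs:easyobsrank}), and each iteration costs $O(\log n)$ rank queries using Lemma~\ref{lem:binary}(2) for Line~5 and Lemma~\ref{lem:binary}(1) for Line~10. Your explicit comparison with the candidate $f=e$ in Line~10 is a small detail the paper leaves implicit; your appeal to $1/\varepsilon \le n+1$ is unnecessary, since $\varepsilon \le 1$ already gives $\opt + n + 1 = O(n/\varepsilon)$.
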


\begin{proof}
By Observation~\ref{obs:easyobsrank}, we see the following: 
\begin{itemize}
    \item 
    Since $X \in \I_1 \cap \I_2$ implies $0 \le |X| \le \opt$,  
    Line 8 is executed at most $\opt$ times.
    \item 
    Since $0 \le p(E) \le (1+\varepsilon)n$ and 
    the value of $p(E)$ increases by $\varepsilon$ in Line 12, 
    this line is executed at most ${(1+\varepsilon)n}/{\varepsilon}$ times.
\end{itemize} 
Therefore, the while-loop is executed at most $\opt + \frac{(1+\varepsilon)n}{\varepsilon}  + 1 = O({n}/{\varepsilon})$ times. 
We also see that each line can be implemented efficiently as follows: 
\begin{itemize}
    \item By applying Lemma~\ref{lem:binary} (2) with $S = X$ and $w = p$, we can implement Line 5 using $O(\log n)$ rank queries to $\M_1$.  
    \item By applying Lemma~\ref{lem:binary} (1) with $S = T = X$, $v = e$, and $w = p$, we can implement Line 10 using $O(\log \opt)$ independence queries to $\M_2$. 
\end{itemize}

Since the rank query is as powerful as the independence query, 
the algorithm runs with $O({n \log n}/{\varepsilon})$ rank queries in total. 
\end{proof}

By Observation~\ref{obs:easyobsrank} and Lemma~\ref{lem:numberofrankqueries}, 
Algorithm~\ref{alg:ranklinear} uses $O({n \log n}/{\varepsilon})$ rank queries and returns a common independent set $X \in \I_1 \cap \I_2$.

\subsection{Key Invariants}

We next show key invariants in Algorithm~\ref{alg:ranklinear} 
that are used to guarantee the approximation ratio.

\begin{lemma}\label{lem:propertyrank1}
At the beginning of each iteration of the while-loop in Algorithm~\ref{alg:ranklinear}, 
$X$ is contained in a $p$-minimum base of $\M_1$.  
\end{lemma}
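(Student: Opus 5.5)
We prove the statement by induction on the number of iterations of the while-loop. Initially $X=\emptyset$, which is trivially contained in every base, in particular in a $p$-minimum base of $\M_1$. Now fix an iteration that does not return. Assume that at its start $X$ is contained in a $p$-minimum base $B$ of $\M_1$. Let $e$ be the element found in Line~5, so $e$ minimizes $p(e)$ over $\{e' \in E\setminus X : X+e' \in \I_1\}$ and $p(e)\le 1$. An iteration changes $X$ to $X+e$ (Line~8) or to $X+e-f$ (Lines~11--12). In the latter case it also raises $p(f)$ by $\varepsilon$. We handle this in two stages: first $X \to X+e$ with prices unchanged, then the deletion of $f$ together with the price increase.

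\emph{Stage 1: $X+e$ is contained in a $p$-minimum base.} Since $X+e\in\I_1$, we will show that some $p$-minimum base contains $X+e$. If $e\in B$ we are done. Otherwise, by Observation~\ref{obs:fundamentalcircuit}, $B+e$ contains the fundamental circuit $C_1(B\mid e)$. This circuit is not contained in $X+e$, because $X+e\in\I_1$. Hence there is some $g\in C_1(B\mid e)\setminus (X+e)$, so $g \in B\setminus X$, and $B':=B+e-g$ is a base containing $X+e$. It remains to show $p(e)\le p(g)$. Since $g\in B$ and $X\subseteq B$, we have $X+g\subseteq B$, so $X+g\in\I_1$ and $g\in E\setminus X$. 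The choice of $e$ in Line~5 then gives $p(e)\le p(g)$. Thus $p(B')\le p(B)$, and $B'$ is a $p$-minimum base containing $X+e$. This covers the case of Line~8.

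\emph{Stage 2: removing $f$ and raising its price.} In the else-branch, the new set is $X' = X+e-f \subseteq X+e \subseteq B'$, and the new price vector $p'$ agrees with $p$ except that $p'(f)=p(f)+\varepsilon$. We apply Lemma~\ref{lem:gsproperty} with $S=X'$. The hypotheses hold: $p\le p'$ coordinatewise; $p=p'$ on $S$, because $f\notin X'$; and $S\subseteq B'$, which is a $p$-minimum base. The lemma therefore yields a $p'$-minimum base containing $X'$, which is exactly the invariant at the start of the next iteration. When $f=e$ we get $X'=X$, and the same argument applies verbatim.

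The main obstacle is Stage~1: finding the exchange element $g$ and showing $p(e)\le p(g)$. The key point is that $g$ can be chosen outside $X+e$, which makes it a feasible candidate in Line~5. Stage~2 follows directly from the gross substitute property (Lemma~\ref{lem:gsproperty}), since only the price of the removed element increases.
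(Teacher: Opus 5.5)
Your proof is correct and follows essentially the same route as the paper. Both arguments exchange $e$ against an element of $C_1(B\mid e)\setminus(X+e)$, using that this element is a feasible candidate in Line~5. Both then invoke the gross substitute property (Lemma~\ref{lem:gsproperty}) for the price increase on $f$, which lies outside the new $X$.
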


\begin{proof}
Obviously, $X=\emptyset$ satisfies the condition at the beginning of the algorithm. 
To prove the lemma by induction, 
suppose that $X$ is contained in a $p$-minimum base, say $Z$, of $\M_1$ at the beginning of some iteration of the while-loop. 
The following claim is a basic property of matroids, but we include a proof for completeness.

\begin{claim}
If $e \in E \setminus X$ is chosen in Line 5, where $e \neq \bot$, then 
$X + e$ is contained in a $p$-minimum base of $\M_1$. 
\end{claim}

\begin{proof}[Proof of the claim]
    Since the claim is obvious if $e \in Z$, it suffices to consider the case where $e \not\in Z$. 
    In this case, we have $Z+e \not\in \I_1$. 
    Since $C_1(Z \mid e)\not\in \I_1$ and  $X+e \in \I_1$, there exists an element $e' \in C_1(Z \mid e) \setminus (X+e)$. 
    Then, $Z' := Z + e - e'$ is a base of $\M_1$ by Observation~\ref{obs:fundamentalcircuit}. 
    Furthermore, $p(e') \ge p(e)$ holds by the choice of $e$, where we note that $X+e' \subseteq Z \in \I_1$. 
    This implies that $p(Z') \le p(Z)$, that is, $Z'$ is a $p$-minimum base of $\M_1$. 
    Since $X+e$ is contained in $Z'$, the claim holds.  
\end{proof}

By this claim, when $X$ is replaced by $X+e$ or $X+e-f$ in Line 8 or 11, 
the resulting set is also contained in a $p$-minimum base of $\M_1$.  
Furthermore, when $p(f)$ is replaced by $p(f)+\varepsilon$ in Line 12, 
$X$ is still contained in a $p$-minimum base of $\M_1$ by Lemma~\ref{lem:gsproperty}, because $f \not\in X$. 
Therefore, $X$ is contained in a $p$-minimum base in the next iteration of the while-loop. 
By induction, this completes the proof of Lemma~\ref{lem:propertyrank1}. 
\end{proof}

\begin{lemma}
\label{lem:propertyrank2}
At the beginning of each iteration of the while-loop in Algorithm~\ref{alg:ranklinear}, 
for any $e' \in E \setminus X$ with $p(e') \ge \varepsilon$, 
we have that 
\begin{itemize}
\item 
$X+e' \not\in \I_2$, and 
\item 
$p(e') \le p(f') + \varepsilon$ for any $f' \in C_2(X \mid e')$.
\end{itemize}
\end{lemma}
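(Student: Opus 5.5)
We argue by induction over the iterations of the while-loop. The invariant to maintain is the conjunction of the two bullets for every $e' \in E \setminus X$ with $p(e') \ge \varepsilon$. Initially all prices are $0$, so there is nothing to check. Suppose the invariant holds at the beginning of an iteration. We then examine the three possible updates: Line 8 ($X \gets X+e$), and Lines 11--12 ($X \gets X+e-f$, $p(f)\gets p(f)+\varepsilon$). Note first that an element $e'$ with $p(e')\ge\varepsilon$ has been removed from $X+e$ at some earlier step, because prices only grow in Line 12.

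\textbf{Case of Line 8.} Here $X+e\in\I_2$, so $X \subseteq X+e$. For any $e' \notin X+e$ with $p(e')\ge\varepsilon$, the inductive hypothesis gives $X+e'\notin\I_2$, hence $C_2(X\mid e')$ exists. Since $X+e\in\I_2$, this circuit is still the unique circuit in $X+e+e'$, so $C_2(X+e\mid e')=C_2(X\mid e')$. Both bullets therefore carry over unchanged, and no prices change.

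\textbf{Case of Lines 11--12.} There are two sub-cases for an element $e'\notin X' := X+e-f$ with $p(e')\ge\varepsilon$. First, $e'=f$ itself, with its new price $p(f)+\varepsilon$. Since $f\in C_2(X\mid e)$, Observation~\ref{obs:fundamentalcircuit} shows that $X'+f = X+e$ is dependent in $\M_2$, so $X'+f\notin\I_2$. Moreover $C_2(X'\mid f)=C_2(X\mid e)$, because both are the unique circuit of $X+e$. By the choice of $f$ as the cheapest element of that circuit, the old $p(f)\le p(g)$ for all $g$ in it, so the new price satisfies $p(f)+\varepsilon \le p(g)+\varepsilon$. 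This gives the second bullet; the circuit also contains $f$ itself, for which the inequality is trivial. Second, $e'\ne f$ with $e'\notin X+e$. By induction $X+e'\notin\I_2$, and since $p(e')\ge\varepsilon$ is unchanged we know $p(e')\le p(g)+\varepsilon$ for all $g\in C_1:=C_2(X\mid e')$.

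\textbf{Main obstacle.} The hard step is the second sub-case: we must show that $X'+e'\notin\I_2$ and control $C_2(X'\mid e')$. If $f\notin C_1$, then $C_1\subseteq X'+e'$ remains the fundamental circuit and we are done. If $f\in C_1$, we use circuit elimination on $C_1$ and $C_0:=C_2(X\mid e)$, which share $f$. This yields a circuit contained in $(C_0\cup C_1)-f\subseteq X'+e'$, so $X'+e'\notin\I_2$. The new fundamental circuit $C_2(X'\mid e')$ is contained in $C_0\cup C_1 - f$. More precisely, any $g$ in it satisfies $X'+e'-g\in\I_2$, and we want to show $g\in C_0\cup C_1$, which follows by taking the unique circuit. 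For $g\in C_1$ the old bound applies. For $g\in C_0$ we have $p(e')\le p(f)+\varepsilon\le p(g)+\varepsilon$, because $f$ was the cheapest element of $C_0$; here the price of $f$ used is the old one, which is also what appeared in the inductive bound. This chaining through $f$ is exactly where the minimality choice in Line 10 is used, and carefully verifying the containment $C_2(X'\mid e')\subseteq C_0\cup C_1$ via uniqueness of circuits in $X'+e'$ is the step I expect to require the most care.
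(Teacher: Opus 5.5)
Your proof is correct and follows essentially the same route as the paper. Your $e'=f$ sub-case is the paper's base case, and your remaining sub-cases (Line~8, $f\notin C_2(X\mid e')$, and circuit elimination with chaining through the minimality of $f$) are the paper's inductive step; the only cosmetic difference is that you run one global forward induction over all elements instead of fixing $e'$ and inducting from its last removal. The containment you flag as delicate is immediate: since $X'\in\I_2$, the circuit produced by elimination inside $X'+e'$ must be $C_2(X'\mid e')$.
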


\begin{proof}
 For a positive integer $t$, suppose that $X$ is updated from $X^t$ to $X^{t+1}$ in the $t$-th iteration of the while-loop. 
 The elements $e$ and $f$ chosen in this iteration are denoted by $e^t$ and $f^t$, respectively, 
 where $f^t$ is undefined if $X^{t+1} = X^t + e^t$. 
 The price vector $p$ at the beginning of this iteration is denoted by $p^t$. 

 To prove the lemma, fix an iteration number $t^*$ and an element $e' \in E\setminus X^{t^*}$ with $p^{t^*}(e') \ge \varepsilon$.
 The objective is to show that 
 $X^{t^*} + e' \not\in \I_2$, and 
 $p^{t^*}(e') \le p^{t^*}(f') + \varepsilon$ for any $f' \in C_2(X^{t^*} \mid e')$. 
 Let $t$ be the maximum integer subject to $t < t^*$ and $f^{t} = e'$. 
 Note that such $t$ must exist, because $p^{t^*}(e') \ge \varepsilon$. 
 In what follows, for $\tau = t+1, t+2, \dots , t^*$, we prove
 \begin{enumerate}
 \item[(i)] 
 $X^\tau + e' \not\in \I_2$, and 
 \item[(ii)] 
 $p^\tau (e') \le p^\tau (f') + \varepsilon$ for any $f' \in C_2(X^\tau \mid e')$
 \end{enumerate}
  by induction.

\begin{claim}\label{clm:36}
  For $\tau = t+1$, (i) and (ii) hold. 
\end{claim}

\begin{proof}[Proof of the claim]
 By the choice of $f^t$, we have that $X^{t} + e^{t} \not\in \I_2$ and 
 $X^{t+1} = X^{t} + e^{t} - f^t$. 
 Since $f^{t} = e'$ is a minimizer of $p^{t}$ on $C_2(X^{t} \mid e^{t})$, 
 we obtain
\[
  p^t(e') \le p^t(f')  \quad \text{for any } f' \in C_2(X^t \mid e^t). 
\]
Since the price of $e'$ increases by $\varepsilon$ in Line 12, this shows that 
\[
p^{t+1}(e') = p^{t}(e') + \varepsilon \le p^{t}(f') + \varepsilon \le p^{t+1}(f') + \varepsilon \quad \text{for any } f' \in C_2(X^t \mid e^t).
\]
Furthermore, 
$C_2(X^t \mid e^t) = C_2(X^{t+1} \mid e')$ holds, because $X^t + e^t = X^{t+1} + e'$.  
Therefore, (i) and (ii) hold for $\tau = t+1$.  
\end{proof}

\begin{claim}\label{clm:37}
    Suppose that (i) and (ii) hold for an integer $\tau$ with $t+1 \le \tau < t^*$. 
    Then they also hold for $\tau +1$. 
\end{claim}

\begin{proof}[Proof of the claim]
If $X^{\tau+1} = X^{\tau} + e^{\tau}$, then we obtain
$C_2(X^{\tau +1} \mid e') = C_2(X^{\tau} \mid e')$ and $p^{\tau +1} = p^{\tau}$, and hence the claim holds. 
Therefore, it suffices to consider the case where $X^{\tau+1} = X^{\tau} + e^{\tau} - f^{\tau}$. 
Note that $e^{\tau} \neq e'$, by the maximality of $t$ and the fact that $e' \not\in X^{t^*}$. 

If $f^{\tau} \not\in C_2(X^{\tau} \mid e')$, then 
we obtain $X^{\tau +1} + e' \not\in \I_2$ and $C_2(X^{\tau +1} \mid e') = C_2(X^{\tau} \mid e')$, and hence  
\[
p^{\tau+1}(e') = p^{\tau}(e') \le p^{\tau}(f') + \varepsilon = p^{\tau+1}(f') + \varepsilon
\]
holds for any $f' \in C_2(X^{\tau+1} \mid e')$.
Thus, (i) and (ii) hold for $\tau +1$. 

Otherwise, $f^\tau \in C_2(X^{\tau} \mid e')$. 
By the choice of $f^\tau$, it is also contained in $C_2(X^{\tau} \mid e^{\tau})$. 
Note that $C_2(X^{\tau} \mid e')$ and $C_2(X^{\tau} \mid e^{\tau})$ are two distinct circuits, because 
$e' \neq e^{\tau}$. 
Since $f^\tau$ is contained in two distinct circuits, 
by a basic property for circuits (see~\cite[Theorem 39.7]{Schrijver2003}), 
there exists a circuit $C$ of $\M_2$ such that
\[
C \subseteq (C_2(X^{\tau} \mid e') \cup C_2(X^{\tau} \mid e^{\tau})) - f^{\tau}, 
\]
which is contained in $X^{\tau +1} + e'$.  
Therefore, (i) holds for $\tau+1$, and $C = C_2(X^{\tau+1} \mid e')$. 
Combining 
     \begin{align*}
       & p^\tau(e') \le p^\tau (f') + \varepsilon  \quad \text{for any } f' \in C_2(X^{\tau} \mid e'), \\
       & f^\tau \in C_2(X^{\tau} \mid e'), \\
       & p^\tau(f^\tau) = \min \{ p^\tau(f') \mid f' \in C_2(X^\tau \mid e^\tau) \}, 
     \end{align*}
we obtain $p^\tau(e') \le p^\tau (f') + \varepsilon$ for every $f' \in C_2(X^{\tau} \mid e') \cup C_2(X^{\tau} \mid e^{\tau})$.  
Since $C=C_2(X^{\tau+1} \mid e')$ is contained in $(C_2(X^{\tau} \mid e') \cup C_2(X^{\tau} \mid e^{\tau})) -f^\tau$, we have that
\[
p^{\tau+1}(e') = p^{\tau}(e') \le p^{\tau}(f') + \varepsilon = p^{\tau+1}(f') + \varepsilon
\]
for any $f' \in C_2(X^{\tau+1} \mid e')$. 
Therefore, (ii) holds for $\tau +1$. 
\end{proof}

By Claims~\ref{clm:36} and~\ref{clm:37}, 
(i) and (ii) hold for $\tau = t+1, t+2, \dots , t^*$. 
In particular, they hold for $\tau = t^*$, which completes the proof for Lemma~\ref{lem:propertyrank2}. 
\end{proof}

\subsection{Proof for Theorem~\ref{thm:mainrank}}

We are now ready to prove Theorem~\ref{thm:mainrank}, which we restate here. 

\mainrank*

\begin{proof}
We have already seen in Section~\ref{sec:rankcomplexity} that Algorithm~\ref{alg:ranklinear} uses $O({n \log n}/{\varepsilon})$ rank queries and returns a common independent set $X \in \I_1 \cap \I_2$. 
Therefore, it suffices to show that 
the output $X$ of Algorithm~\ref{alg:ranklinear} has size at least $(1-\varepsilon) \opt$. 

Let $q_1(e) = 1 - p(e)$ for each $e \in E$ and let
\[
q_2(e) = 
\begin{cases}
p(e) & \mbox{for $e \in X$,} \\
p(e) - \varepsilon & \mbox{for $e \in E \setminus X$.}
\end{cases}
\]
By Lemma~\ref{lem:propertyrank1}, 
there exists a $p$-minimum base $\hat X$ of $\M_1$ that contains $X$. 
By Observation~\ref{obs:easyobsrank}, for every $e \in X$, 
we have $p(e) \le 1$.
Furthermore, by the stopping condition of the algorithm, 
there exists no element $e \in E \setminus X$ such that $X + e \in \I_1$ and $p(e) \le 1$.  
Therefore, 
\[
X = \{e \in \hat X \mid p(e) \le 1\} = \{e \in \hat X \mid q_1(e) \ge 0\}. 
\]
Since $\hat X$ is a $q_1$-maximum base, this implies that $X$ is a $q_1$-maximum independent set of $\M_1$.
Therefore, $q_1(X) \ge q_1(\OPT)$, because $\OPT \in \I_1$. 

We next observe the following: 
\begin{itemize}
    \item 
    For $e \in X$, we obtain $q_2(e) = p(e) \ge 0$. 
    \item 
    For $e \in E \setminus X$ with $X + e \in \I_2$, 
    we obtain $p(e) < \varepsilon$ by Lemma~\ref{lem:propertyrank2}, 
    implying that $q_2(e) < 0$. 
    \item 
    For $e \in E \setminus X$ and $f \in X$ such that $X + e \in \I_2$ (which implies that $X+e-f \in \I_2$), 
    we obtain $q_2(e) < 0 \le q_2(f)$ by combining the above two cases.
    \item 
    For $e \in E \setminus X$ and $f \in X$ such that $X + e \not\in \I_2$ and $X + e - f \in \I_2$,
    we obtain $p(e) \le p(f) + \varepsilon$
    by Observation~\ref{obs:fundamentalcircuit} and Lemma~\ref{lem:propertyrank2}, implying that  
    $q_2(e) = p(e) - \varepsilon \le p(f) = q_2(f)$.
\end{itemize}
These observations show that $X$ is a $q_2$-maximum independent set of $\M_2$ by Lemma~\ref{lem:charaminind}. 
Since $\OPT \in \I_2$, we obtain $q_2(X) \ge q_2(\OPT)$.

Therefore, 
\begin{align*}
|X| &= \sum_{e \in X} (1 - p(e)) + \sum_{e \in X} p(e) \\
&=  q_1(X) + q_2(X) \\
&\ge q_1(\OPT) + q_2 (\OPT) \\
&\ge \sum_{e \in {\OPT}} (1 - p(e)) + \sum_{e \in {\OPT}} (p(e) - \varepsilon) \\
&= (1- \varepsilon) \opt, 
\end{align*}
which completes the proof for Theorem~\ref{thm:mainrank}. 
\end{proof}

\subsection{Relation to Generalized Stable Matching}
\label{sec:relationtoGSM}

In Algorithm~\ref{alg:ranklinear}, Alice repeatedly makes proposals and Bob accepts or rejects them, which resembles the behavior of the Gale--Shapley algorithm for the stable matching problem~\cite{GaleShapley1962}.
In this subsection, we show that the output of the algorithm is in fact related to a matroidal generalization of the stable matching problem.
Although this connection does not directly lead to an improvement in the algorithmic performance, it reveals an intriguing relationship between seemingly unrelated problems and is therefore worth mentioning.

Consider a market consisting of two agents, 1 and 2, and a set of contracts $\tilde E$.
Each agent $i \in \{1,2\}$ is associated with a matroid $\tilde \M_i = (\tilde E, \tilde \I_i)$ and a preference relation $\succ_i$ over $\tilde E$, where ties are allowed.
We say that a contract set $X \subseteq \tilde E$ is {\em feasible} if $X \in \tilde \I_1 \cap \tilde \I_2$.
For a feasible set $X \subseteq \tilde E$, a contract $e \in \tilde E \setminus X$ is {\em blocking} if, for each $i \in \{1,2\}$, either
(i) $X + e \in \tilde \I_i$, or
(ii) there exists $f \in X$ such that $e \succ_i f$ and $X + e - f \in \tilde \I_i$.
We say that a feasible set $X \subseteq \tilde E$ is {\em stable} if it admits no blocking contract.

It was shown by Fleiner~\cite{Fleiner03} that such a market (as well as more generalized markets) always admits a stable set. Furthermore, a stable set can be computed efficiently via a generalization of the Gale--Shapley algorithm for the stable matching problem~\cite{GaleShapley1962}. 
This algorithm serves as the main inspiration for our algorithms. 
Note that although only the case without ties was discussed in~\cite{Fleiner03}, the argument can also be applied to the case with ties after breaking them arbitrarily.

Let $\M_1=(E, \I_1)$ and $\M_2 = (E, \I_2)$ be an instance of the matroid intersection problem, 
and let $\varepsilon > 0$. 
Then, for each $i \in \{1, 2\}$, we define $\tilde \M_i = (\tilde E, \tilde \I_i)$ by   
\begin{align*}
\tilde E &= \{(e, \alpha) \mid e \in E,\ \alpha \in \{0, {1}/{\varepsilon}, {2}/{\varepsilon}, \dots, 1\} \}, \\ 
\tilde \I_i &= \{ \{(e_1, \alpha_1), \dots , (e_k, \alpha_k) \} \mid (e_i, \alpha_i) \in \tilde E,\ e_1, \dots , e_k \mbox{ are distinct, } \{e_1, \dots , e_k\} \in \I_i \}. 
\end{align*}
It is not difficult to see that each $\tilde \M_i$ is a matroid, since $\M_i$ is a matroid. 
We define preference relations $\succ_1$ and $\succ_2$ as follows: 
for $(e, \alpha), (e', \alpha') \in \tilde E$ (possibly, $e = e'$), 
$(e, \alpha) \succ_1 (e', \alpha')$ if and only if $\alpha < \alpha'$, and 
$(e, \alpha) \succ_2 (e', \alpha')$ if and only if $\alpha > \alpha'$. 
This defines a market $(\tilde E, \tilde \M_1, \tilde \M_2, \succ_1, \succ_2)$. 
Note that a similar construction was previously used in a different context~\cite{CKY24,CKTY25}. 
Then the output of Algorithm~\ref{alg:ranklinear} induces a stable set of contractions as follows. 

\begin{proposition}
Let $X$ be the output of Algorithm~\ref{alg:ranklinear}, 
and let $p$ be the price vector at the end of the algorithm. 
Then $\tilde X := \{(e, p(e)) \mid e \in X\}$ is a stable set in the market $(\tilde E, \tilde \M_1, \tilde \M_2, \succ_1, \succ_2)$.
\end{proposition}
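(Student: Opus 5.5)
We first check that $\tilde X$ is feasible and then rule out blocking contracts. Note that the prices take values in $\{0,\varepsilon,2\varepsilon,\dots\}$, and the grid $\{0, 1/\varepsilon, \dots, 1\}$ is evidently intended as multiples of $\varepsilon$ up to $1$. By Observation~\ref{obs:easyobsrank}, every $e\in X$ has $p(e)\le 1$, so $(e,p(e))\in\tilde E$. Since $X\in\I_1\cap\I_2$ and the first coordinates of $\tilde X$ are distinct, we get $\tilde X\in\tilde\I_1\cap\tilde\I_2$.

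Now suppose $(e,\alpha)\in\tilde E\setminus\tilde X$ is blocking; we derive a contradiction. We split into two cases.

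\emph{Case $e\in X$.} Then $\alpha\neq p(e)$. The set $\tilde X+(e,\alpha)$ contains two contracts with first coordinate $e$, so it is not in $\tilde\I_i$. Hence condition (ii) must hold for both agents. For agent~1, some $f\in\tilde X$ satisfies $\tilde X+(e,\alpha)-f\in\tilde\I_1$, and this forces $f=(e,p(e))$. Agent~1's preference then gives $\alpha<p(e)$. Symmetrically, agent~2 must also remove $(e,p(e))$, which gives $\alpha>p(e)$. This is a contradiction.

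\emph{Case $e\notin X$.} Membership of $\tilde X+(e,\alpha)-(f,p(f))$ in $\tilde\I_i$ is equivalent to membership of $X+e-f$ in $\I_i$, and similarly without removal. Agent~1's condition therefore says that either $X+e\in\I_1$, or there is $f\in X$ with $X+e-f\in\I_1$ and $\alpha<p(f)$.

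\begin{itemize}
\item If $X+e\in\I_1$, the stopping rule gives $p(e)>1\ge\alpha$.
\item If $X+e-f\in\I_1$, then the claim in the proof of Lemma~\ref{lem:propertyrank1} applies. Taking a $p$-minimum base $Z\supseteq X$, minimality yields $p(e)\ge p(f)$ for $f\in C_1(Z\mid e)$ when $e\notin Z$; the case $e\in Z$ means $X+e\in\I_1$, handled above. Alternatively, apply Lemma~\ref{lem:charaminind} with $q_1=1-p$, using that $X$ is a $q_1$-maximum independent set, as established in the proof of Theorem~\ref{thm:mainrank}. Condition (iii) gives $q_1(e)\le q_1(f)$, i.e.\ $p(e)\ge p(f)>\alpha$.
\end{itemize}

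In either subcase we obtain $\alpha<p(e)$, and since prices lie on the $\varepsilon$-grid, $\alpha\le p(e)-\varepsilon$.

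Agent~2's condition says that either $X+e\in\I_2$, or there is $f\in X$ with $X+e-f\in\I_2$ and $\alpha>p(f)$. Since $p(e)\ge\alpha+\varepsilon\ge\varepsilon$, Lemma~\ref{lem:propertyrank2} excludes $X+e\in\I_2$. So $X+e\notin\I_2$, and the required $f$ lies in $C_2(X\mid e)$ by Observation~\ref{obs:fundamentalcircuit}. Lemma~\ref{lem:propertyrank2} then gives $p(e)\le p(f)+\varepsilon$, so $p(f)\ge p(e)-\varepsilon\ge\alpha$. This contradicts $\alpha>p(f)$.

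The main obstacle is bookkeeping in two places. The first is the discretization: we need prices to be exact multiples of $\varepsilon$, so that a strict inequality $\alpha<p(e)$ upgrades to $\alpha\le p(e)-\varepsilon$. The second is the boundary case where $p(e)$ exceeds $1$. There, only $\alpha\le 1<p(e)$ is available, but since $\alpha$ and $p(e)$ both lie on the grid, $\alpha\le p(e)-\varepsilon$ still holds and the argument goes through.
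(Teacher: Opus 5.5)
Your proof is correct and is essentially the paper's argument. Agent~1 is handled by the stopping rule plus $q_1$-maximality of $X$, agent~2 by Lemma~\ref{lem:propertyrank2}, and the $\varepsilon$-grid turns $\alpha<p(e)$ into $\alpha\le p(e)-\varepsilon$. You recast this as a contradiction split on $e\in X$ versus $e\notin X$, whereas the paper splits directly on $\alpha\ge p(e)$ versus $\alpha\le p(e)-\varepsilon$; your version also handles the $e\in X$ case and the check $(e,p(e))\in\tilde E$ somewhat more cleanly.
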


\begin{proof}
Since $X \in \I_1 \cap \I_2$, it is obvious that $\tilde X \in \tilde \I_1 \cap \tilde \I_2$, i.e., $\tilde{X}$ is feasible. 
To prove that $\tilde X$ is a stable set, 
we show that any $(e, \alpha) \in \tilde E \setminus \tilde X$ is not a blocking contract.

If $\alpha \ge p(e)$, then we prove that (i) $\tilde X + (e, \alpha) \not\in \tilde \I_1$ and 
(ii) there exists no $(e', p(e')) \in \tilde X$ such that $(e, \alpha) \succ_1 (e', p(e'))$ and $\tilde X + (e, \alpha) - (e', p(e')) \in \tilde \I_1$ as follows. 
When $e \in X$, condition (i) is clear from the definition of $\tilde I_1$. 
When $e \in E \setminus X$, condition (i) follows immediately from $p(e) \le \alpha \le 1$ and the stopping condition of the algorithm.
If there exists $(e', p(e')) \in \tilde X$ that violates condition (ii), 
then we would have $p(e') > \alpha \ge p(e)$ (which implies $e' \neq e$) and $X + e - e' \in \I_1$, contradicting the fact that $X$ is a $q_1$-maximum independent set, as shown in the proof of Theorem~\ref{thm:mainrank}.

If $\alpha \le p(e) - \varepsilon$, then we prove that 
(i) $\tilde X + (e, \alpha) \not\in \tilde \I_2$ and 
(ii) there exists no $(e', p(e')) \in \tilde X$ such that $(e, \alpha) \succ_2 (e', p(e'))$ and $X + (e, \alpha) - (e', p(e')) \in \tilde \I_2$ as follows.  
When $e \in X$, condition (i) is clear, and condition (ii) holds because
$(e, \alpha) \not\succ_2 (e', p(e'))$ when $e' = e$, and
$X + (e, \alpha) - (e', p(e')) \not\in \tilde \I_2$ for $e' \in X - e$.
When $e \in E \setminus X$, Lemma~\ref{lem:propertyrank2} shows conditions (i) and (ii). 

Therefore, $(e, \alpha)$ is not a blocking contract, and hence $\tilde X$ is a stable set. 
\end{proof}

\section{Weighted Problem with Nearly Linear Rank Queries}
\label{sec:rankweighted}

In this section, to prove Theorem~\ref{thm:mainrankweighted}, 
we show that Algorithm~\ref{alg:ranklinear} extends naturally to the weighted matroid intersection problem.
The resulting algorithm is given in Algorithm~\ref{alg:ranklinearweight}.
The algorithm is very similar to Algorithm~\ref{alg:ranklinear}; the main difference is that the weight $w(e)$ is interpreted as Alice's valuation of the item $e \in E$.
Accordingly, we introduce a new variable $q(e) := w(e)-p(e)$, which represents Alice's utility for $e \in E$ at price $p(e)$.

The modifications to Algorithm~\ref{alg:ranklinear} are straightforward.
In line~6 of Algorithm~\ref{alg:ranklinearweight}, an element $e$ is chosen to maximize $q(e)$ rather than minimize $p(e)$.
Moreover, in lines~13 and~14, the price of an element $f$ is increased by $\varepsilon w(f)$ instead of $\varepsilon$.

Then, by the same argument as Observation~\ref{obs:easyobsrank} and Lemmas~\ref{lem:numberofrankqueries}, \ref{lem:propertyrank1}, and~\ref{lem:propertyrank2}, we obtain the following.

\begin{algorithm}
    \caption{Weighted Matroid Intersection with Nearly Linear Rank Queries}\label{alg:ranklinearweight}
    \KwInput{Two matroids $\M_1 = (E, \I_1)$ and $\M_2 = (E, \I_2)$, and weight $w(e)\in \mathbb{R}_+$ for $e \in E$}
    \KwOutput{A common independent set $X \in \I_1 \cap \I_2$} 
    $X \gets \emptyset$ ; \\
	\ForEach{$e \in E$}{       
    $p(e) \gets 0$ ; \\
    $q(e) \gets w(e)$ ; \\
    }
    \While {true}{
    Find $e \in E \setminus X$ maximizing $q(e)$ subject to $X + e \in \I_1$ ; \\
    \tcp{Let $e = \bot$ if no such element exists}
    \If{$e \neq \bot$ and $q(e) \ge 0$}{
        \If{$X + e \in \I_2$}{
            $X \gets X + e$ ; 
        }
        \Else{
        Find $f \in C_{2}(X \mid e)$ minimizing $p(f)$   \tcp*{possibly, $f=e$}
        $X \gets X + e - f$ ; \\                          
        $p(f) \gets p(f) + \varepsilon w(f)$ ; \\
        $q(f) \gets q(f) - \varepsilon w(f)$ ; 
        }
    }
    \Else{
        \Return $X$ ;         
    }
    }
\end{algorithm}

\begin{observation}\label{obs:easyobsrankweight}
    Throughout the execution of Algorithm~\ref{alg:ranklinearweight}, we have the following properties: 
    \begin{enumerate}
    \item 
    $X \in \I_1 \cap \I_2$.
    \item 
    $|X|$ is monotonically non-decreasing. 
    \item 
    $0\le p(e) \le w(e)$ for each $e \in X$. 
    \item 
    $0\le p(e) \le (1+\varepsilon) w(e)$ for each $e \in E$. 
    \item 
    For each $e \in E$, $p(e)$ is monotonically non-decreasing.
\end{enumerate}
\end{observation}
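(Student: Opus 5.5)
The plan is to mirror the proof of Observation~\ref{obs:easyobsrank}, checking each of the five properties against the modified update rules of Algorithm~\ref{alg:ranklinearweight}. We proceed by induction over the iterations of the while-loop. At initialization $X=\emptyset$ and $p\equiv 0$, and we assume the weights are nonnegative. Together with the invariant $q(e)=w(e)-p(e)$, which holds initially and is preserved by Lines~13 and~14, these give the base case for all five properties.

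Properties~2 and~5 are immediate from the update rules. Line~9 increases $|X|$ by one. Line~12 replaces $X$ by $X+e-f$, which keeps $|X|$ unchanged. The only change to $p$ is in Line~13, and it adds $\varepsilon w(f)\ge 0$. For Property~1, the element $e$ is chosen in Line~6 with $X+e\in\I_1$, so both $X+e$ and its subset $X+e-f$ lie in $\I_1$. Membership in $\I_2$ is clear in Line~9. In Line~12 it follows from Observation~\ref{obs:fundamentalcircuit}, since $f\in C_2(X\mid e)$, and this circuit is well-defined because $X\in\I_2$ by the induction hypothesis.

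For Property~3, the lower bound $p(e)\ge 0$ follows from Property~5. For the upper bound, an element enters $X$ only through Line~9 or Line~12. In both cases the test in Line~7 guarantees $q(e)\ge 0$, that is, $p(e)\le w(e)$. The element $f$ removed in Line~12 leaves $X$. Prices change only in Line~13, and only for $f$, which is no longer in $X$ after that step. So every element of $X$ keeps $p(e)\le w(e)$.

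For Property~4, which is the only step requiring any care, note that $p(f)$ increases only when $f$ is removed in Line~12. At that moment $f\in X+e$. Every element of $X$ satisfies $p\le w$ by Property~3, and $e$ satisfies $p(e)\le w(e)$ by the test in Line~7. Hence $p(f)\le w(f)$ before the update and $p(f)\le(1+\varepsilon)w(f)$ after it.

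I expect no real obstacle. The only subtlety is to use the Line~7 condition $q(e)\ge 0$ in place of $p(e)\le 1$, and to record the invariant $q=w-p$ so that this translation is justified.
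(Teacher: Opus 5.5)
Your proposal is correct and takes the same approach as the paper. The paper justifies this observation by reusing the argument of Observation~\ref{obs:easyobsrank} for Algorithm~\ref{alg:ranklinearweight}. You do exactly that: the Line~7 test $q(e)\ge 0$, read through the invariant $q=w-p$, replaces $p(e)\le 1$, and the nonnegative increment $\varepsilon w(f)$ replaces $\varepsilon$.
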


\begin{lemma}\label{lem:numberofrankqueriesweight}
    Algorithm~\ref{alg:ranklinearweight} uses $O({n \log n}/{\varepsilon})$ rank queries.
\end{lemma}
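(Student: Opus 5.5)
The plan is to mirror the proof of Lemma~\ref{lem:numberofrankqueries}: bound the number of while-loop iterations by $O(n/\varepsilon)$, and then bound the queries per iteration by $O(\log n)$ using Lemma~\ref{lem:binary}.

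For the iteration count, every iteration that does not return executes exactly one of two branches, Line 9 (pure addition) or Lines 11--14 (exchange with a price increase).
\begin{itemize}
\item The addition branch strictly increases $|X|$. By Properties 1 and 2 of Observation~\ref{obs:easyobsrankweight}, $|X|$ is non-decreasing and never exceeds $\opt$, so this branch runs at most $\opt \le n$ times.
\item For the exchange branch, the aggregate potential $p(E)$ is not useful, because its increments $\varepsilon w(f)$ depend on the weights. Instead, count increases per element. Each execution of Line 13 raises $p(f)$ by exactly $\varepsilon w(f)$, and $w(f) > 0$ (we may discard zero-weight elements, which never affect the objective). By Properties 4 and 5 of Observation~\ref{obs:easyobsrankweight}, $p(f)$ starts at $0$, never decreases, and stays at most $(1+\varepsilon)w(f)$. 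Hence the price of a fixed element $f$ is raised at most $(1+\varepsilon)/\varepsilon$ times.
\end{itemize}
Summing over $f \in E$, the exchange branch runs at most $(1+\varepsilon)n/\varepsilon$ times. Adding the single terminating iteration, the loop runs $O(n/\varepsilon)$ times in total.

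The only point needing care is Property 4 itself. It follows as in Observation~\ref{obs:easyobsrank}. An element $f$ has its price raised only when $f \in X+e$. Elements of $X$ satisfy $p \le w$ by Property 3, and the newly proposed $e$ satisfies $q(e) \ge 0$, that is, $p(e) \le w(e)$. So after the increase, $p(f) \le (1+\varepsilon)w(f)$. I take Observation~\ref{obs:easyobsrankweight} as given, so this step is immediate.

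For the per-iteration cost:
\begin{itemize}
\item Line 6 asks for an element $e \in E\setminus X$ maximizing $q(e)$ subject to $X+e\in\I_1$. This is Lemma~\ref{lem:binary}~(2) applied with $S = X$ and weight $w' = -q$, costing $O(\log n)$ rank queries to $\M_1$.
\item Line 8 is a single independence test.
\item Line 11 is Lemma~\ref{lem:binary}~(1) with $S=T=X$, $v=e$ and weight $p$, costing $O(\log \opt)$ independence queries to $\M_2$.
\end{itemize}
Since rank queries simulate independence queries, each iteration costs $O(\log n)$ rank queries. Multiplying by the $O(n/\varepsilon)$ iterations gives the claimed $O(n\log n/\varepsilon)$ bound.

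I expect no real obstacle. The one step that differs from the unweighted proof is replacing the global potential argument with the per-element count of price increases, which makes the bound independent of the weights.
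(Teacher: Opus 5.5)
Your proof is correct and is the adaptation of the proof of Lemma~\ref{lem:numberofrankqueries} that the paper intends; the paper gives no separate proof and only says the same argument applies. The count is at most $\opt$ additions, at most $(1+\varepsilon)/\varepsilon$ price increases per element by Properties 4 and 5 of Observation~\ref{obs:easyobsrankweight} (equivalently, the normalized potential $\sum_{e} p(e)/w(e)$ replaces $p(E)$), and $O(\log n)$ rank queries per iteration via Lemma~\ref{lem:binary}. Your explicit requirement $w(f)>0$ is a worthwhile detail: with zero weights the prices can stall and the loop need not terminate, which the paper avoids by assuming positive weights.
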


\begin{lemma}\label{lem:propertyrank1weight}
At the beginning of each iteration of the while-loop in Algorithm~\ref{alg:ranklinearweight}, 
$X$ is contained in a $q$-maximum base of $\M_1$.  
\end{lemma}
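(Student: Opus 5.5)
The proof will be the induction used for Lemma~\ref{lem:propertyrank1}, rephrased for $q$-maximum bases. It is convenient to note that a $q$-maximum base is the same as a $(-q)$-minimum base. At the start $X=\emptyset$, so the invariant holds trivially. For the inductive step, assume that at the beginning of some iteration $X \subseteq Z$ for a $q$-maximum base $Z$ of $\M_1$. The iteration can change the state in two ways. It replaces $X$ by $X+e$ or by $X+e-f$, where $e$ is chosen in Line~6. In the second case it also changes $q(f)$ to $q(f)-\varepsilon w(f)$. I handle these two effects one at a time.

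First I will prove the analogue of the claim inside Lemma~\ref{lem:propertyrank1}: if $e \neq \bot$ is chosen in Line~6, then $X+e$ is contained in a $q$-maximum base of $\M_1$. If $e\in Z$ there is nothing to show. Otherwise $Z+e\notin\I_1$. Because $X+e\in\I_1$ while the circuit $C_1(Z\mid e)$ is dependent, we can pick $e'\in C_1(Z\mid e)\setminus(X+e)$. By Observation~\ref{obs:fundamentalcircuit}, $Z' := Z+e-e'$ is a base. Also $X+e'\subseteq Z\in\I_1$ and $e'\notin X$, so $e'$ was a valid candidate in Line~6. Since $e$ maximizes $q$ over the candidates, $q(e)\ge q(e')$, which gives $q(Z')\ge q(Z)$. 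Hence $Z'$ is $q$-maximum and contains $X+e$. Any subset of $X+e$, in particular $X+e-f$, is then also contained in a $q$-maximum base. Note that these set changes happen before the price update, so they are measured with the old $q$.

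Second, I will deal with the update $q(f)\gets q(f)-\varepsilon w(f)$. After Line~12, $f\notin X$. The update only decreases $q$, because $w(f)\ge 0$. I apply Lemma~\ref{lem:gsproperty} to the vectors $-q$ and $-q'$ with $S := X$. The new vector $-q'$ is coordinatewise at least $-q$ and agrees with it on $X$. The lemma says that $X$, which lies in a $(-q)$-minimum base, also lies in a $(-q')$-minimum base, that is, in a $q'$-maximum base. Putting the two steps together, the invariant holds at the beginning of the next iteration, and induction finishes the proof.

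I do not expect a real obstacle here. The one point to check carefully is the order of events inside the iteration. The exchange argument has to use the value of $q$ from before the update. The gross-substitute step then needs $f\notin X$, which holds after Line~12 whether or not $f=e$.
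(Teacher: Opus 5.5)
Your proof is correct and takes the same route as the paper. The paper proves this lemma ``by the same argument as'' Lemma~\ref{lem:propertyrank1}: an exchange along the fundamental circuit $C_1(Z \mid e)$ for the element chosen in Line~6, followed by Lemma~\ref{lem:gsproperty} (here applied to $-q$) for the update of $q(f)$ with $f \notin X$. You carry out exactly this adaptation, and your handling of the order of events within the iteration is right.
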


\begin{lemma}
\label{lem:propertyrank2weight}
At the beginning of each iteration of the while-loop in Algorithm~\ref{alg:ranklinearweight}, 
for any $e' \in E \setminus X$ with $p(e') \ge \varepsilon w(e')$, 
we have that 
\begin{itemize}
\item 
$X+e' \not\in \I_2$, and 
\item 
$p(e') \le p(f') + \varepsilon w(e')$ for any $f' \in C_2(X \mid e')$.
\end{itemize}
\end{lemma}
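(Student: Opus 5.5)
We mirror the proof of Lemma~\ref{lem:propertyrank2}, replacing the additive step $\varepsilon$ by $\varepsilon w(f)$. As there, for an iteration $t$ we write $X^t$, $p^t$, $e^t$, $f^t$ for the current set, the price vector at the beginning of the iteration, and the chosen elements, where $f^t$ is undefined if $X^{t+1}=X^t+e^t$. Fix an iteration $t^*$ and an element $e'\in E\setminus X^{t^*}$ with $p^{t^*}(e')\ge \varepsilon w(e')$. Since prices start at $0$ and only increase in Line~13, the price of $e'$ has been raised at least once. Let $t<t^*$ be the last iteration with $f^t=e'$. We then prove by induction on $\tau=t+1,\dots,t^*$ that (i) $X^\tau+e'\notin\I_2$, and (ii) $p^\tau(e')\le p^\tau(f')+\varepsilon w(e')$ for all $f'\in C_2(X^\tau\mid e')$.

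\emph{Base case $\tau=t+1$.} Since $f^t=e'$ minimizes $p^t$ over $C_2(X^t\mid e^t)$, we have $p^t(e')\le p^t(f')$ for every $f'$ in this circuit. After Line~13,
\[
p^{t+1}(e')=p^t(e')+\varepsilon w(e')\le p^{t+1}(f')+\varepsilon w(e'),
\]
using the monotonicity of prices (Observation~\ref{obs:easyobsrankweight}). Moreover $X^t+e^t=X^{t+1}+e'$, so $C_2(X^{t+1}\mid e')=C_2(X^t\mid e^t)$, which gives (i) and (ii). The crucial point is that the slack added is $\varepsilon w(f^t)=\varepsilon w(e')$, which depends only on $e'$. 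This is exactly why the statement uses $\varepsilon w(e')$ rather than $\varepsilon w(f')$.

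\emph{Inductive step.} This repeats Claim~\ref{clm:37} verbatim, because the slack term $\varepsilon w(e')$ is a constant throughout and $p(e')$ does not change for $\tau\in(t,t^*)$ by the maximality of $t$. There are three cases.
\begin{itemize}
\item If $X^{\tau+1}=X^\tau+e^\tau$, the circuit and all prices are unchanged.
\item If $f^\tau\notin C_2(X^\tau\mid e')$, the circuit is unchanged, and any price changes only increase the right-hand side.
\item If $f^\tau\in C_2(X^\tau\mid e')$, then $e^\tau\ne e'$ because $e'\notin X^{t^*}$ and $t$ is maximal. Circuit elimination in $\M_2$ applied to $C_2(X^\tau\mid e')$ and $C_2(X^\tau\mid e^\tau)$ gives a circuit inside $X^{\tau+1}+e'$, which establishes (i) and identifies $C_2(X^{\tau+1}\mid e')$ as a subset of the union minus $f^\tau$. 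For (ii), combine $p^\tau(e')\le p^\tau(f^\tau)+\varepsilon w(e')$ with the minimality of $p^\tau(f^\tau)$ on $C_2(X^\tau\mid e^\tau)$. This bounds $p^\tau(e')-\varepsilon w(e')$ by the price of every element of the union, and monotonicity of prices transfers the bound to $\tau+1$.
\end{itemize}

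The only point needing care, and the one I expect to be the main obstacle, is the base case. We must confirm that the inequality uses $w(e')$ and not $w(f')$. We must also check that the choice of $f$ in Line~11 is still by minimum $p$ rather than by $q$; it is, so the argument goes through unchanged. Everything else is a direct substitution of $\varepsilon w(e')$ for $\varepsilon$ in the proof of Lemma~\ref{lem:propertyrank2}.
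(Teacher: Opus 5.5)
Your proposal is correct and is essentially the paper's proof. The paper establishes this lemma ``by the same argument as Lemma~\ref{lem:propertyrank2}'', and your observation is exactly what makes that transfer work: the increment at the last removal of $e'$ is $\varepsilon w(f^t)=\varepsilon w(e')$, and $f$ is still chosen by minimum $p$. The only implicit ingredient is that the existence of $t$ uses $w(e')>0$, which holds because the paper assumes positive weights.
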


By using these lemmas, we can naturally extend Theorem~\ref{thm:mainrank} to Theorem~\ref{thm:mainrankweighted}, 
which we restate here. 

\mainrankweighted*

\begin{proof}
By Observation~\ref{obs:easyobsrankweight} and Lemma~\ref{lem:numberofrankqueriesweight}, 
Algorithm~\ref{alg:ranklinearweight} uses $O({n \log n}/{\varepsilon})$ rank queries and returns a common independent set $X \in \I_1 \cap \I_2$. 
Therefore, it suffices to show that the output $X$ of 
Algorithm~\ref{alg:ranklinearweight} satisfies $w(X) \ge (1-\varepsilon) w(\OPT_w)$, 
where $\OPT_{w}$ denotes a common independent set of maximum weight. 

By Lemma~\ref{lem:propertyrank1weight}, 
there exists a $q$-maximum base $\hat X$ of $\M_1$ that contains $X$. 
By Observation~\ref{obs:easyobsrankweight} and by the stopping condition of the algorithm, 
we have that 
\[
X = \{e \in \hat X \mid q(e) \ge 0\}. 
\]
Therefore, $X$ is a $q$-maximum independent set of $\M_1$, which implies that $q(X) \ge q(\OPT_w)$. 

Let
\[
p'(e) = 
\begin{cases}
p(e) & \mbox{for $e \in X$,} \\
p(e) - \varepsilon w(e) & \mbox{for $e \in E \setminus X$.}
\end{cases}
\]
We now observe the following: 
\begin{itemize}
    \item 
    For $e \in X$, we obtain $p'(e) = p(e) \ge 0$. 
    \item 
    For $e \in E \setminus X$ with $X + e \in \I_2$, 
    we obtain $p(e) < \varepsilon w(e)$ by Lemma~\ref{lem:propertyrank2weight}, 
    implying that $p'(e) < 0$. 
    \item 
    For $e \in E \setminus X$ and $f \in X$ such that $X + e \in \I_2$ (which implies that $X+e-f \in \I_2$), 
    we obtain $p'(e) < 0 \le p'(f)$ by combining the above two cases.
    \item 
    For $e \in E \setminus X$ and $f \in X$ such that $X + e \not\in \I_2$ and $X + e - f \in \I_2$,
    we obtain $p(e) \le p(f) + \varepsilon w(e)$
    by Observation~\ref{obs:fundamentalcircuit} and Lemma~\ref{lem:propertyrank2weight}, implying that  
    $p'(e) = p(e) - \varepsilon w(e) \le p(f) = p'(f)$.
\end{itemize}
These observations show that $X$ is a $p'$-maximum independent set of $\M_2$ by Lemma~\ref{lem:charaminind}. 
Since $\OPT_w \in \I_2$, we obtain $p'(X) \ge p'(\OPT_w)$. 

Therefore, 
\begin{align*}
w(X) &= \sum_{e \in X} (w(e) - p(e)) + \sum_{e \in X} p(e) \\
&=  q(X) + p'(X) \\
&\ge q(\OPT_w) + p' (\OPT_w) \\
&\ge \sum_{e \in {\OPT_w}} (w(e) - p(e)) + \sum_{e \in {\OPT_w}} (p(e) - \varepsilon w(e)) \\
&= (1- \varepsilon) w(\OPT_w), 
\end{align*}
which completes the proof. 
\end{proof}

\section{Semi-Streaming Algorithm}
\label{sec:semistreaming}

\subsection{Algorithm Description}

In this section, we present another auction-based algorithm and prove Theorem~\ref{thm:mainstreaming}. 
In the algorithm, we maintain a common independent set $X \in \I_1 \cap \I_2$ and a nonnegative value
$\pi(e) \in \mathbb{R}_{+}$ for each $e \in X$. 
For notational convenience, let $\pi(e) = \bot$ for $e \in E \setminus X$. 
Then, we need $O(|X|)$ space
to keep the information of $X$ and $\pi$, where we note that $|X| \le \opt$ 
and each value $\pi(e)$ is represented with $O(1)$ space as we will see in Observation~\ref{obs:pricerange}\footnote{More precisely, Observation~\ref{obs:pricerange} shows that each $\pi(e)$ can be represented using $O(\log (1/\varepsilon))$ bits, 
which is $O(\log n)$ bits because we have assumed that $1/\varepsilon \le n+1$.
If we measure space in terms of the number of words, where each word consists of $O(\log n)$ bits, then
we can say that $O(|X|)$ space is required to store the information of $X$ and $\pi$.
}.

For $X$ and $\pi$, we define the {\em price vector $p \in \mathbb{R}_+^E$ induced by $X$ and $\pi$} as follows: 
\[
p(e) = 
\begin{cases}
\pi(e) & \mbox{if $e \in X$,} \\
\varepsilon & \mbox{if $e \in E \setminus X$ and $X + e \in \I_2$,} \\
\min \{ \pi(f) \mid f \in C_{2}(X \mid e) - e \} + \varepsilon & \mbox{if $e \in E \setminus X$ and $X + e \not\in \I_2$.}
\end{cases}
\] 
Note that, 
for $e \in E \setminus X$ with $X + e \not\in \I_2$, 
there always exists an element in $C_{2}(X \mid e) - e$, 
because we have assumed that $\M_2$ has no loop. 
Throughout the section, 
we omit ``induced by $X$ and $\pi$'' if no confusion may arise.  
The algorithm does not maintain the value of $p(e)$ for each $e \in E$ explicitly. 
It only maintains $X$ and $\pi$, and computes the value of $p(e)$ for $e \in E$ whenever it is needed. 
Note that it can be computed with $O(|X|)$ space in a straightforward way (see Lemma~\ref{lem:computeprice} for an efficient algorithm).

Our algorithm is presented in Algorithm~\ref{alg:matroidint1}, and its implementation details will be refined in Section~\ref{sec:implement} later.
The proposed algorithm combines Algorithm~\ref{alg:ranklinear}, inspired by Fleiner's generalized stable matching framework~\cite{Fleiner03}, 
and a semi-streaming algorithm for the bipartite matching problem~\cite{assadi2021auction}.  
It proceeds similarly to Algorithm~\ref{alg:ranklinear} in that it can be viewed as an auction mechanism between Alice and Bob. 
Indeed, $X \in \I_1 \cap \I_2$ represents the items tentatively agreed upon for trade between Alice and Bob, and the price $p$ increases monotonically; see Lemma~\ref{lem:pricemonotone}.
However, there are a few key differences.  

The first difference is that we maintain only the price $\pi(e)$ for each element $e \in X$ in order to reduce space complexity.  
For each $e \in E \setminus X$, $p(e)$ is defined as the price that motivates Bob to sell $e$, possibly by removing some element from $X$.  

The second difference lies in the iteration step (Lines~3--14), which we call a {\em round}.  
In each round, let $X^*$ and $p^*$ denote the set of items and the price vector at the beginning of the round. 
Then Alice aims to purchase additional items at prices $p^*$.
Specifically, Alice proposes to Bob that $X^*$ be replaced with a superset $Y \supseteq X^*$, rather than adding a single element.  
Bob then removes some elements so that the resulting set $X$ belongs to $\I_2$.  
In fact, this process is executed sequentially as elements of $E \setminus X^*$ arrive one by one.

The third difference is that the algorithm repeats this process only ${2}/{\varepsilon^2}$ times, that is, a constant number of rounds, 
whereas Algorithm~\ref{alg:ranklinear} repeats the iterations exhaustively.

\begin{algorithm}
    \caption{Semi-Streaming Algorithm for Matroid Intersection}\label{alg:matroidint1}
    \KwInput{Two matroids $\M_1 = (E, \I_1)$ and $\M_2 = (E, \I_2)$}
    \KwOutput{A common independent set $X \in \I_1 \cap \I_2$} 
    Let $X \gets \emptyset$ ; \\
    \For{$t \leftarrow 1$ \KwTo ${2}/{\varepsilon^2}$}{    
  	$X^*, Y \gets X$ ; \\
  	$\pi^*(e) \gets \pi(e)$ for each $e \in E$ ; \\ 
	\tcp{We keep explicitly $\pi^*(e)$ only for $e \in X^*$, as $\pi^*(e) = \bot$ for $e \in E \setminus X^*$}
	\tcp{Let $p^*$ denote the price vector induced by $X^*$ and $\pi^*$}
    \ForEach{$e \in E \setminus X^*$}{       
          \If{$p(e) = p^*(e) \le 1$ and $Y + e$ is contained in a $p^*$-minimum base of $\M_1$} {
	           $Y \gets Y + e$ ; \\ 
	           $\pi(e) \gets p(e)$ ; \\ 
	           \If{$X+e \in \I_2$}{
	               $X \gets X + e$ ; \\ 
            	}
	           \Else{
            	   Find $f \in C_{2}(X \mid e) - e$ minimizing $\pi(f)$ ; \\
                  $X \gets X + e - f$ ; \\              
	               $\pi(f) \gets \bot$ ; \\
            	}
        	\tcp{The price vector $p$ induced by $X$ and $\pi$ is updated implicitly}
        }
    }
    }
    \Return $X$ ;
\end{algorithm}

We first show that Algorithm~\ref{alg:matroidint1} returns a common independent set $X \in \I_1 \cap \I_2$ whose size is close to $\opt$. 
We then discuss the query complexity and the number of streaming passes in Section~\ref{sec:implement} after refining the implementation of the algorithm.

\subsection{Useful Properties}

To prove that 
the size of the output $X$ of Algorithm~\ref{alg:matroidint1} is close to $\opt$,  
we show several properties that are maintained throughout the execution of the algorithm.

\begin{observation}\label{obs:pricerange}
Throughout the execution of Algorithm~\ref{alg:matroidint1}, we have the following properties:
\begin{enumerate}
    \item $X \in \I_2$, 
    \item $|X|$ is monotonically non-decreasing. 
    \item For each $e \in X$, 
    $\pi(e) = i \varepsilon$ for some $i \in \{1, 2, \dots , \frac{1}{\varepsilon}\}$. 
    \item For each $e \in E$, 
    $p(e) = i \varepsilon$ for some $i \in \{1, 2, \dots , \frac{1}{\varepsilon} + 1\}$. 
\end{enumerate}
\end{observation}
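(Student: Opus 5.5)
The plan is a single induction over the elementary updates of Algorithm~\ref{alg:matroidint1}, that is, the individual executions of Lines~7--14. Properties~1--3 are treated as a joint invariant, and Property~4 is then read off from Property~3 together with the definition of the induced price vector. Initially $X=\emptyset$, so Properties~1 and~3 hold vacuously. It remains to check that each update preserves them.

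For Properties~1 and~2, consider an update triggered by $e\in E\setminus X^*$. If $X+e\in\I_2$, then $X$ grows by one element and stays in $\I_2$. Otherwise we replace $X$ by $X+e-f$ with $f\in C_2(X\mid e)-e$. Then $X+e-f\in\I_2$ by Observation~\ref{obs:fundamentalcircuit}, and $|X|$ is unchanged.

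One point needs care: in Line~13 we need $e\notin X$ at the moment of the update. This holds because $e\notin X^*$, each element of $E\setminus X^*$ is processed once per round, and elements enter $X$ only when they are processed. So $e$ cannot already have entered $X$ during the current round. Also, $C_2(X\mid e)-e\neq\emptyset$ because $\M_2$ has no loops, so $f$ exists.

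For Property~3, the only place a new value $\pi(e)$ is created is Line~8, where $\pi(e)\gets p(e)$ with $p(e)\le 1$ by the test in Line~7. Removed elements get $\pi=\bot$, and no other value of $\pi$ on $X$ is ever changed. So it suffices to show that $p(e)$ always lies in $\{\varepsilon,2\varepsilon,\dots,1/\varepsilon\cdot\varepsilon+\varepsilon\}$ whenever the current invariant holds. This is Property~4 for the current $X,\pi$, and it is immediate from the definition of the induced price:
\begin{itemize}
\item for $e\in X$, $p(e)=\pi(e)$, which is a multiple $i\varepsilon$ with $1\le i\le 1/\varepsilon$ by the invariant;
\item for $e\notin X$ with $X+e\in\I_2$, $p(e)=\varepsilon$;
\item otherwise $p(e)=\pi(f)+\varepsilon$ for some $f\in X$, hence $p(e)=(i+1)\varepsilon$ with $1\le i\le 1/\varepsilon$.
\end{itemize}
Combining this with $p(e)\le 1$ in Line~7, the new value $\pi(e)=i\varepsilon$ satisfies $1\le i\le 1/\varepsilon$, using that $1/\varepsilon$ is an integer. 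Hence Property~3 is preserved.

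Property~4 then holds at every moment, since it depends only on the current $X$ and $\pi$ and follows from Property~3 by the case analysis above. The only mildly delicate step is the well-definedness point noted under Properties~1 and~2: that $e\notin X$ when it is processed and that $f$ exists. Everything else is bookkeeping.
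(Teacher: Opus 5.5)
Your proof is correct and follows essentially the paper's argument. Both run an induction over the updates. Properties~1--2 come from the update rules and Observation~\ref{obs:fundamentalcircuit}. Property~3 is preserved because each newly created value is $\pi(e)=p(e)\le 1$, and $p(e)$ is already a positive multiple of $\varepsilon$. Property~4 is read off from Property~3 via the definition of the induced price.

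Your extra checks that $e\notin X$ when it is processed and that $f$ exists (since $\M_2$ has no loops) just make explicit what the paper leaves implicit. One minor slip: the $p(e)\le 1$ test is Line~6 of Algorithm~\ref{alg:matroidint1}, not Line~7.
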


\begin{proof}
    We see that Property 1 is preserved in Lines 10 and 13 by Observation~\ref{obs:fundamentalcircuit}. Property 2 is obvious. 
    At the beginning of the algorithm, $X=\emptyset$ and $p(e) = \varepsilon$ for each $e \in E$, since $\M_2$ has no loops, 
    which means that Properties 3 and 4 hold. 
    When $\pi(e)$ is updated in Lines 8 and 14, $\pi(e) := p(e) \le 1$, and hence Property 3 is preserved. 
    Then Property 3 implies Property 4 by the definition of $p$. 
\end{proof}

By the first property in this observation, $C_2(X \mid e)$ is well-defined in Line 12. 
Furthermore, in Line 12 there always exists an element $f \in C_{2}(X \mid e) - e$, 
because we have assumed that $\M_2$ has no loops.
Note that Line 12 is equivalent to finding $f \in C_{2}(X \mid e)$ that minimizes $p(f)$, by the definition of $p$. 

\begin{observation}\label{obs:42}
    Let $X \in \I_2$, $e \in E \setminus X$, and $f \in X$ be such that $X + e -f \in \I_2$. 
    Then, $p(e) \le \pi(f) + \varepsilon$.
\end{observation}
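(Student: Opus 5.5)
The argument is a direct case analysis on the definition of the induced price vector $p$ at $e$. Throughout, $p$ is induced by $X$ and $\pi$, and $\pi(f)\ge 0$ is a genuine value because $f\in X$. Since $e\notin X$, only the second and third branches of the definition of $p(e)$ can apply.

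\emph{Case $X+e\in\I_2$.} Here $p(e)=\varepsilon$ by definition. Since $\pi(f)\ge 0$ for every $f\in X$ (indeed $\pi(f)\ge\varepsilon$ by Observation~\ref{obs:pricerange}), we immediately get $p(e)=\varepsilon\le \pi(f)+\varepsilon$.

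\emph{Case $X+e\notin\I_2$.} Here the fundamental circuit $C_2(X\mid e)$ is defined, and
\[
p(e)=\min\{\pi(g)\mid g\in C_2(X\mid e)-e\}+\varepsilon .
\]
By hypothesis $X+e-f\in\I_2$ with $f\in X+e$. Observation~\ref{obs:fundamentalcircuit} therefore gives $f\in C_2(X\mid e)$, and since $f\in X$ we have $f\neq e$, so $f\in C_2(X\mid e)-e$. The minimum is thus at most $\pi(f)$, which yields $p(e)\le \pi(f)+\varepsilon$.

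There is no real obstacle here. The only point needing care is the appeal to Observation~\ref{obs:fundamentalcircuit}, which requires $X\in\I_2$ and $X+e\notin\I_2$; both hold in this case. One should also note that $f\ne e$ is automatic because $f\in X$ while $e\notin X$, so $f$ is a legitimate candidate in the minimum.
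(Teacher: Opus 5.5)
Your proof is correct and follows the paper's argument essentially verbatim: split on whether $X+e\in\I_2$, use $p(e)=\varepsilon$ together with $\pi(f)\ge 0$ in the first case, and in the second case use Observation~\ref{obs:fundamentalcircuit} to place $f$ in $C_2(X\mid e)-e$, which bounds the minimum in the definition of $p(e)$. The nonnegativity $\pi(f)\ge 0$ already follows from $\pi$ taking values in $\mathbb{R}_+$ on $X$, so the appeal to Observation~\ref{obs:pricerange} is not needed.
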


\begin{proof}
Since $X + e - f \in \I_2$, 
we obtain either $X + e \in \I_2$ or $f \in C_2(X \mid e) - e$ by Observation~\ref{obs:fundamentalcircuit}. 
In the former case, $p(e) = \varepsilon \le \pi(f) + \varepsilon$.
In the latter case, $p(e) \le \pi(f) + \varepsilon$ holds by the definition of $p(e)$.    
\end{proof}

\begin{lemma}\label{lem:pricemonotone}
For each $e' \in E$, $p(e')$ is monotonically non-decreasing. 
\end{lemma}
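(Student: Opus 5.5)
The only state changes that can affect $p$ are the updates of $X$ and $\pi$ in Lines 7--14, so I will show that each such update leaves $p(e')$ unchanged or increases it, for every $e'\in E$. Note that changing $Y$ has no effect on $p$, and Line 8 sets $\pi(e):=p(e)$ with $e\notin X$ at that moment, so that assignment matters only through the subsequent change of $X$. An update therefore has one of two forms. Either $X \to X+e$ with $X+e\in\I_2$, or $X\to X':=X+e-f$ with $f\in C_2(X\mid e)-e$ minimizing $\pi$. In both cases $\pi(e)$ is set to $p(e)$, which is the old induced price.

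\emph{Case $e'=e$.} The new value is $\pi(e)=p(e)$, so $p(e)$ is unchanged. \emph{Case $e'\in X$, $e'\neq f$.} Here $\pi(e')$ is untouched, so $p(e')$ is unchanged. \emph{Case $e'=f$.} The old value is $\pi(f)$. In $X'$ we have $X'+f=X+e\notin\I_2$, and $C_2(X'\mid f)=C_2(X\mid e)$. The new price is therefore $\min\{\pi'(g)\mid g\in C_2(X\mid e)-f\}+\varepsilon$. On this set, $\pi'(e)=p(e)=\pi(f)+\varepsilon$ by the choice of $f$, and every other element has $\pi'(g)=\pi(g)\ge\pi(f)$. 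Hence the new price is at least $\pi(f)+\varepsilon>\pi(f)$.

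\emph{Case $e'\notin X+e$.} This is the main case. Let $p'$ denote the new price of $e'$. If $X'+e'\in\I_2$ (or $X+e+e'\in\I_2$ in the addition case), then $X+e'\in\I_2$. So the old price was $\varepsilon$, and the new price is $p'(e')=\varepsilon$ as well. Otherwise let $C'=C_2(X'\mid e')$, where $X'$ is $X+e$ or $X+e-f$. It suffices to show $\pi'(g)+\varepsilon\ge p(e')$ for each $g\in C'-e'$. For $g\neq e$, $g\in X$ and $X+e'-g\in \I_2$ would follow if $C'$ were the old circuit. The cleaner route is to use Observation~\ref{obs:42} with the right independent set. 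Since $X'+e'-g\in\I_2$ and $X'\in\I_2$, I want $p(e')\le \pi(g)+\varepsilon$ under the old data.

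If $C'\subseteq X+e'$, then $C'=C_2(X\mid e')$ and the bound is immediate from the definition. If $C'$ contains $e$, I use circuit elimination as in Claim~\ref{clm:37}. When $X+e'\notin\I_2$, $C'$ lies in $C_2(X\mid e')\cup C_2(X\mid e)$. The elements of $C_2(X\mid e')$ have $\pi\ge p(e')-\varepsilon$ by definition. The elements $g\in C_2(X\mid e)-e$ satisfy $\pi(g)\ge\pi(f)$, where $f$ is the minimizer. Element $e$ itself has $\pi'(e)=p(e)$. So I need $p(e')\le p(e)$ and $p(e')\le \pi(f)+\varepsilon$ in the relevant subcase. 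The case where $X+e'\in\I_2$ but $X+e+e'\notin\I_2$ is trivial, since the old price was $\varepsilon$.

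The main obstacle is the subcase where $f\notin C_2(X\mid e')$ yet the new circuit passes through $e$. Here I would apply strong circuit elimination on the two circuits $C_2(X\mid e')$ and $C_2(X'\mid e')$, eliminating $e$ or $f$, to show that $C_2(X\mid e')$ is still present and so $C'=C_2(X\mid e')$. Whenever instead $f\in C_2(X\mid e')$, I get $p(e')\le\pi(f)+\varepsilon=p(e)$, which controls the $e$ term and completes the case analysis.
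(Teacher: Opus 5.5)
Your argument is correct in substance. For the cases $e'=e$, $e'\in X-f$, $e'=f$, and $X'+e'\in\I_2$ it coincides with the paper's.

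In the main case ($e'\notin X+e$, $X'+e'\notin\I_2$, replacement step) you take a different route.
\begin{itemize}
\item The paper fixes a minimizer $f'$ of $\pi'$ on $C_2(X'\mid e')-e'$ and splits on $f'=e$ versus $f'\neq e$. In the second case it applies the simultaneous exchange property to $X$ and $X+e-f+e'-f'$ (in the truncation to rank $|X|$). Every bound is thereby reduced to Observation~\ref{obs:42}.
\item You instead locate the whole new circuit $C'$. If $f\notin C_2(X\mid e')$, then $C'=C_2(X\mid e')$ and the price does not change. If $f\in C_2(X\mid e')$, eliminating $f$ gives $C'\subseteq (C_2(X\mid e')\cup C_2(X\mid e))-f$. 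The elementwise bounds then follow from $p(e')\le\pi(f)+\varepsilon$ and the minimality of $\pi(f)$.
\end{itemize}
Your mechanism is exactly that of Claim~\ref{clm:37}. So your route makes this lemma a near-copy of the argument for Lemma~\ref{lem:propertyrank2} and avoids the exchange step. The paper's route needs fewer sub-cases, because Observation~\ref{obs:42} absorbs the distinction between $X+e'\in\I_2$ and $f\in C_2(X\mid e')$.

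Your last paragraph, however, misidentifies the difficulty. The subcase ``$f\notin C_2(X\mid e')$ but $e\in C'$'' cannot occur. Indeed $C_2(X\mid e')\subseteq X-f+e'\subseteq X'+e'$, and since $X'\in\I_2$ the set $X'+e'$ contains a unique circuit. Hence $C'=C_2(X\mid e')$, which does not contain $e$.

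The tool you propose for that subcase would not work as stated. Strong circuit elimination needs an element common to both circuits, and neither $e$ nor $f$ lies in both $C_2(X\mid e')$ and $C_2(X'\mid e')$. The uniqueness argument above is all that is needed.

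Two small steps should also be justified:
\begin{itemize}
\item In the replacement case, $X'+e'\in\I_2$ implies $X+e'\in\I_2$ only via augmentation from $X$ to $X'+e'$. The augmenting element lies in $\{e,e'\}$, and it cannot be $e$ because $X+e\notin\I_2$.
\item In the addition case $C_2(X\mid e)$ is undefined. There, $X+e'\notin\I_2$ already forces $C'=C_2(X\mid e')$ by the same uniqueness argument.
\end{itemize}
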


\begin{proof}
Suppose that Lines 7--14 are executed for $e \in E \setminus X^*$. 
In this procedure, suppose that 
the objects $X$, $\pi$, and $p$ are updated to 
$X'$, $\pi'$, and $p'$, respectively. 
For $e' \in E$, we prove $p' (e')  \ge p (e')$ by considering the following cases separately. 

\medskip

\noindent
\textbf{Case 1.} Suppose that $X' = X + e$. 
\begin{enumerate}
\item
If $e' \in X$, then $p' (e') = \pi'(e') = \pi(e') = p(e')$. 
\item
If $e' = e$, then $p' (e') = \pi'(e') = p(e')$.  
\item
If $e' \in E \setminus X'$ and $X' + e' \in \I_2$, 
then $p' (e') = \varepsilon = p(e')$, 
where we note that $X + e' \in \I_2$. 
\item
Suppose that $e' \in E \setminus X'$ and $X' + e' \not\in \I_2$. 
Let $f'$ be a minimizer of $\pi'(f')$ subject to $f' \in X'$ and $X' + e' - f' \in \I_2$ (i.e., $f' \in C_2(X' \mid e') - e'$). 
Then, $p' (e') = \pi' (f') + \varepsilon$. 
If $f'=e$, then we obtain $X + e'  \in \I_2$, and hence $p' (e') \ge \varepsilon = p(e')$. 
If $f' \neq e$, then we have that $f' \in X$ and $X + e' - f' \in \I_2$, and hence  
$p(e') \le \pi(f') + \varepsilon$ by Observation~\ref{obs:42}. 
Therefore, we obtain
\[
p' (e') = \pi'(f') + \varepsilon = \pi(f') + \varepsilon \ge p(e').
\]
\end{enumerate}

\noindent
\textbf{Case 2.} Suppose that $X' = X + e -f$, 
where $f$ is the element chosen in Line 12. 
Then, $p(e) = \pi(f) + \varepsilon$ by definition. 
Note that $X+ e \not\in \I_2$. 
\begin{enumerate}
\item
If $e' \in X - f$, then $p' (e') = \pi'(e') = \pi(e') = p(e')$. 
\item
If $e' = e$, then $p' (e') = \pi'(e') = p(e')$.  
\item
Suppose that $e' = f$. In this case, $X' + e' = X + e \not\in \I_2$. 
Let $f'$ be a minimizer of $\pi'(f')$ subject to $f' \in X'$ and $X' + e' - f' \in \I_2$ (i.e., $f' \in C_2(X' \mid e') - e'$). 
Then, $p' (e') = \pi' (f') + \varepsilon$. 
\begin{itemize}
    \item 
    If $f' = e$, then we obtain 
        \begin{align*}
        p' (e') &= \pi' (f') + \varepsilon = \pi' (e) + \varepsilon =  p(e) + \varepsilon \\ 
        &= \pi(f) + 2 \varepsilon = \pi(e') + 2 \varepsilon =  p(e') + 2 \varepsilon > p(e').
        \end{align*}
    \item 
    If $f'\neq e$, then we have $f' \in C_2(X \mid e) -e$, because $f' \in C_2(X' \mid e') = C_2(X \mid e)$. 
    Since $f$ minimizes $\pi(f)$ on $C_2(X \mid e) -e$, we obtain 
    $\pi(f) \le \pi(f')$, and hence  
    \[
        p' (e') = \pi' (f') + \varepsilon = \pi (f') + \varepsilon \ge  \pi(f) + \varepsilon = \pi(e') + \varepsilon 
        = p(e') + \varepsilon > p(e').
    \]
\end{itemize}
\item
Suppose that $e' \in (E \setminus X)-e$ and $X' + e' \in \I_2$. 
Since $|X' + e'| > |X|$, 
there exists $e'' \in (X' + e') \setminus X = \{e, e'\}$ such that $X + e'' \in \I_2$. 
Since $X + e \not\in \I_2$, we obtain $e'' = e'$ and $X + e' \in \I_2$. 
Therefore, $p' (e') = \varepsilon = p(e')$. 
\item
Suppose that $e' \in (E \setminus X)-e$ and $X' + e' \not\in \I_2$. 
Let $f'$ be a minimizer of $\pi'(f')$ subject to $\pi'(f') \in X'$ and $X' + e' - f' \in \I_2$ (i.e., $f' \in C_2(X' \mid e') - e'$). 
Then, $p' (e') = \pi' (f') + \varepsilon$. 
\begin{itemize}
\item
If $f'=e$, then we have $X + e' - f = X' + e' - f' \in \I_2$, and hence 
$\pi(f) + \varepsilon \ge p(e')$ by Observation~\ref{obs:42}. Therefore, we obtain
\[
p' (e') = \pi' (f') + \varepsilon  = \pi' (e) + \varepsilon =  p(e) + \varepsilon =  \pi(f)+2 \varepsilon \ge p(e') + \varepsilon > p(e'). 
\]

\item
If $f' \neq e$, then $e, f, e'$, and $f'$ are distinct elements, and 
$X + e - f + e' - f' = X' + e' - f' \in \I_2$.
By applying the simultaneous exchange property to $X$ and $X + e - f + e' - f'$ (in the truncation of $\M_2$ to rank $|X|$),  
we obtain (i) $X + e - f' \in \I_2$ and $X + e' - f \in \I_2$ or (ii) $X + e' - f' \in \I_2$. 
In the case (i), $X + e - f' \in \I_2$ implies $\pi(f') \ge \pi(f)$ by the choice of $f$, and 
$X + e' - f \in \I_2$ implies $\pi(f) + \varepsilon \ge p(e')$ by Observation~\ref{obs:42}. 
Therefore, we obtain 
\[
p' (e') = \pi' (f') + \varepsilon =  \pi(f') + \varepsilon \ge \pi(f) + \varepsilon \ge p(e'). 
\]
In the case (ii), since $X + e' - f' \in \I_2$ implies $\pi(f') + \varepsilon \ge p(e')$ by Observation~\ref{obs:42}, we obtain 
\[
p' (e') = \pi' (f') + \varepsilon =  \pi(f') + \varepsilon \ge p(e'). 
\]
\end{itemize}
\end{enumerate}
This completes the proof for $p' (e') \ge p(e')$. 
\end{proof}

\begin{lemma}\label{lem:minbase0}
The following invariants hold 
before and after each iteration of the inner for-loop.
\begin{enumerate}
    \item 
    $X$ is contained in a $p$-minimum base of $\M_1$. 
    \item 
    $Y$ is contained in a $p^*$-minimum base of $\M_1$. 
    \item 
    $X \cup X^* \subseteq Y$. 
    \item 
    $p(e) = \pi(e) = p^*(e)$ for any $e \in X$. 
\end{enumerate}
\end{lemma}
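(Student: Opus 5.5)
We prove the four invariants simultaneously by induction over the iterations of the inner for-loop, and across rounds. At the start of a round (Line~3--4) we have $X^*=Y=X$ and $\pi^*=\pi$, so $p^*=p$. Invariants 3 and 4 are then immediate, and invariant 2 coincides with invariant 1. Invariant 1 at the very beginning holds trivially ($X=\emptyset$). Invariant 1 at the start of a later round is inherited from the end of the previous round. So the real work is to show that one execution of Lines~6--14 for an element $e\in E\setminus X^*$ preserves all four invariants. If the condition in Line~6 fails nothing changes, so assume $p(e)=p^*(e)\le 1$ and $Y+e$ lies in a $p^*$-minimum base.

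\textbf{Invariants 2 and 3.} Invariant 2 holds directly by the test in Line~6, since $Y$ becomes $Y+e$. For invariant 3, $X$ becomes $X+e$ or $X+e-f$, both contained in $X+e\subseteq Y+e$, and $X^*$ is unchanged.

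\textbf{Invariant 4.} For $e$ itself, the new value $\pi(e)$ equals the old $p(e)=p^*(e)$. By Case~1.2/Case~2.2 in the proof of Lemma~\ref{lem:pricemonotone}, the new price satisfies $p'(e)=\pi'(e)$. For the other elements $e'\in X'-e$, $\pi$ is unchanged, so $p'(e')=\pi'(e')=\pi(e')=p^*(e')$ by induction.

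\textbf{Invariant 1 (main obstacle).} We must show that $X'$ is contained in a $p'$-minimum base of $\M_1$. The plan is to compare with $p^*$ via invariant 2 and then transfer using Lemma~\ref{lem:gsproperty}. By Lemma~\ref{lem:pricemonotone} applied throughout the round, $p^*\le p'$ pointwise. Moreover, on $X'\subseteq Y+e$ we have $p'=p^*$ by the new invariant 4. Now $X'\subseteq Y+e$, and $Y+e$ is contained in a $p^*$-minimum base; hence so is $X'$ (a subset of a set contained in a base). Apply Lemma~\ref{lem:gsproperty} with $S=X'$, $p:=p^*$, $p':=p'$; it requires $p^*(g)=p'(g)$ for all $g\in S$, which is exactly invariant 4. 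This yields that $X'$ is contained in a $p'$-minimum base.

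This argument in fact shows invariant 1 follows from invariants 2 and 4 together with the monotonicity of Lemma~\ref{lem:pricemonotone}, so no separate induction for invariant 1 is needed within a round. The only delicate point is checking that invariant 4 holds for the newly inserted $e$ with respect to $p^*$, which uses the Line~6 hypothesis $p(e)=p^*(e)$. We also need Lemma~\ref{lem:pricemonotone} to compare $p'$ with the round-start $p^*$, not just with the previous $p$; this follows by chaining the single-step monotonicity.
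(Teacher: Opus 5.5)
Your proposal is correct and follows essentially the paper's own proof. Invariants 2--4 come from the update rules, with the Line~6 condition $p(e)=p^*(e)$ handling the newly added element. Invariant 1 then follows from Lemma~\ref{lem:gsproperty}, using $p^*\le p$ (from Lemma~\ref{lem:pricemonotone}) and the equality $p=p^*$ on $X$ given by invariant 4.
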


\begin{proof}
     Obviously, $X=\emptyset$ satisfies the first invariant at the beginning of the algorithm. 
     Consider a fixed round and suppose that the first invariant holds just before this round.   
     Then, since $p^*$ and $Y$ are set to $p$ and $X$, respectively at the beginning of this round, 
     before the first iteration of the inner for-loop in this round, 
     all the invariants hold.  

     By the update rules for $X$, $Y$, and $\pi$, the second, third, and fourth invariants are preserved in this round. 
     This implies that $X$ is contained in a $p^*$-minimum base of $\M_1$.
     Since $p(e) = p^*(e)$ for $e \in X$ by the fourth invariant and $p(e) \ge p^*(e)$ for all $e \in E$ by Lemma~\ref{lem:pricemonotone},  
     it follows from Lemma~\ref{lem:gsproperty} that $X$ is contained in a $p$-minimum base of $\M_1$.  
     Thus, the first invariant holds in this round.     
\end{proof}

The following lemma is derived in a similar way to Theorem~\ref{thm:mainrank}.

\begin{lemma}\label{lem:activerank}
Before and after each iteration of the inner for-loop, it holds that 
\[
r_1 ( \{e \in E \mid p(e) \le 1 \}) \ge (1-\varepsilon) \opt.
\]
\end{lemma}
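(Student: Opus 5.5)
We mimic the weight-splitting proof of Theorem~\ref{thm:mainrank}, using the current $X$, $\pi$ and induced $p$. Let $A := \{e \in E \mid p(e) \le 1\}$. We define
\[
q_1(e) = 1 - p(e), \qquad q_2(e) = \begin{cases} p(e) & e \in X,\\ p(e)-\varepsilon & e \in E\setminus X. \end{cases}
\]
The plan is to show $q_2(X) \ge q_2(\OPT)$, and to bound $q_1(\OPT) \le r_1(A) - q_2(X)$, or more precisely $q_1(\OPT) \le \max\{q_1(S) \mid S \in \I_1\}$. Since $p(e)=\pi(e)\le 1$ on $X$, we have $X \subseteq A$.

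\textbf{The $\M_2$ side.} We check the conditions of Lemma~\ref{lem:charaminind} for $X$ in $\M_2$ with weight $q_2$. This is immediate from the definition of the induced price, which plays the role of Lemma~\ref{lem:propertyrank2}.
\begin{itemize}
\item For $e \in X$, $q_2(e) = \pi(e) \ge 0$.
\item If $X+e \in \I_2$, then $p(e) = \varepsilon$, so $q_2(e) = 0$. Here we need the non-strict version of condition (i), which Lemma~\ref{lem:charaminind} allows.
\item If $X+e-f \in \I_2$, then Observation~\ref{obs:42} gives $p(e) \le \pi(f)+\varepsilon$, i.e.\ $q_2(e) \le q_2(f)$.
\end{itemize}
Hence $X$ is a $q_2$-maximum independent set of $\M_2$, and $q_2(X) \ge q_2(\OPT)$.

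\textbf{The $\M_1$ side.} Let $S$ be a $q_1$-maximum independent set of $\M_1$. By greedy, $S$ may be taken to consist only of elements with $q_1 \ge 0$, i.e.\ $S \subseteq A$. Then $q_1(\OPT) \le q_1(S) = |S| - p(S)$. The difficulty is that $S$ need not contain $X$, and we want to compare $p(S)$ with $p(X)=q_2(X)$. We use invariant 1 of Lemma~\ref{lem:minbase0}: $X$ lies in a $p$-minimum base $\hat X$ of $\M_1$. Running greedy in increasing $p$-order consistent with $\hat X$, the set $S := \{e \in \hat X \mid p(e) \le 1\}$ is $q_1$-maximum, contains $X$, and lies in $A$. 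Hence $|S| \le r_1(A)$.

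\textbf{Combining.} We have
\[
|S| - p(S) + p(X) \;\ge\; q_1(\OPT) + q_2(\OPT) \;=\; (1-\varepsilon)\opt - \varepsilon|\OPT\cap X| \;\ge\; \ldots
\]
To make this clean we refine: for $e \in \OPT\cap X$, $q_2(e) = p(e)$, so
\[
q_1(\OPT)+q_2(\OPT) \;\ge\; \opt - \varepsilon|\OPT\setminus X| \;\ge\; (1-\varepsilon)\opt.
\]
On the left, since $X \subseteq S$ and $p \ge 0$, we get $|S| - p(S) + p(X) = |S| - p(S\setminus X) \le |S| \le r_1(A)$. This yields $r_1(A) \ge (1-\varepsilon)\opt$.

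The main obstacle I expect is the $\M_1$ step. Unlike in Theorem~\ref{thm:mainrank}, there is no stopping condition saying $X$ is already all of $\{e \in \hat X \mid p(e)\le 1\}$. The fix is to replace $X$ by $S \supseteq X$, which is harmless because $p(S\setminus X) \ge 0$ only helps the inequality. This relies on $p \ge 0$ (Observation~\ref{obs:pricerange}) and on the existence of $\hat X$ via Lemma~\ref{lem:minbase0}.
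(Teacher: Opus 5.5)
Your proof is correct and follows the paper's argument essentially verbatim. It uses the same weight split $q_1,q_2$, the same appeal to Lemma~\ref{lem:minbase0} to obtain a $q_1$-maximum independent set containing $X$ inside $\{e \in E \mid p(e)\le 1\}$, Observation~\ref{obs:42} with Lemma~\ref{lem:charaminind} on the $\M_2$ side, and the bound $|S| \ge q_1(S)+q_2(X)$. One small slip: your first displayed value $(1-\varepsilon)\opt - \varepsilon|\OPT\cap X|$ has the wrong sign; the correct value is $\opt - \varepsilon|\OPT\setminus X| = (1-\varepsilon)\opt + \varepsilon|\OPT\cap X|$. Your refined line states it correctly, so nothing depends on the slip.
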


\begin{proof}
Let $q_1(e) = 1 - p(e)$ for each $e \in E$ and let
\[
q_2(e) = \begin{cases}
p(e) & \mbox{for $e \in X$,} \\
p(e) - \varepsilon & \mbox{for $e \in E \setminus X$.}
\end{cases}
\]
By Lemma~\ref{lem:minbase0}, 
$X$ is contained in a $q_1$-maximum base of $\M_1$. 
Since $p(e) \le 1$, i.e., $q_1(e) \ge 0$ for each $e \in X$ by Observation~\ref{obs:pricerange}, 
this implies that 
$X$ is contained in a $q_1$-maximum independent set of $\M_1$, say $\hat X$. 
Note that $q_1(e) \ge 0$ for each $e \in \hat X$, that is, $\hat X \subseteq \{e \in E \mid p(e) \le 1 \}$. 
Since $\OPT \in \I_1$, we see that 
$q_1(\hat X) \ge q_1(\OPT)$. 

We next observe the following: 
\begin{itemize}
    \item 
    For $e \in X$, we obtain $q_2(e) = p(e) > 0$.
    \item 
    For $e \in E \setminus X$ with $X + e \in \I_2$, we obtain $q_2(e) = p(e) - \varepsilon = 0$.  
    \item 
    For any $e \in E \setminus X$ and $f \in X$ such that $X+e-f \in \I_2$, 
    we obtain $q_2(e) = p(e) - \varepsilon \le \pi(f) = p(f) = q_2(f)$ by Observation~\ref{obs:42}. 
\end{itemize}
These observations show that $X$ is a $q_2$-maximum independent set of $\M_2$ by Lemma~\ref{lem:charaminind}. 
Since $\OPT \in \I_2$, we obtain $q_2(X) \ge q_2(\OPT)$. 

Therefore, 
\begin{align*}
r_1 ( \{e \in E \mid p(e) \le 1 \}) 
&\ge |\hat X| \\ 
&\ge \sum_{e \in \hat X} (1 - p(e)) + \sum_{e \in X} p(e) \\
&=  q_1(\hat X) + q_2(X) \\
&\ge q_1(\OPT) + q_2 (\OPT) \\
&\ge \sum_{e \in {\OPT}} (1 - p(e)) + \sum_{e \in {\OPT}} (p(e) - \varepsilon) \\
&= (1- \varepsilon) \opt, 
\end{align*}
which completes the proof. 
\end{proof}

\subsection{Potential Functions and Approximation Ratio}

We now define two potential functions as follows: 
\begin{align*}
\Phi_{\rm item} &:= \pi (X), \\
\Phi_{\rm buyer} &:= \min \left\{ p(S) \mid S \in \I_1,\ |S| =  {\opt} \right\}. 
\end{align*}
Intuitively, $\Phi_{\rm item}$ is the total price of 
the items tentatively agreed upon for trade between the two players, 
and $\Phi_{\rm buyer}$ is the total price of a cheapest set of $\opt$ items in $\I_1$ (i.e., those that Alice wants to buy).
In the next two lemmas, we show how the increments of these potential functions can be related to the growth speed of $X$. Such a relation will be used later (Proposition~\ref{prop:streamingratio}) to  lower bound the size of the final $X$.

For $t \in \{0, 1, 2, \dots,  \frac{2}{\varepsilon^2}\}$, 
objects $X$, $\pi$, $Y$, $\Phi_{\rm item}$, and $\Phi_{\rm buyer}$ at the end of round $t$ are denoted by 
$X^t$, $\pi^t$, $Y^t$, $\Phi^t_{\rm item}$, and $\Phi^t_{\rm buyer}$, respectively. 
Here, {\it at the end of round 0} refers to 
{\it just before round 1}; for example, $X^0 = \emptyset$. 

\begin{lemma}\label{lem:potential1}
For $t \in \{1, 2, \dots , \frac{2}{\varepsilon^2} \}$, 
$\Phi^t_{\rm item} - \Phi^{t-1}_{\rm item} = \varepsilon (|Y^t| - |X^{t-1}|)$.
\end{lemma}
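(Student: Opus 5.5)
We track how $\Phi_{\rm item} = \pi(X)$ changes during round $t$, one iteration of the inner for-loop at a time, and sum the changes. At the start of round $t$ we have $X = X^* = Y = X^{t-1}$ and $\pi = \pi^{t-1}$, so $\Phi_{\rm item} = \Phi^{t-1}_{\rm item}$. At the end, $\Phi_{\rm item} = \Phi^{t}_{\rm item}$ and $Y = Y^t$. The set $Y$ grows by exactly one element in each iteration in which the if-condition of Line~6 holds, and in no other iteration. Hence $|Y^t| - |X^{t-1}|$ equals the number of accepted elements in round $t$. Iterations that are not accepted change neither $X$ nor $\pi$. It therefore suffices to show that each accepted element $e$ increases $\pi(X)$ by exactly $\varepsilon$.

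Fix an accepted element $e \in E \setminus X^*$, and let $X$, $\pi$, $p$ denote the objects just before it is processed. Note that $e \notin X$: by invariant~3 of Lemma~\ref{lem:minbase0}, $X \subseteq Y$, and $e \notin Y$ because $e \notin X^*$ and each element of the stream is processed once. Line~8 sets $\pi(e) = p(e)$. We split into the two branches.

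\emph{Case $X + e \in \I_2$.} By the definition of the induced price, $p(e) = \varepsilon$. After the update, $X$ becomes $X + e$, so $\pi(X)$ increases by $\pi(e) = \varepsilon$.

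\emph{Case $X + e \notin \I_2$.} Let $f$ be the element chosen in Line~12, so that $f$ minimizes $\pi$ over $C_2(X \mid e) - e$. By the definition of the induced price, $p(e) = \pi(f) + \varepsilon$. After the update $X$ becomes $X + e - f$, with $\pi(e) = \pi(f) + \varepsilon$ and $\pi(f) = \bot$. The values of $\pi$ on $X - f$ are unchanged, so $\pi(X)$ changes by $(\pi(f) + \varepsilon) - \pi(f) = \varepsilon$.

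Summing over all accepted elements of round $t$ gives $\Phi^t_{\rm item} - \Phi^{t-1}_{\rm item} = \varepsilon(|Y^t| - |X^{t-1}|)$.

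The only delicate point is the bookkeeping. One must use the current induced price $p(e)$, which the if-condition guarantees equals $p^*(e)$, and evaluate it with respect to the current $X$, because that is exactly the quantity the definition of the induced price relates to $\pi(f)$. No genuine obstacle is expected beyond this.
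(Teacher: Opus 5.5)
Your proof is correct and follows the paper's argument essentially verbatim: each accepted element raises $\pi(X)$ by exactly $\varepsilon$ in both branches (via $p(e)=\varepsilon$ or $p(e)=\pi(f)+\varepsilon$), and the number of accepted elements in round $t$ is $|Y^t|-|X^{t-1}|$. Your extra check that $e\notin X$ (from $X\subseteq Y$ and $e\notin Y$) is a detail the paper leaves implicit.
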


\begin{proof}
We see that $\Phi_{\rm item}$ increases by $\varepsilon$ in each execution of Line 7--14. 
Indeed, if $e$ is added to $X$ in Line 10, then 
$\pi(e)$ is set to $p(e) = \varepsilon$ in Line 8, 
and hence $\Phi_{\rm item}$ increases by $\varepsilon$. 
If $X$ is replaced with $X+e-f$ in Line 13, then $\pi(e)$ is set to $p(e) = \pi(f) + \varepsilon$ in Line 8
by the definition of $p$ and the choice of $f$, 
and hence $\Phi_{\rm item}$ increases by $\varepsilon$.

Since Lines 7--14 are executed for $e \in E$ in round $t$ if and only if $e \in Y^t \setminus X^{t-1}$, 
this part is executed $|Y^t| - |X^{t-1}|$ times in round $t$. 
Therefore, we obtain $\Phi^t_{\rm item} - \Phi^{t-1}_{\rm item} = \varepsilon (|Y^t| - |X^{t-1}|)$.
\end{proof}

\begin{lemma}\label{lem:potential2}
For $t \in \{1, 2, \dots , \frac{2}{\varepsilon^2} \}$, 
$\Phi^t_{\rm buyer} - \Phi^{t-1}_{\rm buyer} \ge \varepsilon ((1-\varepsilon)\opt - |Y^t|)$.
\end{lemma}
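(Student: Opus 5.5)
The plan is to compare the two minima defining $\Phi_{\rm buyer}$ through a single witness set. Write $p^*$ for the price vector at the start of round $t$ and $p$ for the one at its end. Let $S\in\I_1$ with $|S|=\opt$ attain $\Phi^t_{\rm buyer}=p(S)$. It suffices to produce $S'\in\I_1$ with $|S'|=\opt$ and
\[
p^*(S')\le p(S)-\varepsilon\bigl((1-\varepsilon)\opt-|Y^t|\bigr),
\]
since $\Phi^{t-1}_{\rm buyer}\le p^*(S')$. All prices are multiples of $\varepsilon$ (Observation~\ref{obs:pricerange}) and are monotone (Lemma~\ref{lem:pricemonotone}). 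Hence every $e\in S$ falls into one of two groups. In the first, the set $U:=\{e\in S\mid p(e)\ge p^*(e)+\varepsilon\}$, the element has gained at least $\varepsilon$. In the second, $p(e)=p^*(e)$ and the price was constant throughout the round.

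First I would show that $S$ contains at least $(1-\varepsilon)\opt$ elements of price at most $1$. By Lemma~\ref{lem:activerank} at the end of round $t$, $r_1(\{e\mid p(e)\le 1\})\ge(1-\varepsilon)\opt$. A $p$-minimum independent set of size $\opt$ can be taken greedy, and by the greedy/exchange property it contains a maximal independent subset of $\{e\mid p(e)\le 1\}$. So I will pick $S$ with $|S\cap\{p\le 1\}|\ge(1-\varepsilon)\opt$.

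Next, consider $e\in S\setminus U$ with $p(e)\le 1$ and $e\notin Y^t$. Such an $e$ is not in $X^*$, because $X^*\subseteq Y^t$ by Lemma~\ref{lem:minbase0}. When $e$ was scanned in Line~6 we had $p(e)=p^*(e)\le 1$, so $e$ was rejected only because $Y+e$ was not contained in any $p^*$-minimum base of $\M_1$, where $Y\subseteq Y^t$ is the set at that moment. From Lemma~\ref{lem:minbase0}~(2) and the structure of $p^*$-minimum bases (Lemma~\ref{lem:gsproperty} and Observation~\ref{obs:fundamentalcircuit}), this means the following. There is $\alpha=p^*(e)$ such that $e$ lies in the span of $Y^t_{\le\alpha}\cup\{g\mid p^*(g)<\alpha\}$. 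Here $Y^t_{\le\alpha}:=\{f\in Y^t\mid p^*(f)\le\alpha\}$; note that $e$ is already spanned by strictly cheaper elements together with $Y$-elements of price at most $\alpha$.

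With that, I would build $S'$ by running the greedy algorithm for $p^*$ on $\M_1$. The aim is to show that the elements of $S$ that are "unchanged, cheap, outside $Y^t$" can be charged to $Y^t$. Concretely, let $D$ be the set of such elements. I claim there is an injection from $(S\cap\{p\le1\})\setminus U$ into $Y^t$, obtained from the span property above by a standard exchange/rank-counting argument over the price levels $\alpha=\varepsilon,2\varepsilon,\dots$. Elements of $S\cap Y^t\setminus U$ map to themselves, and elements of $D$ are absorbed by $Y^t$-elements of no larger $p^*$-price. Then $|(S\cap\{p\le 1\})\setminus U|\le|Y^t|$, so $|U|\ge(1-\varepsilon)\opt-|Y^t|$. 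Taking $S'=S$ gives
\[
p^*(S)\le p(S)-\varepsilon|U|\le\Phi^t_{\rm buyer}-\varepsilon\bigl((1-\varepsilon)\opt-|Y^t|\bigr),
\]
as required. If the clean injection fails, the fallback is to modify $S$ itself: swap elements of $D$ for elements of $Y^t$ of no larger $p^*$-price, using the fact that $Y^t$ lies in a $p^*$-minimum base, keeping independence and not increasing $p^*$.

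The main obstacle is this charging step: turning "$Y+e$ is not in a $p^*$-minimum base" at scan time into a global counting bound $|(S\cap\{p\le1\})\setminus U|\le |Y^t|$ valid at the end of the round. The difficulty is that $Y$ grows during the round, while $S$ is chosen with respect to the final prices $p$. I would handle it level by level in $p^*$. For each threshold $\alpha$, the elements of $S\setminus U$ with $p^*\le\alpha$ are independent and lie in $\mathrm{span}_{\M_1}(Y^t_{\le\alpha}\cup\{p^*<\alpha\}\cup (S\cap U))$. Comparing ranks against a $p^*$-greedy base through $Y^t$ then yields the needed inequality.
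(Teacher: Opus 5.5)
Your argument breaks at the charging step. The injection you claim, $|(S\cap\{p\le 1\})\setminus U|\le |Y^t|$ for the $p$-minimizer $S$, is false. So $|U|\ge(1-\varepsilon)\opt-|Y^t|$ need not hold, and the witness $S'=S$ cannot work.

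The span property you state is right, but it charges an unchanged element $e\notin Y^t$ of level $\alpha$ partly to \emph{strictly $p^*$-cheaper} elements, not only to $Y^t$. Those cheaper elements may have risen in price during the round, so $S$ need not contain them. Moreover, a single element of $Y^t$ can raise arbitrarily many of them.

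Here is a concrete instance. Take bipartite matching with left vertices $a_j,f_j,g_j$ and right vertices $b_j,h_j,u_j$ ($1\le j\le k$) and $v$. The edges are $a_jb_j$, $f_jb_j$, $f_jh_j$, $f_jv$, $g_jh_j$, $g_ju_j$. Let round~1 scan $f_jb_j,g_jh_j$ first, round~2 scan $a_jb_j$ first, and round~3 scan $f_1v$ first.
\begin{itemize}
\item Round~1 gives $X^1=\{f_jb_j,g_jh_j\}$, all with $\pi=\varepsilon$.
\item Round~2 replaces each $f_jb_j$ by $a_jb_j$.
\item In round~3 only $f_1v$ enters $Y$. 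It raises $p(f_jv)$ from $\varepsilon$ to $2\varepsilon$ for $j\ge 2$. Each $f_jh_j$ is rejected: it has $p=p^*=2\varepsilon$, but $p^*(f_jv)=\varepsilon$, so it is not $p^*$-cheapest at $f_j$.
\end{itemize}
Hence $|Y^3|=2k+1$ and $\opt=3k$. The set $S=\{a_jb_j,g_jh_j\}_{j}\cup\{f_1v\}\cup\{f_jh_j\}_{j\ge 2}$ is a $p$-minimizer of size $\opt$ in which every price is unchanged and at most $1$. So $U=\emptyset$, whereas $(1-\varepsilon)\opt-|Y^3|=k-1-3\varepsilon k>0$ for $k\ge 2$ and small $\varepsilon$. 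Your witness yields only $\Phi^3_{\rm buyer}-\Phi^2_{\rm buyer}\ge p(S)-p^*(S)=0$, although the true increase is $(k-1)\varepsilon$.

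Your fallback does not help either. The profitable swaps replace $f_jh_j$ by the cheaper, risen $f_jv$, which is not in $Y^t$. Breaking ties toward risen elements happens to fix this instance, but you give neither such a rule nor a general argument.

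The missing idea is to drop the fixed witness and compare the two optima through ranks of price sublevel sets, as the paper does. Let $E^{t}_{\le i}=\{e\mid p^{t}(e)\le i\varepsilon\}$ and $\hat r=\min\{r_1,\opt\}$. The greedy algorithm gives $\Phi_{\rm buyer}=(1+\varepsilon)\hat r(E)-\varepsilon\sum_{i=1}^{1/\varepsilon}\hat r(E_{\le i})$. Price monotonicity gives $E^{t}_{\le i}\subseteq E^{t-1}_{\le i-1}\cup(E^{t}_i\cap E^{t-1}_i)$. Your span property then yields $r_1(E^{t}_{\le i})\le r_1(E^{t-1}_{\le i-1})+|Y^t\cap E^{t-1}_i|$. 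Telescoping over $i\le i^*$ gives the bound, where $i^*$ is the least level with $r_1(E^{t-1}_{\le i})\ge(1-\varepsilon)\opt$; $i^*\le 1/\varepsilon$ by Lemma~\ref{lem:activerank} at the end of round $t-1$. In this accounting, unchanged elements spanned by cheaper ones are absorbed by $r_1(E^{t-1}_{\le i-1})$. That rank counts the risen elements whether or not any particular optimal set uses them, which is exactly what your single-witness bound $\Phi^{t-1}_{\rm buyer}\le p^*(S)$ throws away.
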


\begin{proof}
We fix $t$.  
Let $p^{t-1}$ and $p^{t}$ be the price vectors induced by $X^{t-1}$ and $\pi^{t-1}$, and by $X^{t}$ and $\pi^{t}$, respectively. 
For $i \in \mathbb{Z}$, 
let $E^{t}_{i} := \{ e \in E \mid p^{t}(e) = i \varepsilon \}$ and $E^{t}_{\le i} := \{ e \in E \mid p^{t}(e) \le i \varepsilon\}$. 
Note that $\{E^{t}_{i} \mid i \in \{1, 2, \dots , \frac{1}{\varepsilon} +1\}\}$ forms a partition of $E$ by Observation~\ref{obs:pricerange}. 
We define $E^{t-1}_{i}$ and $E^{t-1}_{\le i}$ in a similar way. 
We show the following claims.

\begin{claim} \label{clm:diffrank0}
For any integer $i \le \frac{1}{\varepsilon}$ and for any $e \in E^{t}_{i} \cap E^{t-1}_{i}$, 
$e \in {\rm span}_{\M_1} (E^{t-1}_{\le i-1} \cup (Y^t \cap E^{t-1}_{i}))$. 
\end{claim}

\begin{proof}[Proof of the claim]
Let $e \in E^{t}_{i} \cap E^{t-1}_{i}$.
If $e \in Y^t$, then the claim is obvious as 
$e \in Y^t \cap E^{t-1}_{i}$. 
Thus, it suffices to consider the case when $e \notin Y^t$.

In this case, $e$ does not satisfy the condition in Line 6 in round $t$. 
Moreover, since $e \in E^{t}_{i} \cap E^{t-1}_{i}$ implies that $p(e)$ does not increase in round $t$, we have $p(e) = p^*(e) = i\varepsilon$ when we check the condition in Line 6 for $e$ in this round. 
Therefore, at this time, $Y+e$ is not contained in a $p^*$-minimum base of $\M_1$. 
Since $Y \subseteq Y^t$, this implies that 
$Y^t+e$ is not contained in a $p^*$-minimum base of $\M_1$.

Let $Z$ be a $p^*$-minimum base of $\M_1$ 
that contains $Y^t$, which exists by Lemma~\ref{lem:minbase0}. 
Since $Y^t+e$ is not contained in a $p^*$-minimum base of $\M_1$, we have $e \not\in Z$, and hence the fundamental circuit $C_1(Z \mid e)$ exists. 
We make the following observations about $C_1(Z \mid e)$. 

\begin{itemize}
    \item 
    For any $f \in C_1(Z \mid e) - e$, we have $p^*(f) \le p^*(e)$; otherwise, by Observation~\ref{obs:fundamentalcircuit}, $Z + e - f$ would be a base of $\M_1$ such that $p^*(Z+e-f) < p^*(Z)$, contradicting the $p^*$-minimality of $Z$. 
    \item 
    For any $f \in C_1(Z \mid e) - e$ with $p^*(f) = p^*(e)$, we have $f \in Y^t$; otherwise, by Observation~\ref{obs:fundamentalcircuit}, 
    $Z + e - f$ would be a $p^*$-minimum base of $\M_1$ that contains $Y^t+e$, a contradiction. 
\end{itemize}
Since $p^* = p^{t-1}$ in round $t$, these observations show that 
\[
C_1(Z \mid e) - e \subseteq E^{t-1}_{\le i-1} \cup (Y^t \cap E^{t-1}_{i}). 
\]
Since $C_1(Z \mid e)$ is a circuit containing $e$, we have that
\[
e \in 
\mathrm{span}_{\M_1}( C_1(Z \mid e) - e) \subseteq \mathrm{span}_{\M_1}(E^{t-1}_{\le i-1} \cup (Y^t \cap E^{t-1}_{i})), 
\]
which completes the proof. 
\end{proof}

\begin{claim} \label{clm:diffrank}
For any integer $i \le \frac{1}{\varepsilon}$, 
it holds that 
$r_1(E^{t}_{\le i }) \le r_1 (E^{t-1}_{\le i-1}) + |Y^t \cap E^{t-1}_{i}|$. 
\end{claim}

\begin{proof}[Proof of the claim]
Since $p^{t}(e) \ge p^{t-1}(e)$ for each $e \in E$ by Lemma~\ref{lem:pricemonotone}, it holds that
\[
E^{t}_{\le i} \subseteq E^{t-1}_{\le i-1} \cup (E^{t}_{i} \cap E^{t-1}_{i}). 
\]
This together with Claim~\ref{clm:diffrank0} shows that
\[
E^{t}_{\le i} \subseteq {\rm span}_{\M_1} (E^{t-1}_{\le i-1} \cup (Y^t \cap E^{t-1}_{i})). 
\] 
Therefore, we obtain
\[
r_1(E^{t}_{\le i}) \le r_1 (E^{t-1}_{\le i-1} \cup (Y^t \cap E^{t-1}_{i})) \le r_1 (E^{t-1}_{\le i-1}) + |Y^t \cap E^{t-1}_{i}|, 
\] 
which completes the proof. 
\end{proof}

For $S \subseteq E$, let $\hat r (S) = \min \{r_1 (S), \opt\}$, that is, 
$\hat r$ is the rank function of the truncation of $\M_1$ to rank $\opt$. 
Let $S^* \in \I_1$ be a minimizer of $p^{t-1}(S^*)$ subject to $|S^*| = \opt$. 
Since $S^*$ is obtained by a greedy algorithm for the matroid with rank function $\hat r$, 
we obtain $|S^* \cap E^{t-1}_{\le i}| = \hat r (E^{t-1}_{\le i} )$ for each $i$.
This implies that 
$|S^* \cap E^{t-1}_{i}| = \hat r (E^{t-1}_{\le i} ) - \hat r (E^{t-1}_{\le i-1})$ for each $i$.
Therefore, 
\begin{align}
\Phi^{t-1}_{\rm buyer} &= p^{t-1}(S^*) \nonumber \\
&=  \sum_{i=1}^{\frac{1}{\varepsilon} +1} |S^* \cap E^{t-1}_{i}| i \varepsilon \nonumber \\
&=  \sum_{i=1}^{\frac{1}{\varepsilon} +1} (\hat r (E^{t-1}_{\le i} ) - \hat r (E^{t-1}_{\le i-1}) ) i \varepsilon \nonumber \\
&=  (1+\varepsilon) \hat r(E^{t-1}_{\le \frac{1}{\varepsilon} +1}) - \varepsilon \sum_{i=0}^{{1}/{\varepsilon}} \hat r (E^{t-1}_{\le i} ), \nonumber \\
&=  (1+\varepsilon) \hat r(E) - \varepsilon \sum_{i=1}^{{1}/{\varepsilon}} \hat r (E^{t-1}_{\le i} ), \label{eq:03}
\end{align}
where we used $E^{t-1}_{\le \frac{1}{\varepsilon} +1} = E$ and $E^{t-1}_{\le 0} = \emptyset$ in the last equality.  
Similarly, we obtain 
\begin{equation}
\Phi^{t}_{\rm buyer} = (1+\varepsilon) \hat r(E) - \varepsilon \sum_{i=1}^{{1}/{\varepsilon}} \hat r (E^{t}_{\le i} ). \label{eq:04}
\end{equation}

Let $i^*$ be the minimum index $i$ such that $r(E^{t-1}_{\le i}) \ge (1-\varepsilon)\opt$. 
Note that $i^* \le \frac{1}{\varepsilon}$ by Lemma~\ref{lem:activerank}. 
Note also that Lemma~\ref{lem:pricemonotone} implies $E^{t-1}_{\le i} \supseteq E^{t}_{\le i}$ for each $i$. 
Then, we obtain 
\begin{align*}
\Phi^{t}_{\rm buyer} - \Phi^{t-1}_{\rm buyer}
&= \varepsilon \sum_{i=1}^{{1}/{\varepsilon}} \left( \hat r (E^{t-1}_{\le i} ) - \hat r (E^{t}_{\le i} )  \right) 
&\mbox{(by (\ref{eq:03}) and (\ref{eq:04}))}\\
&\ge \varepsilon \sum_{i=1}^{i^*} \left( \hat r (E^{t-1}_{\le i} ) - \hat r (E^{t}_{\le i} )  \right) 
&\mbox{(by $i^* \le \frac{1}{\varepsilon}$ and $E^{t-1}_{\le i} \supseteq E^{t}_{\le i}$)}\\
&\ge \varepsilon \left( \hat r (E^{t-1}_{\le i^*} ) + \sum_{i=1}^{i^*-1}  r_1 (E^{t-1}_{\le i} ) - \sum_{i=1}^{i^*} r_1 (E^{t}_{\le i} )  \right) &\qquad \mbox{(by minimality of $i^*$)}\\
&\ge \varepsilon \left( (1-\varepsilon)\opt  + \sum_{i=1}^{i^*}  \left( r_1 (E^{t-1}_{\le i-1} ) -  r_1 (E^{t}_{\le i} ) \right) \right) & \mbox{(by $r_1 (E^{t-1}_{\le i^*}) \ge (1-\varepsilon) \opt$)}\\
&\ge \varepsilon \left( (1-\varepsilon)\opt - \sum_{i=1}^{i^*}  |Y^t \cap E^{t-1}_{i}| \right) 
&\mbox{(by Claim~\ref{clm:diffrank})}\\
&\ge \varepsilon ((1-\varepsilon)\opt - |Y^t|). &
\end{align*}
This completes the proof for Lemma~\ref{lem:potential2}. 
\end{proof}

By Lemmas~\ref{lem:potential1} and~\ref{lem:potential2}, 
we obtain the following proposition. 

\begin{proposition}\label{prop:streamingratio}
Algorithm~\ref{alg:matroidint1} returns a common independent set $X \in \I_1 \cap \I_2$ such that $|X| \ge (1-2\varepsilon) \opt$. 
\end{proposition}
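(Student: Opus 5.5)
The plan is to combine the two potential lemmas into one telescoping inequality, bound the total potential from above by elementary means, and conclude by contradiction. Before that, I will dispose of feasibility. $X \in \I_2$ holds throughout by Observation~\ref{obs:pricerange}. By Lemma~\ref{lem:minbase0}, $X$ is always contained in a $p$-minimum base of $\M_1$, so $X \in \I_1$. Hence the output $X^{T}$, with $T := 2/\varepsilon^2$, lies in $\I_1 \cap \I_2$, and in particular $|X^t| \le \opt$ for every $t$.

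\textbf{Summing the potentials.} Add Lemmas~\ref{lem:potential1} and~\ref{lem:potential2} for a fixed round $t$. The $|Y^t|$ terms cancel, giving
\[
(\Phi^t_{\rm item}+\Phi^t_{\rm buyer}) - (\Phi^{t-1}_{\rm item}+\Phi^{t-1}_{\rm buyer}) \ge \varepsilon\bigl((1-\varepsilon)\opt - |X^{t-1}|\bigr).
\]
Summing over $t=1,\dots,T$ telescopes the left-hand side to $\Phi^T_{\rm item}+\Phi^T_{\rm buyer}-\Phi^0_{\rm item}-\Phi^0_{\rm buyer}$.

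\textbf{Bounding the potentials.} At the start, $X^0=\emptyset$, so $\Phi^0_{\rm item}=0$. Since $\M_2$ has no loops, every price equals $\varepsilon$, so $\Phi^0_{\rm buyer}=\varepsilon\opt$; here a set $S\in\I_1$ with $|S|=\opt$ exists because $\OPT\in\I_1$. At the end, $\Phi^T_{\rm item}=\pi(X^T)\le |X^T|\le\opt$, because $\pi\le 1$ by Observation~\ref{obs:pricerange}. Also $\Phi^T_{\rm buyer}\le(1+\varepsilon)\opt$, because $p\le 1+\varepsilon$. So the total increase is at most $\opt+(1+\varepsilon)\opt-\varepsilon\opt = 2\opt$.

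\textbf{Contradiction.} Suppose $|X^T|<(1-2\varepsilon)\opt$. By monotonicity of $|X|$ (Observation~\ref{obs:pricerange}), $|X^{t-1}|\le|X^T|<(1-2\varepsilon)\opt$ for every $t\le T$. Then each summand satisfies $\varepsilon((1-\varepsilon)\opt-|X^{t-1}|) > \varepsilon\cdot\varepsilon\opt$, so the sum strictly exceeds $T\varepsilon^2\opt = 2\opt$. This contradicts the upper bound of $2\opt$, so $|X^T|\ge(1-2\varepsilon)\opt$.

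The only delicate point is the constant bookkeeping. The initial buyer potential $\varepsilon\opt$ must be subtracted to make the upper bound exactly $2\opt$, matching the $2/\varepsilon^2$ rounds, and the strict inequality then closes the argument. If $\opt=0$ the claim is trivial, so we may assume $\opt>0$, which makes the inequality genuinely strict.
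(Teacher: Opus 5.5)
Your proposal is correct and follows essentially the same argument as the paper: add Lemma~\ref{lem:potential1} and Lemma~\ref{lem:potential2} so the $|Y^t|$ terms cancel, use the contrary assumption and monotonicity of $|X|$ to get a per-round gain exceeding $\varepsilon^2\opt$, and contradict the fact that $\Phi_{\rm item}+\Phi_{\rm buyer}$ can grow by at most $2\opt$ over $2/\varepsilon^2$ rounds. The only cosmetic difference is that you evaluate the initial potentials exactly ($0$ and $\varepsilon\opt$), whereas the paper uses the uniform bounds $\Phi_{\rm item}\ge 0$ and $\Phi_{\rm buyer}\ge\varepsilon\opt$; both give the same bound.
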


\begin{proof}
The output $X$ is a common independent set by Observation~\ref{obs:pricerange} and Lemma~\ref{lem:minbase0}. 
Thus, it suffices to prove that $|X| \ge (1-2\varepsilon) \opt$.

Assume to the contrary that the output $X$ satisfies $|X| < (1-2\varepsilon) \opt$. 
Then, since $|X|$ is monotonically non-decreasing during the algorithm, we have $|X^t| < (1-2 \varepsilon) \opt$ 
for each round $t \in \{1, 2, \dots , \frac{2}{\varepsilon^2} \}$. 
By Lemmas~\ref{lem:potential1} and~\ref{lem:potential2}, we obtain
\[
(\Phi^t_{\rm item} + \Phi^t_{\rm buyer}) - (\Phi^{t-1}_{\rm item} + \Phi^{t-1}_{\rm buyer}) \ge \varepsilon ((1-\varepsilon) \opt - |X^{t-1}|) > \varepsilon^2  \opt 
\]
for each $t$. 
Since we execute ${2}/{\varepsilon^2}$ rounds in the algorithm, 
$\Phi_{\rm item} + \Phi_{\rm buyer}$ increases by more than $2 \opt$ in total. 
This contradicts the fact that 
$0 \le \Phi_{\rm item} \le \opt$ and $\varepsilon \opt  \le \Phi_{\rm buyer} \le (1+\varepsilon) \opt $. 
\end{proof}

Note that, by replacing $\varepsilon$ with ${\varepsilon}/{2}$, we obtain an algorithm that returns a set $X$ of size at least $(1 - \varepsilon) \opt$.

\subsection{Refined Implementation}
\label{sec:implement}

An issue in Algorithm~\ref{alg:matroidint1} is that it is not straightforward to check whether $Y+e$ is contained in a $p^*$-minimum base of $\M_1$ efficiently in Line 6; 
in particular, a naive approach would require an additional streaming pass.
To overcome this difficulty, in this subsection, we modify the algorithm to maintain a $p^*$-minimum base $Z$ of $\M_1$ and use it to check this condition. 
The refined algorithm is described in Algorithm~\ref{alg:matroidint2}. 

\begin{algorithm}
    \caption{Refined Implementation of Algorithm~\ref{alg:matroidint1}}\label{alg:matroidint2}
    \KwInput{Two matroids $\M_1 = (E, \I_1)$ and $\M_2 = (E, \I_2)$}
    \KwOutput{A common independent set $X \in \I_1 \cap \I_2$} 
    Let $X \gets \emptyset$ ; \\
    \For{$t \leftarrow 1$ \KwTo ${2} / {\varepsilon^2}$}{    
  	$X^*, Y \gets X$ ; \\
  	$\pi^*(e) \gets \pi(e)$ for each $e \in E$ ; \\ 
	\tcp{We keep explicitly $\pi^*(e)$ only for $e \in X^*$, as $\pi^*(e) = \bot$ for $e \in E \setminus X^*$}
	\tcp{Let $p^*$ denote the price vector induced by $X^*$ and $\pi^*$}
    Find a $p^*$-minimum base $Z$ of $\M_1$ that contains $Y$; \\
    \ForEach{$e \in E \setminus X^*$}{       
         \If{$p(e) = p^*(e) \le 1$ and $\exists e' \in Z \setminus Y$ s.t. $Z - e' + e  \in \B_1$ and $p^*(e') = p^*(e)$} {
                $Z \gets Z - e' + e$                \tcp*{possibly, $e'=e$} 
	           $Y \gets Y + e$ ; \\ 
	           $\pi(e) \gets p(e)$ ; \\ 
	           \If{$X+e \in \I_2$}{
	               $X \gets X + e$ ; \\ 
            	}
	           \Else{
            	   Find $f \in C_{2}(X \mid e) - e$ minimizing $\pi(f)$ ; \\
                  $X \gets X + e - f$ ; \\              
	               $\pi(f) \gets \bot$ ; \\
            	}
        	\tcp{The price vector $p$ induced by $X$ and $\pi$ is updated implicitly}
        }
    }
    }
    \Return $X$ ;
\end{algorithm}

The following lemma ensures that Algorithm~\ref{alg:matroidint2} behaves in the same way as Algorithm~\ref{alg:matroidint1}.

\begin{lemma}\label{lem:rephrasecondition}
    Suppose that $Y \subseteq E$ is contained in a $p^*$-minimum base $Z$ of $\M_1$, and let $e \in E \setminus Y$.  
    Then, the followings are equivalent: 
    \begin{enumerate}
    \item[(i)]    
    There exists $e' \in Z \setminus Y$ such that $Z - e' + e  \in \B_1$ and $p^*(e') = p^*(e)$. 
    \item[(ii)] 
    $Y + e$ is contained in a $p^*$-minimum base of $\M_1$. 
    \end{enumerate}
\end{lemma}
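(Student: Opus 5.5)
The direction (i)$\Rightarrow$(ii) is immediate, and the direction (ii)$\Rightarrow$(i) is a single exchange argument between two $p^*$-minimum bases, so I will treat them in that order.

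\textbf{(i)$\Rightarrow$(ii).} Put $Z' := Z - e' + e$. By assumption $Z' \in \B_1$ and $p^*(Z') = p^*(Z) - p^*(e') + p^*(e) = p^*(Z)$, so $Z'$ is a $p^*$-minimum base. Since $e' \notin Y$, we have $Y \subseteq Z - e'$, and hence $Y + e \subseteq Z'$. The case $e' = e$ is included: it forces $e \in Z$, $Z' = Z$, and the claim is trivial.

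\textbf{(ii)$\Rightarrow$(i).} If $e \in Z$, take $e' := e$. Then $e' \in Z \setminus Y$ because $e \notin Y$, and $Z - e' + e = Z \in \B_1$ with equal prices. So assume $e \notin Z$, and let $B$ be a $p^*$-minimum base with $Y + e \subseteq B$. Apply the simultaneous exchange property to $B$ and $Z$ with respect to $e \in B \setminus Z$. This gives $e' \in Z \setminus B$ such that both $B - e + e'$ and $Z - e' + e$ are bases. Since $e' \notin B \supseteq Y$, we get $e' \in Z \setminus Y$.

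It remains to show $p^*(e') = p^*(e)$. Minimality of $B$ gives $p^*(B) \le p^*(B - e + e')$, so $p^*(e) \le p^*(e')$. Minimality of $Z$ gives $p^*(Z) \le p^*(Z - e' + e)$, so $p^*(e') \le p^*(e)$. Hence the two prices are equal, which proves (i).

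I do not expect any real obstacle. The only points needing care are the degenerate case $e \in Z$, and using the minimality of both bases to get the price equality in both directions.
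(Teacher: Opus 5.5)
Your proof is correct and follows essentially the same route as the paper's. For (i)$\Rightarrow$(ii) you observe that $Z-e'+e$ is another $p^*$-minimum base containing $Y+e$. For (ii)$\Rightarrow$(i) you take $e'=e$ when $e\in Z$, and otherwise apply the simultaneous exchange property between $Z$ and a $p^*$-minimum base containing $Y+e$, using the minimality of both bases to force $p^*(e')=p^*(e)$ and $e'\notin Y$.
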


\begin{proof}
{\bf ((i) $\Rightarrow$ (ii))}
Since $Z$ is a $p^*$-minimum base of $\M_1$, 
if there exists $e' \in Z \setminus Y$ such that $Z - e' + e  \in \B_1$ and $p^*(e') = p^*(e)$, 
then $Z' := Z - e' + e$ is also a $p^*$-minimum base of $\M_1$. 
Since $Y + e$ is contained in $Z'$, (ii) holds. 

\noindent
{\bf ((ii) $\Rightarrow$ (i))}
If $e \in Z$, then (i) is obvious, because we can take $e' = e$. 
Suppose that $e \not\in Z$ and 
let $Z'$ be a $p^*$-minimum base of $\M_1$ that contains $Y+e$. 
By applying the simultaneous exchange property to $Z$ and $Z'$ with respect to $e$,  
there exists $e' \in Z \setminus Z'$ such that $Z-e'+e \in \B_1$ and $Z'+e'-e \in \B_1$. 
Since both $Z$ and $Z'$ are $p^*$-minimum bases of $\M_1$, 
$p^*(Z-e'+e) \ge p^*(Z)$ and $p^*(Z'+e'-e) \ge p^*(Z')$. 
Combining these inequalities, we obtain $p^*(e') = p^*(e)$. 
Since $e' \in Z \setminus Z' \subseteq Z \setminus Y$, $e'$ is a desired element in (i). 
\end{proof}

We note that in Line~5 of Algorithm~\ref{alg:matroidint2}, there always exists a $p^*$-minimum base $Z$ of $\M_1$ containing $Y$, by Lemma~\ref{lem:minbase0}.
Furthermore, the condition that $Z$ is a $p^*$-minimum base of $\M_1$ containing $Y$ is preserved when we update $Y$ and $Z$ in Lines~8 and~9. 
We also note that when Line~7 is executed for $e \in E \setminus X^*$, $e$ is not contained in $Y$, because $Y$ consists of $X^*$ and some elements for which the inner for-loop was executed before $e$ in this round. 
Therefore, by Lemma~\ref{lem:rephrasecondition}, Algorithm~\ref{alg:matroidint2} behaves in the same way as Algorithm~\ref{alg:matroidint1}.

\subsection{Query Complexity and Number of Passes}
\label{sec:complexity}

Finally in this subsection, we analyze the query complexity and 
number of passes in Algorithm~\ref{alg:matroidint2}. 
We first show that $p(e)$ can be computed efficiently.

\begin{lemma}\label{lem:computeprice}
Given $X$ and $\pi$, for $e \in E$,  
we can compute $p(e)$ using $O(\log \opt)$ independence queries. 
\end{lemma}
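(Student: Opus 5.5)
We compute $p(e)$ directly from its definition, splitting into the three cases that define the induced price vector.

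First, if $e \in X$, then $p(e) = \pi(e)$ is stored explicitly, so no query is needed. Otherwise $e \in E \setminus X$, and one independence query to $\M_2$ tells us whether $X + e \in \I_2$. If the answer is yes, then $p(e) = \varepsilon$ and we are done.

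The remaining case is $e \in E \setminus X$ with $X + e \notin \I_2$. Here we must find $\min\{\pi(f) \mid f \in C_2(X \mid e) - e\}$. By Observation~\ref{obs:fundamentalcircuit}, an element $f \in X$ lies in $C_2(X \mid e) - e$ if and only if $X + e - f \in \I_2$. So the task is to find $f \in X$ minimizing $\pi(f)$ subject to $X + e - f \in \I_2$. This is exactly Lemma~\ref{lem:binary}~(1), applied to the matroid $\M_2$ with $S = T = X$, $v = e$, and weight $w = \pi$ restricted to $X$ (the weights outside $T$ are irrelevant). The lemma uses $O(\log |X|)$ independence queries.

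Since $X \in \I_1 \cap \I_2$ by Observation~\ref{obs:pricerange} and Lemma~\ref{lem:minbase0}, we have $|X| \le \opt$, so the search costs $O(\log \opt)$ queries. Adding $\varepsilon$ to the returned minimum gives $p(e)$. The total is $1 + O(\log \opt) = O(\log \opt)$ queries.

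The only point needing care is that the binary-search lemma indeed returns a minimizer over the fundamental circuit minus $e$. This is immediate from Observation~\ref{obs:fundamentalcircuit}, since $X \in \I_2$ and $X + e \notin \I_2$. The computation also needs only $O(|X|)$ working space, which matters for the streaming setting.
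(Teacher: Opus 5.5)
Your proposal is correct and follows the paper's proof: the only nontrivial case, $X+e\notin\I_2$, is reduced to Lemma~\ref{lem:binary}~(1) with $S=T=X$, $v=e$, $w=\pi$, and $|X|\le\opt$ bounds the cost by $O(\log\opt)$. The paper leaves implicit the trivial cases you spell out, namely $e\in X$ and the single query that checks whether $X+e\in\I_2$.
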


\begin{proof}
It suffices to consider the case when $X+e \not\in \I_2$. 
By applying Lemma~\ref{lem:binary} (1) with $S = T = X$, $v = e$, and $w = \pi$, 
we can find a minimizer $f \in C_2 (X \mid e) - e$ of $\pi(f)$
using $O(\log |X|) = O(\log \opt)$ independence queries to $\M_2$. 
Then, we obtain $p(e) = \pi(f) + \varepsilon$. 
\end{proof}

Similarly, given $X^*$ and $\pi^*$, we can compute $p^*(e)$ using $O(\log \opt)$ queries. 
To achieve a further minor improvement in the query complexity, 
the algorithm explicitly maintains $p^*(e)$ for each $e \in Z$, 
while $p^*(e)$ for $e \in E \setminus Z$ is computed from $X^*$ and $\pi^*$ only when needed. 
This requires $O(|Z|) = O(\opt)$ space. 
With this modification, in each round, 
the algorithm computes the value of $p^*(e)$ for each $e \in E$ exactly once.

The following lemmas show that Lines 5 and 7 can be executed efficiently. 

\begin{lemma}\label{lem:computeminbase}
    In Line 5 of Algorithm~\ref{alg:matroidint2}, 
    using one streaming pass and $O( n \log \opt )$ independence queries, 
    we can compute a $p^*$-minimum base $Z$ of $\M_1$ that contains $Y$.  
\end{lemma}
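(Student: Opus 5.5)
We compute $Z$ by a single-pass streaming greedy with exchanges, exploiting that $p^*$ takes only the values $i\varepsilon$ for $i \in \{1,\dots,\frac{1}{\varepsilon}+1\}$ (Observation~\ref{obs:pricerange}). To make the resulting base contain $Y = X^*$, we give the elements of $Y$ priority among elements of equal price. We use the key $(p^*(e), [e \notin Y])$ ordered lexicographically, and compute a key-minimum base $Z$ of $\M_1$. A key-minimum base is in particular $p^*$-minimum. It contains $Y$: by Lemma~\ref{lem:minbase0}, $Y$ lies in some $p^*$-minimum base, and Lemma~\ref{lem:gsproperty} with a tiny perturbation decreasing $p^*$ on $Y$ (equivalently, tie-breaking in favor of $Y$) shows that some optimal base for the perturbed weights contains $Y$. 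Alternatively, we argue directly that greedy with this tie-breaking picks $Y$, since $Y$ is independent and is contained in a $p^*$-minimum base.

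The streaming implementation maintains an independent set $Z \in \I_1$ with $|Z| \le r_1(E) \le 2\opt$, initialized to $Y$ (which is independent since $Y \in \I_1$). Elements $e \in E \setminus Y$ arrive one by one, and for each we compute $p^*(e)$ from $X^*$ and $\pi^*$ via Lemma~\ref{lem:computeprice}, using $O(\log \opt)$ queries. We store $p^*$ explicitly for the elements currently in $Z$. If $Z + e \in \I_1$, we add $e$. Otherwise we find $f \in C_1(Z \mid e) \setminus Y$ of maximum key using the binary search of Lemma~\ref{lem:binary}(1) with $S = Z$, $T = Z \setminus Y$, $v = e$ and weight $-p^*$, which costs $O(\log \opt)$ independence queries. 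If $p^*(f) > p^*(e)$, we swap: $Z \gets Z - f + e$. This is the standard online maintenance of a minimum-weight base. Each element thus costs $O(\log \opt)$ queries, for $O(n\log\opt)$ in total, with one pass and $O(\opt)$ space.

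For correctness, we check the invariant that after processing a prefix $E'$, $Z$ is a key-minimum base of $\M_1|(Y \cup E')$. Elements of $Y$ are never removed, which is legitimate because $Y$ has the best key in any tie and is contained in an optimal base. This follows from the exchange characterization of a minimum base: a base is min-weight iff no element outside it is cheaper than some element of its fundamental circuit. The lemma holds because the swap operation preserves this condition, the usual argument for streaming MST/matroid greedy. Restricting swaps to $f \notin Y$ needs justification: if the only more expensive element of $C_1(Z\mid e)$ lay in $Y$, the optimality condition would be violated. We rule this out using that $Y$ lies in a $p^*$-minimum base of the full matroid, restricted to $Y \cup E'$ (restriction preserves containment in a minimum base, because a minimum base of the whole, intersected appropriately, extends greedily).

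The main obstacle is precisely this justification: the exchange-optimality invariant must coexist with never evicting $Y$. I expect to handle it via the perturbed weights $p^*(e) - \delta[e \in Y]$, for which a minimum base contains $Y$ by Lemmas~\ref{lem:gsproperty} and~\ref{lem:minbase0}, and then run the unrestricted swap rule under these perturbed weights. Under them elements of $Y$ are never the strict maximum in a circuit that would be swapped out, so the restriction $T = Z\setminus Y$ is harmless.
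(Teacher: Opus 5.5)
Your algorithm is the paper's: initialize $Z=Y$, ignore arriving elements of $Y$, add $e$ if $Z+e\in\I_1$, and otherwise binary-search the $p^*$-maximum element of $C_1(Z\mid e)\setminus Y$ and swap only if it is strictly more expensive than $e$. Your pass, space and query accounting also match. The difference is the correctness argument.

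The paper proves a \emph{constrained} invariant: $Z^t$ minimizes $p^*$ over bases of the restriction $\M^t_1$ that contain $Y$. This is in effect the textbook online minimum-base invariant in the contraction $\M_1/Y$, and the paper checks the inductive step directly via the simultaneous exchange property. Because $Y$ is built into the constraint, never evicting $Y$ needs no justification, and the invariant holds for any independent $Y$. Lemma~\ref{lem:minbase0} is used only once, at the end, to turn the constrained optimum into a $p^*$-minimum base.

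You instead keep the stronger unconstrained invariant that $Z$ is a minimum base of every prefix restriction $\M_1|(Y\cup E')$ for $Y$-favoring weights, and you cite the folklore online algorithm for the inductive step. This gives a black-box reduction and a stronger intermediate guarantee, but it needs two facts the paper avoids:
\begin{itemize}
\item Containment of $Y$ in a $p^*$-minimum base passes to every restriction $\M_1|F$ with $Y\subseteq F$. This is true, for example by comparing greedy on $\M_1$ and on $\M_1|F$ with ties broken in favor of $Y$, using monotonicity of ${\rm span}_{\M_1}$.
\item The swap rule never needs to evict an element of $Y$.
\end{itemize}

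You rightly flag the second fact as the crux, but ``some perturbed-minimum base of $\M_1$ contains $Y$'' is not the statement you need. It is an existence claim, and it is about $\M_1$ rather than the current prefix. The statement that works is as follows. Let $w'(e)=p^*(e)-\delta$ for $e\in Y$ and $w'(e)=p^*(e)$ otherwise, with $\delta>0$. Then \emph{every} $w'$-minimum base of $\M_1|(Y\cup E')$ contains $Y$. Such a base must simultaneously minimize $p^*$ and maximize $|B\cap Y|$, and by the restriction fact one base does both.

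Consequently, no legal move of the unrestricted swap rule (which preserves $w'$-minimality of $Z$ for the current prefix) can remove an element of $Y$. So whenever the maximum of $w'$ over $C_1(Z\mid e)-e$ exceeds $w'(e)$, it is attained only outside $Y$, where $w'=p^*$. Your search over $T=Z\setminus Y$ therefore always makes a legal move of the unrestricted rule. With this added, your route is complete.
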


\begin{proof}
    We consider a one pass semi-streaming algorithm that updates an independent set $\M_1$ greedily. 
    The algorithm initializes $Z := Y$. 
    When an element in $Y$ arrives, we do nothing. 
    When an element $e \in E \setminus Y$ arrives, we compute $p^*(e)$ with $O( \log \opt )$ independence queries to $\M_2$ by Lemma~\ref{lem:computeprice}, and then do the following: 
    \begin{itemize}
    \item 
    If $Z + e \in \I_1$, then update $Z := Z + e$. 
    \item 
    If $Z + e \not\in \I_1$, then 
    find $f \in (Z + e) \setminus Y$ that maximizes $p^*(f)$ subject to $Z+e-f \in \I_1$, 
    and 
    update $Z := Z + e - f$ (possibly, $e=f$). 
    \end{itemize}
    The first case can be handled using $O(1)$ independence queries.  
    The second case can be implemented using $O( \log \opt )$ independence queries to $\M_1$ as follows: 
    we find $f' \in Z \setminus Y$ that maximizes $p^*(f')$ subject to $Z+e-f' \in \I_1$
    by applying Lemma~\ref{lem:binary} (1) with $S = Z$, $T = Z \setminus Y$, $v=e$, and $w = - p^*$, 
    and then compare $p^*(f')$ and $p^*(e)$. 
    Note that we explicitly maintain $p^*(e)$ for each $e \in Z$, as remarked before this lemma. 
    Therefore, this algorithm requires $O( n \log \opt )$ independence queries to $\M_1$ and $\M_2$ in total. 

    It remains to show that the output $Z^*$ of the algorithm is a $p^*$-minimum base of $\M_1$ containing $Y$. 
    We denote $E \setminus Y = \{e^1, \dots , e^\ell\}$ such that these elements arrive in this order.     
    For each $t \in \{0, 1, \dots , \ell\}$, 
    let $Z^t$ denote the set $Z$ at the end of the iteration for $e^t$, 
    let $E^t := Y \cup \{e^1, \dots , e^t\}$, 
    and let $\M^t_1 = (E^t, \I^t_1)$ denote the restriction of $\M_1$ to $E^t$, where 
    $\I^t_1 := \{S \in \I_1 \mid S \subseteq E^t\}$. 
    It is well known (and easy to show) that $\M^t_1$ is a matroid.  
    We inductively show that 
    $Z^t$ minimizes $p^*(Z^t)$ subject to being a base of the matroid $\M^t_1$ that contains $Y$. 
    
    The claim trivially holds when $t=0$. 
    Suppose that the claim holds for $t$, and consider the iteration for $e^{t+1}$. 
    Clearly, $Z^{t+1}$ is a base of $\M^{t+1}_1$ containing $Y$ by construction. 
    We show that it minimizes $p^*(Z^t)$ as follows. 
    \begin{itemize}
        \item 
        Suppose that $Z^{t+1} = Z^t + e^{t+1} \in \I_1$. 
        In this case, every base of $\M^{t+1}_1$ contains $e^{t+1}$.
        Therefore, for any base $Z'$ of $\M^{t+1}_1$ containing $Y$, 
        $Z' - e^{t+1}$ is a base of $\M^{t}_1$, and hence we obtain 
        \[
        p^*(Z^{t+1}) = p^*(Z^{t}) + p^*(e^{t+1}) \le p^*(Z' - e^{t+1}) + p^*(e^{t+1}) = p^*(Z'). 
        \]
        This shows the claim for $t+1$. 
        \item 
        Suppose that $Z^t + e^{t+1} \not\in \I_1$ and $Z^{t+1} = Z^t +e^{t+1}-f^{t+1}$, 
        where $f^{t+1} \in (Z^t + e^{t+1}) \setminus Y$ is the element chosen in the algorithm. 
        Let $Z'$ be an arbitrary base of $\M^{t+1}_1$ containing $Y$. 
        Our aim is to prove that $p^*(Z^{t+1}) \le p^*(Z')$.  

        If $e^{t+1} \not\in Z'$, then we have that $p^*(Z^{t}) \le p^*(Z')$, because the claim holds for $t$.
        Furthermore, we obtain $p^*(f^{t+1}) \ge p^*(e^{t+1})$ by the maximality of $p^*(f^{t+1})$, and hence
        \[
        p^*(Z^{t+1}) = p^*(Z^t +e^{t+1}-f^{t+1}) \le p^*(Z^{t}) \le p^*(Z'). 
        \]
        
        Otherwise, $e^{t+1} \in Z'$. 
        In this case, by the simultaneous exchange property for $Z^t$ and $Z'$ with respect to $e^{t+1} \in Z' \setminus Z^t$,  
        there exists $f' \in Z^t \setminus Z'$ such that 
        $W := Z^t + e^{t+1} - f' \in \I_1$ and $W' := Z' - e^{t+1} + f' \in \I_1$. 
        Since $Y \subseteq W' \subseteq E^t$ and the claim holds for $t$, we obtain $p^*(Z^t) \le p^*(W')$. 
        By combining this with $p^*(W) + p^*(W') = p^*(Z^t) + p^*(Z')$, 
        we have that $p^*(W) \le p^*(Z')$. 
        Furthermore, we obtain $p^*(f^{t+1}) \ge p^*(f')$ by the maximality of $p^*(f^{t+1})$, and hence 
        \[
        p^*(Z^{t+1}) = p^*(Z^t +e^{t+1}-f^{t+1}) \le p^*(W) \le p^*(Z'). 
        \]
    \end{itemize}

Therefore, the claim holds for every $t$. In particular,  
the output $Z$ minimizes $p^*(Z)$ subject to being a base of $\M_1$ that contains $Y$. 
Since $Y$ is contained in a $p^*$-minimum base of $\M_1$ by Lemma~\ref{lem:minbase0}, 
$Z$ is a $p^*$-minimum base of $\M_1$. 
\end{proof}

\begin{lemma}
    In Line 7 of Algorithm~\ref{alg:matroidint2}, for $e \in E \setminus Y$, we can find $e' \in Z \setminus Y$ such that $Z + e - e' \in \I_1$ and $p^*(e')=p^*(e)$ using 
    $O(\log \opt)$ independence queries 
    if such $e'$ exists. 
\end{lemma}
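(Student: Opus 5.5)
The plan is to reduce the search to a single application of the binary search technique (Lemma~\ref{lem:binary}~(1)), using the fact that $Z$ is a $p^*$-minimum base. If $e \in Z$, we simply take $e' = e$. This needs no queries: $e \notin Y$, so $e \in Z \setminus Y$, $Z - e + e = Z \in \B_1$, and trivially $p^*(e') = p^*(e)$. Membership of $e$ in $Z$ is known because the algorithm stores $Z$ explicitly.

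Otherwise $e \notin Z$. Since $Z$ is a base, $Z + e \notin \I_1$, and by Observation~\ref{obs:fundamentalcircuit} the candidates $e' \in Z$ with $Z + e - e' \in \B_1$ are exactly the elements of $C_1(Z \mid e) - e$. The key observation is that $Z$ is $p^*$-minimum. Hence every such $e'$ satisfies $p^*(e') \le p^*(e)$, since otherwise $Z - e' + e$ would be a base of strictly smaller $p^*$-weight. Therefore an element $e'$ with the required properties exists if and only if
\[
\max\{p^*(e') \mid e' \in Z \setminus Y,\ Z + e - e' \in \I_1\} = p^*(e),
\]
and in that case any maximizer is a valid choice of $e'$.

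To compute this maximizer, I would apply Lemma~\ref{lem:binary}~(1) with $S = Z$, $T = Z \setminus Y$, $v = e$, and $w = -p^*$. This returns an element $u \in Z \setminus Y$ maximizing $p^*(u)$ subject to $Z + e - u \in \I_1$, or reports that none exists. Such a call uses $O(\log |Z|)$ independence queries to $\M_1$. Two facts keep this within the claimed bound:
\begin{itemize}
\item the weights $p^*(u)$ for $u \in Z$ are stored explicitly, as remarked before Lemma~\ref{lem:computeminbase}, so the binary search needs no further queries to evaluate $w$;
\item the value $p^*(e)$ is computed once for $e$ using $O(\log \opt)$ queries, by the remark following Lemma~\ref{lem:computeprice}.
\end{itemize}
We then output $u$ if $p^*(u) = p^*(e)$, and report that no suitable $e'$ exists otherwise, which is correct by the equivalence displayed above. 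Since $|Z| = r_1(E) \le 2\opt$, the total cost is $O(\log \opt)$ independence queries.

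The only step needing care is the justification that maximizing $p^*$ over the exchangeable elements suffices, that is, that no exchangeable element can have $p^*$-value exceeding $p^*(e)$. This is exactly where the $p^*$-minimality of $Z$ enters. That minimality is maintained throughout the round by Lemma~\ref{lem:computeminbase} and by the update $Z \gets Z - e' + e$ with $p^*(e') = p^*(e)$, which preserves $p^*(Z)$.
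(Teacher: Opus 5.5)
Your proof is correct and follows the paper's argument. Like the paper, you compute $p^*(e)$, apply Lemma~\ref{lem:binary}~(1) with $S = Z$, $T = Z \setminus Y$, $v = e$, $w = -p^*$, and compare the maximizer's value with $p^*(e)$, using $p^*$-minimality of $Z$ to rule out values above $p^*(e)$. Your separate treatment of the case $e \in Z$ is a detail the paper's proof leaves implicit, and it is needed because Lemma~\ref{lem:binary}~(1) requires $v \notin S$. You also correctly direct the binary-search queries to $\M_1$.
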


\begin{proof}
Recall that the algorithm explicitly maintains $p^*(e)$ for each $e \in Z$. 
We first compute $p^*(e)$ using $O( \log \opt )$ independence queries to $\M_2$ by Lemma~\ref{lem:computeprice}. 
We then find $e' \in Z \setminus Y$ that maximizes $p^*(e')$ subject to $Z + e - e' \in \I_1$.  
This can be done using $O( \log \opt )$ independence queries to $\M_2$ 
by applying Lemma~\ref{lem:binary} (1) with $S = Z$, $T = Z \setminus Y$, $v = e$, and $w = - p^*$. 
If $p^*(e') = p^*(e)$, then $e'$ is a desired element. 
If $p^*(e') < p^*(e)$, then we conclude that no desired element exists. 
Note that $p^*(e') \le p^*(e)$ holds, because $Z$ is a $p^*$-minimum base of $\M_1$. 
\end{proof}

By these lemmas, Algorithm~\ref{alg:matroidint2} can be implemented as a deterministic semi-streaming algorithm 
using $O({1}/{\varepsilon^2})$ passes, $O(\opt)$ space, and $O({n \log \opt}/{\varepsilon^2})$ independence queries. 
This together with Proposition~\ref{prop:streamingratio} proves Theorem~\ref{thm:mainstreaming}.

\section{Concluding Remarks}
\label{sec:conclusion}

In this paper, we proposed auction-based algorithms for the matroid intersection problem and obtained improved approximation algorithms in several computational settings. 
An important direction for future work is to further improve the number of passes, space complexity, and query complexity.

As shown in Corollary~\ref{cor:weighted}, Theorem~\ref{thm:mainstreaming} can be extended to the weighted setting via the reduction technique of Dudeja and Grilnberger~\cite{DG26}. 
A natural open question is whether Algorithm~\ref{alg:matroidint1} can be extended to the weighted setting more directly, leading to improved algorithms for weighted matroid intersection.

\bibliographystyle{abbrv}
\bibliography{ref}

@book{Schrijver2003,
  title={Combinatorial Optimization: Polyhedra and Efficiency},
  author={Schrijver, Alexander},
  year={2003},
  publisher={Springer}
}

@book{murota2003discrete,
  title={Discrete Convex Analysis},
  author={Murota, K.},
  isbn={9780898718508},
  lccn={2003042468},
  series={Discrete Mathematics and Applications},
  year={2003},
  publisher={Society for Industrial and Applied Mathematics}
}

@inproceedings{blikstad2021breaking,
  title={Breaking {$O(nr)$} for matroid intersection},
  author={Blikstad, Joakim},
  booktitle={Proceedings of the 48th International Colloquium on Automata, Languages, and Programming (ICALP 2021)},
  pages={31:1-31:17},
  year={2021},
  doi={10.4230/LIPIcs.ICALP.2021.31},
}

@inproceedings{AG11,
  author       = {Kook Jin Ahn and
                  Sudipto Guha},
  title        = {Linear Programming in the Semi-streaming Model with Application to
                  the Maximum Matching Problem},
  booktitle    = {Proceedings of the 38th International Colloquium on Automata, Languages, and Programming ({ICALP} 2011)},
  pages        = {526--538},
  year         = {2011},
}

@article{AG18, 
author = {Ahn, Kook Jin and Guha, Sudipto},
title = {Access to Data and Number of Iterations: Dual Primal Algorithms for Maximum Matching under Resource Constraints},
year = {2018},
volume = {4},
journal = {ACM Trans. Parallel Comput.},
pages = {Article 17},
}

@inproceedings{AJJST22,
  author       = {Sepehr Assadi and
                  Arun Jambulapati and
                  Yujia Jin and
                  Aaron Sidford and
                  Kevin Tian},
  title        = {Semi-Streaming Bipartite Matching in Fewer Passes and Optimal Space},
  booktitle    = {Proceedings of the 2022 {ACM-SIAM} Symposium on Discrete Algorithms (SODA 2022)},
  pages        = {627--669},
  publisher    = {{SIAM}},
  year         = {2022},
}

@article{EggertKMS12,
  author       = {Sebastian Eggert and
                  Lasse Kliemann and
                  Peter Munstermann and
                  Anand Srivastav},
  title        = {Bipartite Matching in the Semi-streaming Model},
  journal      = {Algorithmica},
  volume       = {63},
  number       = {1-2},
  pages        = {490--508},
  year         = {2012},
}

@article{SI95,
  author       = {Maiko Shigeno and
                  Satoru Iwata},
  title        = {A dual approximation approach to weighted matroid intersection},
  journal      = {Oper. Res. Lett.},
  volume       = {18},
  number       = {3},
  pages        = {153--156},
  year         = {1995},
}

@article{Garg23,
  author       = {Paritosh Garg and
                  Linus Jordan and
                  Ola Svensson},
  title        = {Semi-streaming algorithms for submodular matroid intersection},
  journal      = {Math. Program.},
  volume       = {197},
  number       = {2},
  pages        = {967--990},
  year         = {2023},
}

@article{FZ95,
  author       = {Satoru Fujishige and Xiaodong Zhang},
  title        = {An efficient cost scaling algorithm for the independent assignment problem},
  journal      = {Journal of the Operations Research Society of Japan},
  volume       = {38},
  number       = {1},
  pages        = {124--136},
  year         = {1995},
}

@inproceedings{BT25,
  author       = {Joakim Blikstad and
                  Ta{-}Wei Tu},
  title        = {Efficient Matroid Intersection via a Batch-Update Auction Algorithm},
  booktitle    = {Proceedings of the 8th Symposium on Simplicity in Algorithms (SOSA 2025)},
  pages        = {226--237},
  publisher    = {{SIAM}},
  year         = {2025},
}

@inproceedings{DG26,
  author       = {Aditi Dudeja and Mara Grilnberger},
  title        = {A Weighted-to-Unweighted Reduction for Matroid Intersection},
  booktitle    = {Proceedings of the 27th International Conference on Integer Programming and Combinatorial Optimization (IPCO 2026)},
  year         = {2026},
  pages        = {378--393}, 
  note         = {See also arXiv preprint arXiv:2602.15702}
}

@article{CalinescuCPV11,
  author       = {Gruia Calinescu and
                  Chandra Chekuri and
                  Martin P{\'{a}}l and
                  Jan Vondr{\'{a}}k},
  title        = {Maximizing a Monotone Submodular Function Subject to a Matroid Constraint},
  journal      = {{SIAM} J. Comput.},
  volume       = {40},
  number       = {6},
  pages        = {1740--1766},
  year         = {2011},
}

@article{FilmusW14,
  author       = {Yuval Filmus and
                  Justin Ward},
  title        = {Monotone Submodular Maximization over a Matroid via Non-Oblivious
                  Local Search},
  journal      = {{SIAM} J. Comput.},
  volume       = {43},
  number       = {2},
  pages        = {514--542},
  year         = {2014},
}

@article{ChakrabartiK15,
  author       = {Amit Chakrabarti and
                  Sagar Kale},
  title        = {Submodular maximization meets streaming: matchings, matroids, and
                  more},
  journal      = {Math. Program.},
  volume       = {154},
  number       = {1-2},
  pages        = {225--247},
  year         = {2015},
}

@inproceedings{Quanrud24,
  author       = {Kent Quanrud},
  title        = {Adaptive Sparsification for Matroid Intersection},
  booktitle    = {Proceedings of the 51st International Colloquium on Automata, Languages, and Programming
                  ({ICALP} 2024)},
  pages        = {118:1--118:20},
  year         = {2024},
}

@inproceedings{ChekuriGQ15,
  author       = {Chandra Chekuri and
                  Shalmoli Gupta and
                  Kent Quanrud},
  title        = {Streaming Algorithms for Submodular Function Maximization},
  booktitle    = {Proceedings of the 42nd International Colloquium on Automata, Languages, and Programming (ICALP 2015)},
  pages        = {318--330}, 
  year         = {2015},
}

@inproceedings{HuangTW20,
  author =	{Huang, Chien-Chung and Thiery, Theophile and Ward, Justin},
  title =	{Improved Multi-Pass Streaming Algorithms for Submodular Maximization with Matroid Constraints},
  booktitle =	{Approximation, Randomization, and Combinatorial Optimization. Algorithms and Techniques (APPROX/RANDOM 2020)},
  pages =	{62:1--62:19},
  year =	{2020},
  doi =		{10.4230/LIPIcs.APPROX/RANDOM.2020.62},
}

@article{FeldmanLNSZ26,
  author       = {Moran Feldman and
                  Paul Liu and
                  Ashkan Norouzi{-}Fard and
                  Ola Svensson and
                  Rico Zenklusen},
  title        = {Streaming Submodular Maximization Under Matroid Constraints},
  journal      = {Math. Oper. Res.},
  volume       = {51},
  number       = {1},
  pages        = {299--332},
  year         = {2026},
}

@inproceedings{blikstad2021breaking_STOC,
  title={Breaking the quadratic barrier for matroid intersection},
  author={Blikstad, Joakim and van den Brand, Jan and Mukhopadhyay, Sagnik and Nanongkai, Danupon},
  booktitle={Proceedings of the 53rd Annual ACM SIGACT Symposium on Theory of Computing (STOC 2021)},
  pages={421--432},
  year={2021},
  doi={10.1145/3406325.3451092},
}

@inproceedings{chakrabarty2019faster,
  title={Faster matroid intersection},
  author={Chakrabarty, Deeparnab and Lee, Yin Tat and Sidford, Aaron and Singla, Sahil and Wong, Sam Chiu-wai},
  booktitle={Proceedings of the 60th Annual Symposium on Foundations of Computer Science (FOCS 2019)},
  pages={1146--1168},
  year={2019},
  organization={IEEE},
  doi={10.1109/FOCS.2019.00072},
}

@inproceedings{chekuri2016fast,
  title={A fast approximation for maximum weight matroid intersection},
  author={Chekuri, Chandra and Quanrud, Kent},
  booktitle={Proceedings of the 27th Annual ACM-SIAM Symposium on Discrete Algorithms (SODA 2016)},
  pages={445--457},
  year={2016},
  organization={SIAM},
  doi={10.1137/1.9781611974331.ch33},
}

@article{cunningham1986improved,
  title={Improved bounds for matroid partition and intersection algorithms},
  author={Cunningham, William H},
  journal={SIAM Journal on Computing},
  volume={15},
  number={4},
  pages={948--957},
  year={1986},
  publisher={SIAM},
  doi={10.1137/0215066},
}

@article{edmonds1979matroid,
  title={Matroid intersection},
  author={Edmonds, Jack},
  journal={Annals of Discrete Mathematics},
  volume={4},
  pages={39--49},
  year={1979},
  publisher={Elsevier},
  doi={10.1016/S0167-5060(08)70817-3},
}

@incollection{edmonds1970matroid,
  title={Submodular functions, matroids, and certain polyhedra},
  author={Edmonds, Jack},
  booktitle={Combinatorial structures and their applications},
  pages={69--87},
  year={1970},
}

@article{huang2016exact,
  title={Exact and approximation algorithms for weighted matroid intersection},
  author={Huang, Chien-Chung and Kakimura, Naonori and Kamiyama, Naoyuki},
  journal={Mathematical Programming},
  volume={177},
  number={1-2},
  pages={85--112},
  year={2019},
  publisher={Springer},
  doi={10.1007/s10107-018-1260-x},
}

@article{lawler1975matroid,
  title={Matroid intersection algorithms},
  author={Lawler, Eugene L},
  journal={Mathematical Programming},
  volume={9},
  number={1},
  pages={31--56},
  year={1975},
  publisher={Springer},
  doi={10.1007/BF01681329},
}

@misc{nguyen2019note,
  title={A note on {C}unningham's algorithm for matroid intersection},
  author={Nguy$\tilde{{\hat{\text{e}}}}$n, Huy L},
  howpublished={arXiv preprint arXiv:1904.04129},
  year={2019}
}

@article{assadi2024simple,
  author       = {Sepehr Assadi},
  title        = {A Simple ($1-\varepsilon$)-Approximation Semi-Streaming Algorithm
                  for Maximum (Weighted) Matching},
  journal      = {TheoretiCS},
  volume       = {4},
  pages          = {Article 16},
  year         = {2025},
  doi          = {10.46298/THEORETICS.25.16},
}

@inproceedings{assadi2021auction,
  title={An auction algorithm for bipartite matching in streaming and massively parallel computation models},
  author={Assadi, Sepehr and Liu, S Cliff and Tarjan, Robert E},
  booktitle={Proceedings of the 4th Symposium on Simplicity in Algorithms (SOSA 2021)},
  pages={165--171},
  year={2021},
  organization={SIAM}
}

@InProceedings{terao:LIPIcs.WADS.2025.50,
  author =	{Terao, Tatsuya},
  title =	{Deterministic $(2/3 - \varepsilon)$-Approximation of Matroid Intersection Using Nearly-Linear Independence-Oracle Queries},
  booktitle =	{Proceedings of the 19th International Symposium on Algorithms and Data Structures (WADS 2025)},
  pages =	{50:1--50:18},
  year =	{2025},
  doi =		{10.4230/LIPIcs.WADS.2025.50},
}

@article{Fleiner03,
  author       = {Tam{\'{a}}s Fleiner},
  title        = {A Fixed-Point Approach to Stable Matchings and Some Applications},
  journal      = {Math. Oper. Res.},
  volume       = {28},
  number       = {1},
  pages        = {103--126},
  year         = {2003},
  doi          = {10.1287/MOOR.28.1.103.14256},
}

@article{GaleShapley1962,
author = {D. Gale and L. S. Shapley},
title = {College Admissions and the Stability of Marriage},
journal = {The American Mathematical Monthly},
volume = {69},
number = {1},
pages = {9--15},
year = {1962},
publisher = {Taylor \& Francis},
doi = {10.1080/00029890.1962.11989827},
}

@book{oxley2011matroid,
  title={Matroid Theory},
  author={James Oxley},
  series={Oxford Graduate Texts in Mathematics},
  year={2011},
  publisher={Oxford University Press}
}

@book{cook2011combinatorial,
  title={Combinatorial Optimization},
  author={Cook, W.J. and Cunningham, W.H. and Pulleyblank, W.R. and Schrijver, A.},
  series={Wiley Series in Discrete Mathematics and Optimization},
  year={1997},
  publisher={Wiley}
}

@article{FEIGENBAUM2005207,
title = {On graph problems in a semi-streaming model},
journal = {Theoretical Computer Science},
volume = {348},
number = {2},
pages = {207-216},
year = {2005},
doi = {https://doi.org/10.1016/j.tcs.2005.09.013},
author = {Joan Feigenbaum and Sampath Kannan and Andrew McGregor and Siddharth Suri and Jian Zhang},
}

@article{Muthukrishnan2005,
    author = {Muthukrishnan, S.},
    title = {Data Streams: Algorithms and Applications},
    journal = {Foundations and Trends in Theoretical Computer Science},
    volume = {1},
    number = {2},
    pages = {117--236},
    year = {2005},
    doi = {10.1561/0400000002},
}

@article{CKTY25,
author  = {Gergely Cs\'{a}ji and Tam\'{a}s Kir\'{a}ly and Kenjiro Takazawa and Yu Yokoi},
title   = {Popular maximum-utility matchings with matroid constraints},
journal = {Mathematics of Operations Research},
year    = {2025},
doi     = {10.1287/moor.2024.0633},
note    = {Published Online},
}

@misc{CKY24,
      title={Maximum Stable Matching with Matroids and Partial Orders}, 
      author={Gergely Cs\'{a}ji and Tam\'{a}s Kir\'{a}ly and Yu Yokoi},
      year={2024},
      note={arXiv preprint arXiv:2208.09583v3},
      url={https://arxiv.org/abs/2208.09583}, 
}

\end{document}